\documentclass[a4paper]{article}
\bibliographystyle{plain}
\usepackage{amsmath}
\usepackage{amsthm}
\usepackage{amssymb}
\usepackage{latexsym}
\usepackage{proof}
\usepackage{enumerate}
\usepackage[all]{xy}
\usepackage[dvipdfmx]{graphicx}
\usepackage{ascmac}
\usepackage{ulem}
\usepackage{enumitem}

\setlength{\topmargin}{-11mm}
\setlength{\headsep}{20pt}
\setlength{\textheight}{24cm} 
\setlength{\textwidth}{16cm}
\setlength{\oddsidemargin}{1.5mm}
\setlength{\evensidemargin}{1.5mm}

\newtheorem{Theorem}{Theorem}[section]
\newtheorem{Lemma}[Theorem]{Lemma}
\newtheorem{Corollary}[Theorem]{Corollary}
\newtheorem{Definition}[Theorem]{Definition}
\newtheorem{Remark}[Theorem]{Remark}
\newtheorem{Example}[Theorem]{Example}
\newcommand{\jump}[1]{\ensuremath{[\![#1]\!]}}
\newcommand{\expa}[1]{\ensuremath{\langle\![#1]\!\rangle}}

\newlist{steps}{enumerate}{1}
\setlist[steps, 1]{label = Step \arabic*:}

\begin{document}
\title{\bf
  Completeness of Kozen's Axiomatization for the Modal $\mu$-Calculus:
  A Simple Proof
}
\author{
  Kuniaki Tamura\\
  15-9-103, Takasago 3-chome, Katsushika, Tokyo 125-0054, Japan\\
  E-mail: \texttt{kuniaki.tamura@gmail.com}
}
\maketitle

\begin{abstract}
The modal $\mu$-calculus, introduced by Dexter Kozen, is an extension of modal logic with fixpoint
operators. Its axiomatization, $\mathsf{Koz}$, was introduced at the same time and is an extension of
the minimal modal logic $\mathsf{K}$ with the so-called Park fixpoint induction principle. It took
more than a decade for the completeness of $\mathsf{Koz}$ to be proven, finally achieved by Igor
Walukiewicz. However, his proof is fairly involved.

In this article, we present an improved proof for the completeness of $\mathsf{Koz}$ which, although
similar to the original, is simpler and easier to understand.

\begin{flushleft}{\bf Keywords:}
  The modal $\mu$-calculus, completeness, $\omega$-automata. 
\end{flushleft}
\end{abstract}

\section{Introduction}\label{sec: introduction}

The \textit{modal $\mu$-calculus} originated with Scott and De Bakker \cite{bak-sco69} and was further
developed by Dexter Kozen \cite{DBLP:journals/tcs/Kozen83} into the main version currently used. It is used to describe and verify properties of labeled transition systems (Kripke models).
Many modal and temporal logics can be encoded into the modal $\mu$-calculus, including
$\mathsf{CTL}^{\ast}$ and its widely used fragments -- the linear temporal logic $\mathsf{LTL}$ and the
computational tree logic $\mathsf{CTL}$. The modal $\mu$-calculus also provides one of the strongest examples of the
connections between modal and temporal logics, automata theory and game theory (for example, see
\cite{DBLP:conf/dagstuhl/2001automata}). As such, the modal $\mu$-calculus is a very
active research area in both theoretical and practical computer science. We refer the reader to
Bradfield and Stirling's tutorial article \cite{Bradfield07modalmu-calculi} for a thorough introduction
to this formal system.

The difference between the modal $\mu$-calculus and modal logic is that the former has the
\textit{least fixpoint operator} $\mu$ and the \textit{greatest fixpoint operator} $\nu$ which
represent the least and greatest fixpoint solution to the equation $\alpha(x) = x$, where $\alpha(x)$ is a
monotonic function mapping some power set of possible worlds into itself.\footnote{
  In the modal $\mu$-calculus, the term \textit{state} is preferred to \textit{possible world} since it originated in
  the area of verification of computer systems.
  However, we do not use this terminology since it is reserved for \textit{state of automata} in this article. 
}
In Kozen's initial work \cite{DBLP:journals/tcs/Kozen83}, he proposed an axiomatization $\mathsf{Koz}$,
which was an extension of the minimal modal logic $\mathsf{K}$ with a further axiom and inference rule
-- the so-called Park fixpoint induction principle: 
$$
\infer[(\textsf{Prefix})]
    {\alpha(\mu x.\alpha(x)) \vdash \mu x.\alpha(x)}
    {}
    \qquad
\infer[(\textsf{Ind})]
    {\mu x.\alpha(x) \vdash \beta}
    {\alpha(\beta) \vdash \beta}
$$
The system $\mathsf{Koz}$ is very simple and natural; nevertheless, Kozen himself could not prove
completeness for the full language, but only for the negations of formulas of a special kind called the \textit{aconjunctive formula}. Completeness for the full language turned out to be a knotty problem and
remained open for more than a decade. Finally, Walukiewicz \cite{Walukiewicz2000142} solved this
problem positively, but his proof is quite involved.\footnote{
The difficulties of the proof have been pointed out, e.g., see
\cite{DBLP:conf/fossacs/CateF10,Lenzi05,Bradfield07modalmu-calculi,DBLP:journals/logcom/Alberucci09,Bezhanishvili20121}
}
The aim of this article is to provide an improved proof that is easier to understand. First, we outline Walukiewicz's proof and explain its difficulties, and then present our improvement.

The completeness theorem considered here is sometimes called weak completeness and requires that the
validity follows the provability; that is:
\begin{enumerate}
\item[\bf (a)]
  For any formula $\varphi$, if $\varphi$ is not satisfiable, then $\sim\!\varphi$ is provable in
  $\mathsf{Koz}$.            
\end{enumerate}
Here, $\sim\!\varphi$ denotes the negation of $\varphi$. Note that strong completeness cannot
be applied to the modal $\mu$-calculus since it lacks compactness. The first step of the proof is based
on the results of Janin and Walukiewicz \cite{conf/mfcs/JaninW95}, in which they introduced the class of
formulas called \textit{automaton normal form},\footnote{
  In the original article \cite{conf/mfcs/JaninW95}, this class of formulas was called
  the \textit{disjunctive formula}; however, the term \textit{automaton normal form}
  is the currently used terminology, to the author's knowledge.
}
and showed the following two theorems:
\begin{enumerate}
\item[\bf (b)]
  For any formula $\varphi$, we can construct an automaton normal form $\mathsf{anf}(\varphi)$ which is
  semantically equivalent to $\varphi$.
\item[\bf (c)]
  For any automaton normal form $\widehat{\varphi}$, if $\widehat{\varphi}$ is not satisfiable, then
  $\sim\!\widehat{\varphi}$ is provable in $\mathsf{Koz}$; that is, $\mathsf{Koz}$ is complete for
  the negations of the automaton normal form.
\end{enumerate}
The above theorems lead to the following Claim (d) for proving:
\begin{enumerate}
  \item[\bf (d)]
  For any formula $\varphi$, there exists a semantically equivalent automaton normal form
  $\widehat{\varphi}$ such that $\varphi\rightarrow\widehat{\varphi}$ is provable in $\mathsf{Koz}$.            
\end{enumerate}
Indeed, for any unsatisfiable formula $\varphi$, Claim (d) tells us that
$\sim\!\widehat{\varphi}\rightarrow\sim\!\varphi$ is provable; on the other hand, from Theorem (c) we
obtain that $\sim\!\widehat{\varphi}$ is provable; therefore $\sim\!\varphi$ is provable in
$\mathsf{Koz}$ as required. Hence, our target (a) is reduced to Claim (d). 

Another important tool is the concept of a \textit{tableau},
which is a tree structure that is labeled by some
subformulas of the primary formula $\varphi$ and is related to the satisfiability problem for $\varphi$. 
Niwinski and Walukiewicz
\cite{Niwinski199699} introduced a game played by two adversaries on a tableau and,
by analyzing these games, showed that:
\begin{enumerate}
\item[\bf (e)]
  For any unsatisfiable formula $\varphi$, there exists a structure called the \textit{refutation} for
  $\varphi$ which is a substructure of tableau.           
\end{enumerate}
Importantly, a refutation for $\varphi$ is very similar to a proof diagram for $\varphi$;
roughly speaking, the difference between them is that the former can have infinite
branches while the latter can not.
Walukiewicz shows that if the refutation for $\varphi$ satisfies a special \textit{thin} condition,
it can be transformed into a proof diagram for $\varphi$. In other words,
\begin{enumerate}
\item[\bf (f)]
  For any unsatisfiable formula $\varphi$ such that there exists a thin refutation for
  $\varphi$, $\sim\!\varphi$ is provable in $\mathsf{Koz}$.           
\end{enumerate}
Note that Claim (f) is a slight generalization of the completeness for the negations of the aconjunctive
formula in the sense that the refutation for an unsatisfiable aconjunctive formula is always thin, and
Claim (f) can be shown by the same method as Kozen's original argument.

The proof is based on confirming Claim (d) by induction on the length of $\varphi$, using (b) and (f).
The hardest step of induction is the case $\varphi = \mu x.\alpha(x)$.
Suppose $\varphi = \mu x.\alpha(x)$ and
that we could assume, by inductive hypothesis,
$\alpha(x)\rightarrow\widehat{\alpha}(x)$ is provable in $\mathsf{Koz}$ where $\widehat{\alpha}(x)$ is
an automaton normal form equivalent to $\alpha(x)$. For the inductive step, we want to discover
an automaton normal form $\widehat{\varphi}$ equivalent to $\mu x.\alpha(x)$ such that
$\mu x.\alpha(x)\rightarrow\widehat{\varphi}$ is provable. Note that since
$\alpha(x)\rightarrow\widehat{\alpha}(x)$ is provable,
$\mu x.\alpha(x)\rightarrow\mu x.\widehat{\alpha}(x)$ is also provable. Furthermore, $\mu x.\alpha(x)$
and $\mu x.\widehat{\alpha}(x)$ are equivalent to each other. Set
$\widehat{\varphi} := \mathsf{anf}(\mu x.\widehat{\alpha}(x))$. Then, it is sufficient to show that
$\mu x.\widehat{\alpha}(x)\rightarrow\widehat{\varphi}$ is provable, and thus, from the induction rule
$(\mathsf{Ind})$,
$\widehat{\alpha}(\widehat{\varphi})\rightarrow\widehat{\varphi}$ is provable. To show this, 
Walukiewicz developed
a new utility called \textit{tableau consequence}, which is a binary relation on the tableau and is
characterized using game theoretical notations.
The following two facts were then shown:
\begin{enumerate}
\item[\bf (g)]
  Let $\widehat{\alpha}(x)$ and $\widehat{\varphi}$ be formulas denoted above. Then the tableau for
  $\widehat{\varphi}$ is a
  consequence of the tableau for $\widehat{\alpha}(\widehat{\varphi})$.
\item[\bf (h)]
  For any automaton normal forms $\widehat{\beta}(y)$ and $\widehat{\psi}$, if the
  tableau for $\widehat{\psi}$ is a consequence of the tableau for
  $\widehat{\beta}(\widehat{\psi})$,
  then we can construct a thin refutation for
  $\sim\!(\widehat{\beta}(\widehat{\psi})\rightarrow\widehat{\psi})$.\footnote{
    More precisely, this assertion must be stated more generally to be applicable
    in other cases of an inductive step, see Lemma \ref{lem: completeness for tableau consequence}.
  }
\end{enumerate}
The real difficulty appeared when proving Claim (g). To establish this claim, Walukiewicz introduced
complicated functions across some tableaux and analyzed the properties of these functions very
carefully. Finally, Claims (f), (g) and (h) together immediately establish that
$\widehat{\alpha}(\widehat{\varphi})\rightarrow\widehat{\varphi}$ is provable in $\mathsf{Koz}$.
Thus, he obtained a proof for Claim (d), confirming completeness.
The following figure summarizes the Walukiewicz's proof strategy described above.
\begin{figure}[htbp]
  \centering
  \[
  \xymatrix@R=7pt{
    (b) \ar@{=>}[r] & (g) \ar@{=>}[dr] & (c) \ar@{=>}[dr]   &    \\
                         & (h) \ar@{=>}[r]   & (d) \ar@{=>}[r]    & (a) \\
    (e) \ar@{=>}[r] & (f) \ar@{=>}[ur] &                          &     \\
    }
  \]
  \caption{The outline of the Walukiewicz's proof.}
  \label{fig: conversion to proof diagram}
\end{figure}

This article's main contribution is the simplification of the proof of Claim (g) and (f).
For this purpose, we will apply the $\omega$-automaton conversion method introduced
by Safra \cite{safra1988} and K{\v r}et\`ins\`y et al. \cite{Kret2017}.
It is shown that the proof of claim (g) and (f) are much more visible by using the mechanism called
\textit{index appearance record} provided by those automata.
In addition, we make improvements to some terms and concepts.
For example, the concept of the tableau consequence
will be redefined as a concept similar to the concept of \textit{bisimulation}
(instead of the game theoretical notations), which is one of the most fundamental and standard notions
in the model theory of modal and its extensional logics.
As a result, the proof of (h) is a little easier to understand.
Consequently, although our proof of
completeness does not include any innovative concepts, it is far more concise than the original proof.

The author hopes that the method given in this article may assist investigation of the modal
$\mu$-calculus and related topics.

\subsection{Outline of the article}\label{subsec: outline of the article}
The remainder of this article is organized as follows: in the following subsection
\ref{subsec: notation},
we will define some terminologies used within the article. Section \ref{sec: the modal mu-calculus}
gives basic definitions of the syntax and semantics of the modal $\mu$-calculus.
Section \ref{sec: automata} 
introduce well known results concerning $\omega$-automata.
The automaton mechanism used in the main proof will be
introduced in this section.
Section \ref{sec: application} is an application of Section \ref{sec: automata}.
We will prove claims (b) and (f) using the theory of $\omega$-automata.
Section \ref{sec: completeness} is the final section and 
contains the principle part of this article -- the proof of Claim (g).
Finaly, we prove the completeness of $\mathsf{Koz}$ by showing Claim (d).

\subsection{Notation}\label{subsec: notation}

\noindent
\begin{description}
\item[Sets:]
Let $X$ be an arbitrary set. The \textit{cardinality of $X$} is denoted $|X|$. The
\textit{power set of $X$} is denoted $\mathcal{P}(X)$. $\omega$ denotes the set of natural numbers.

\item[Sequences:]
A finite sequence over some set $X$ is a function $\pi: \{1, \dots, n\} \rightarrow X$ where
$1 \leq n$. An infinite sequence over $X$ is a function $\pi: \omega\setminus \{0\} \rightarrow X$.
Here, a sequence can refer to either a finite or infinite sequence. The length of a sequence $\pi$ is
denoted $|\pi|$. Let $\pi$ be a sequence over $X$. The set of $x \in X$ which appears infinitely often
in $\pi$ is denoted $\mathsf{Infinite}(\pi)$. We
denote the $n$-th element in $\pi$ by $\pi[n]$ and the fragment of $\pi$ from the $n$-th element to
the $m$-th element by $\pi[n, m]$. For example, if $\pi = \mathsf{aabbcddd}$, then
$\pi[5] = \mathsf{c}$ and $\pi[2, 6] = \mathsf{abbcd}$. Note that when $\pi$ is a finite non-empty
sequence, $\pi[|\pi|]$ denotes the tail of $\pi$.

\item[Alphabets:]
Suppose that $\Sigma$ is a non-empty finite set. Then we may call $\Sigma$ an \textit{alphabet}
  and its element $a \in \Sigma$ a \textit{letter}. We denote the set of finite sequences over $\Sigma$
  by $\Sigma^{\ast}$,
  the set of non-empty finite sequences over $\Sigma$ by $\Sigma^{+}$,
  and the set of infinite sequences over $\Sigma$ by $\Sigma^{\omega}$. As usual, we
  call an element of $\Sigma^{\ast}$ a \textit{word}, an element of $\Sigma^{\omega}$
  an \textit{$\omega$-word}, a set of finite words $\mathcal{L} \subseteq \Sigma^{\ast}$ a
  \textit{language} and, a set of $\omega$-words $\mathcal{L'} \subseteq \Sigma^{\omega}$ an
  \textit{$\omega$-language}. The notion of the \textit{factor} on words is defined as usual:
  for two words $u, v \in \Sigma^{\ast} \cup \Sigma^{\omega}$, $u$ is a
  factor of $v$ if $v = xuy$ for some $x, y \in \Sigma^{\ast} \cup \Sigma^{\omega}$.

  \item[Graphs:]
  In this article, the term \textit{graph} refers to a directed graph. That is, a graph is a pair
  $\mathcal{G} = (V, E)$ where $V$ is an arbitrary set of \textit{vertices} and $E$ is an arbitrary
  binary relation over $V$, i.e., $E \subseteq V \times V$. A vertex $u$ is said to be an $E$-successor
  (or simply a successor) of a vertex $v$ in $\mathcal{G}$ if $(v, u) \in E$.  For any vertex
  $v$, we denote the set of all $E$-successors of $v$ by $E(v)$. The sequence
  $\pi \in V^{\ast} \cup V^{\omega}$ is called an $E$-\textit{sequence} if $\pi[n+1] \in E(\pi[n])$ for
  any $n < |\pi|$. $E^{\ast}$ denotes the reflexive transitive closure of $E$ and $E^{+}$ denotes the
  transitive closure of $E$.

  \item[Trees:]
  The term \textit{tree} is used to mean a \textit{rooted direct tree}. More precisely, a tree is a
  triple $\mathcal{T} = (T, C, r)$ where $T$ is a set of \textit{nodes}, $r \in T$ is a \textit{root} of
  the tree and, $C$ is a \textit{child relation}, i.e., $C \subseteq T \times T$ such that for any
  $t \in T \setminus \{r\}$, there is exactly one $C$-sequence starting at $r$ and ending at $t$.
  An unique $C$-sequence that starts at $r$ and ends at $t$ is denoted by $\vec{rt}$.
  As
  usual, we say that $u$ is a child of $t$ (or $t$ is a parent of $u$) if $(t, u) \in C$. A node
  $t \in T$ is
  a \textit{leaf} if $C(t) = \emptyset$. A \textit{branch} of
  $\mathcal{T}$ is either a finite $C$-sequence starting at $r$ and ending at a leaf or an
  infinite $C$-sequence starting at $r$.

  \item[Unwinding:]
  Let $\mathcal{G} = (V, E)$ be a graph. An \textit{unwinding} of $\mathcal{G}$ on $v \in V$ is the
  tree structure $\mathsf{UNW}_{v}(\mathcal{G}) = (T, C, r)$ where:
  \begin{itemize}
    \item $T$ consists of all finite non-empty $E$-sequences that start at $v$,
    \item $(\pi, \pi') \in C$ if and only if; $|\pi|+1 = |\pi'|$,
    $\pi = \pi'[1, |\pi|]$ and $(\pi[|\pi|], \pi'[|\pi'|]) \in E$, and
    \item $r := v$.
  \end{itemize}
  This concept can be extended naturally into a graph with some additional relations
  or functions. 
  For example, let $\mathcal{S} = (V, E, f)$ be a structure where $\mathcal{G} = (V, E)$ is a graph
  and $f$ is a function with domain $V$. Then we define
  $\mathsf{UNW}_{v}(\mathcal{S}) := (\mathsf{UNW}_{v}(\mathcal{G}), f')$ as $f'(\pi) := f(\pi[|\pi|])$
  for any $\pi \in V^{+}$. Note that we use the same symbol $f$ instead of $f'$ in $\mathsf{UNW}_{v}(\mathcal{S})$
  if there is no danger of confusion.

  \item[Functions:]
  Let $f$ be a function from some set $X$ to some set $Y$. We define the new function $\vec{f}$ from
  $X^{+} \cup X^{\omega}$ to $Y^{+} \cup Y^{\omega}$ as:
  \[ \vec{f}(\pi) := f(\pi[1])f(\pi[2])\cdots \]
  where $\pi \in X^{+} \cup X^{\omega}$. It is obvious that for any $\pi \in X^{+} \cup X^{\omega}$,
  we have $|\pi| = |\vec{f}(\pi)|$.
\end{description}

\section{The modal $\mu$-calculus}\label{sec: the modal mu-calculus}
We will now introduce the syntax, semantics and axiomatization $\mathsf{Koz}$ of the modal
$\mu$-calculus, and then present some additional concepts and results for use in the following sections.

\subsection{Syntax}\label{subsec: syntax}

\begin{Definition}[\bf Formula]\label{def: formula}\normalfont
Let $\mathsf{Prop} = \{ p, q, r, x, y, z, \dots \}$ be an infinite countable set of
\textit{propositional variables}. Then the collection of the
\textit{modal $\mu$-formulas} is defined as follows:
\[ 
    \varphi ::= (\top), (\bot), (p) \mid (\neg p) \mid (\varphi \vee \varphi) \mid
                (\varphi \wedge \varphi) \mid (\Diamond \varphi) \mid (\square \varphi) \mid
                (\mu x.\varphi) \mid (\nu x.\varphi)
\]
where $p, x \in \textsf{Prop}$.
Moreover, for formulas of the form $(\eta x.\varphi)$ with $\eta \in \{ \mu, \nu \}$,
we require that each occurrence of $x$ in $\varphi$ is positive; that is, $\neg x$ is not a
subformula of $\varphi$. Henceforth in this article, we will use $\eta$ to denote $\mu$ or $\nu$.
A formula of the form $p$ or $\neg p$ for $p \in \mathsf{Prop}$,
$\top$ and $\bot$ is called \textit{literal}. We
use the term $\mathsf{Lit}$ to refer to the set of all literals, i.e.,
$\mathsf{Lit} := \{ p, \neg p, \bot, \top \mid p \in \mathsf{Prop} \}$.
We call $\mu$ and $\nu$
\textit{the least fixpoint operator} and \textit{the greatest fixpoint operator}, respectively.
\end{Definition}

\begin{Remark}\normalfont
In Definition \ref{def: formula}, we confined the formula to a \textit{negation normal form}; that is, the
negation symbol may only be applied to propositional variables. However, this restriction can be
inconvenient, and so we extend the concept of the negation to an arbitrary formula $\varphi$
(denoted by $\sim\! \varphi$) inductively as follows: 
\begin{itemize}
    \item $\sim\! \top := \bot$,\; $\sim\! \bot := \top$. 
    \item $\sim\! p := \neg p$,\; $\sim\! \neg p := p$ for $p \in \mathsf{Prop}$. 
    \item $\sim\! (\varphi \vee \psi) := ((\sim\! \varphi)\wedge(\sim\! \psi))$,\;
          $\sim\! (\varphi \wedge \psi) := ((\sim\! \varphi) \vee (\sim\! \psi))$.
    \item $\sim\! (\Diamond \varphi) := (\square (\sim\! \varphi))$,\;
          $\sim\! (\square \varphi) := (\Diamond (\sim\! \varphi))$.
    \item $\sim\! (\mu x.\varphi(x)) := (\nu x.(\sim\! \varphi(\neg x)))$,\;
          $\sim\! (\nu x.\varphi(x)) := (\mu x.(\sim\! \varphi(\neg x)))$.
\end{itemize}
We introduce \textit{implication} $(\varphi \rightarrow \psi)$ as $((\sim\! \varphi) \vee \psi)$ and
\textit{equivalence} $(\varphi \leftrightarrow \psi)$ as
$((\varphi \rightarrow \psi) \wedge (\psi \rightarrow \varphi))$ as per the usual notation.
To minimize
the use of parentheses, we assume the following precedence of operators from highest to lowest:
$\neg$, $\sim$, $\Diamond$, $\square$, $\eta x$, $\vee$, $\wedge$, $\rightarrow$ and
$\leftrightarrow$. Moreover, we often abbreviate the outermost parentheses. 
For example, we write $\Diamond p \rightarrow q$ for $((\Diamond p) \rightarrow q)$
but not for $(\Diamond (p \rightarrow q))$.
\end{Remark}

As fixpoint operators $\mu$ and $\nu$ can be viewed as quantifiers, we use the standard
terminology and notations for quantifiers. We denote the set of all propositional variables
appearing free in $\varphi$ by $\mathsf{Free}(\varphi)$, and those appearing bound by
$\mathsf{Bound}(\varphi)$.
If $\psi$ is a subformula of $\varphi$, we write $\psi \leq \varphi$. We write $\psi < \varphi$ when
$\psi$ is a proper subformula.
$\mathsf{Sub}(\varphi)$ is the set of all subformulas of $\varphi$ and
$\mathsf{Lit}(\varphi)$ denotes the set of all literals which are subformulas of $\varphi$.
Let $\varphi(x)$ and $\psi$ be two formulas. The \textit{substitution} of all free appearances of $x$
with $\psi$ into $\varphi$ is denoted $\varphi(x)[x/\psi]$ or sometimes simply $\varphi(\psi)$.
As with predicate logic, we prohibit substitution when a new binding
relation will occur by that substitution.

The following two definitions regarding formulas will be used frequently
in the remainder of the article.
\begin{Definition}[\bf Well-named formula]\label{def: well-named formula}\normalfont
The set of \textit{well-named formulas} $\mathsf{WNF}$ is defined inductively as follows:
\begin{enumerate}
  \item $\mathsf{Lit} \subseteq \mathsf{WNF}$.
  \item Let $\alpha, \beta \in \mathsf{WNF}$ where
    $\mathsf{Bound}(\alpha) \cap \mathsf{Free}(\beta) = \emptyset$ and
    $\mathsf{Free}(\alpha) \cap \mathsf{Bound}(\beta) = \emptyset$. Then
    $\alpha \vee \beta, \alpha \wedge \beta \in \mathsf{WNF}$.
  \item Let $\alpha \in \mathsf{WNF}$. Then $\Diamond \alpha, \square \alpha \in \mathsf{WNF}$.
  \item Let $\alpha(x) \in \mathsf{WNF}$ where $x \in \mathsf{Free}(\alpha(x))$ occurs at once,
    positively, moreover, $x$ is in the scope of some modal operators.
    Then $\eta x.\alpha(x) \in \mathsf{WNF}$.
\end{enumerate}
If $\varphi$ is well-named and $x$ is bounded in $\varphi$, then there is exactly one subformula which
binds $x$; this formula is denoted $\eta_{x}x.\varphi_{x}(x)$.
\end{Definition}

\begin{Definition}[\bf Alternation depth]\label{def: alternation depth}\normalfont
Given a formula $\varphi$,
\begin{enumerate}
  \item Let $\preceq^{-}_{\varphi}$ be a binary relation on $\mathsf{Bound}(\varphi)$ such that
        $x \preceq^{-}_{\varphi} y$ if and only if $x \in \mathsf{Free}(\varphi_{y}(y))$. The
        \textit{dependency order} $\preceq_{\varphi}$ is defined as the transitive closure of
        $\preceq^{-}_{\varphi}$.

  \item A sequence $\langle x_{1}, x_{2}, \dots, x_{K}\rangle \in \mathsf{Bound}(\varphi)^{+}$ is
        said to be an alternating chain if:
        \[
          x_{1}\preceq^{-}_{\varphi}x_{2}\preceq^{-}_{\varphi}\dots\preceq^{-}_{\varphi}x_{K}
        \]
        and $\eta_{x_k} \neq \eta_{x_{k+1}}$ for every $k \in \omega$ such that $1 \leq k \leq K-1$. 
        The \textit{alternation depth} of $\alpha$ (denoted $\mathsf{alt}(\alpha)$) is the maximal
        length of alternating chains such that $x_{1} \leq \alpha$.
        That is, the alternation depth of $\alpha$ is
        the maximal number of alternations between $\mu$- and $\nu$-operators in $\alpha$.
  \item A \textit{priority function} $\Omega_{\varphi}: \mathsf{Sub}(\varphi) \rightarrow \omega$ is defined as follows:  
  \begin{eqnarray}\label{eq: priority of formulas}
  \Omega_{\varphi}(\psi) := \left\{
  \begin{array}{ll}
    \mathsf{alt}(x)&
    \text{if $\psi = x$, $\eta_{x} = \mu$ and $\mathsf{alt}(x) \equiv 0 \pmod 2$,}\\

    \mathsf{alt}(x)-1&
    \text{if $\psi = x$, $\eta_{x} = \mu$ and $\mathsf{alt}(x) \equiv 1 \pmod 2$,}\\

    \mathsf{alt}(x)&
    \text{if $\psi = x$, $\eta_{x} = \nu$ and $\mathsf{alt}(x) \equiv 1 \pmod 2$,}\\

    \mathsf{alt}(x)-1&
    \text{if $\psi = x$, $\eta_{x} = \nu$ and $\mathsf{alt}(x) \equiv 0 \pmod 2$,}\\

    0&
    \text{otherwise.}
  \end{array}\right.
  \end{eqnarray}
  The number $\Omega_{\varphi}(\psi)$ is called the priority of $\psi$.
\end{enumerate}
\end{Definition}

\begin{Example}\normalfont
For a formula $\varphi = \mu x.\nu y.(\Diamond x \vee (\mu z.(\Diamond z \wedge \square y)))$, we have
$\mathsf{alt}(\varphi) = 3$ since $x\preceq^{-}_{\varphi}y\preceq^{-}_{\varphi}z$ with
$\eta_{x} \neq \eta_{y}$ and $\eta_{y} \neq \eta_{z}$. Note that although
$x \notin \mathsf{Free}(\varphi_{z}(z))$, we have $x\preceq_{\varphi}z$.
\end{Example}

\subsection{Semantics}\label{subsec: semantics}
\begin{Definition}[\bf Kripke model]\normalfont
A \textit{Kripke model} for the modal $\mu$-calculus is a structure $\mathcal{S}=(S, R, \lambda)$ such
that:
\begin{itemize}
  \item $S = \{ s, t, u, \dots \}$ is a non-empty set of \textit{possible worlds}.
  \item $R$ is a binary relation over $S$ called the \textit{accessibility relation}.
  \item $\lambda:\mathsf{Prop}\rightarrow \mathcal{P}(S)$ is a \textit{valuation}.   
\end{itemize}
\end{Definition}

\begin{Definition}[\bf Denotation]\normalfont
Let $\mathcal{S} = (S, R, \lambda)$ be a Kripke model and let $x$ be a propositional variable. Then for
any set of possible worlds $T \in \mathcal{P}(S)$, we can define a new valuation $\lambda[x \mapsto T]$
on $S$ as follows: 
\begin{eqnarray*}
  \lambda[x \mapsto T](p):=
  \left\{\begin{array}{ll}
      T&\text{if $p = x$,}\\
      \lambda(p)&\text{otherwise.}\\
  \end{array}\right.
\end{eqnarray*}
Moreover, $\mathcal{S}[x \mapsto T]$ denotes the Kripke model $(S, R, \lambda[x \mapsto T])$. A
\textit{denotation} $\jump{\varphi}_{\mathcal{S}} \in \mathcal{P}(S)$ of a formula
$\varphi$ on $\mathcal{S}$ is defined inductively on the structure
of $\varphi$ as follows: 
\begin{itemize}
  \item $\jump{\bot}_{\mathcal{S}} := \emptyset$ and
        $\jump{\top}_{\mathcal{S}} := S$.
  \item $\jump{p}_{\mathcal{S}} := \lambda(p)$ and
        $\jump{\neg p}_{\mathcal{S}} := S \setminus \lambda(p)$ for any $p \in \mathsf{Prop}$.
  \item $\jump{\varphi \vee \psi}_{\mathcal{S}} :=
        \jump{\varphi}_{\mathcal{S}} \cup \jump{\psi}_{\mathcal{S}}$ and
        $\jump{\varphi \wedge \psi}_{\mathcal{S}} :=
        \jump{\varphi}_{\mathcal{S}} \cap \jump{\psi}_{\mathcal{S}}.$
  \item $\jump{\Diamond \varphi}_{\mathcal{S}} :=
        \{ s \mid \exists t \in S, (s, t) \in R \wedge t \in \jump{\varphi}_{\mathcal{S}}\}.$
  \item $\jump{\square \varphi}_{\mathcal{S}} := \{ s \mid \forall t \in S, (s, t) \in R
        \Longrightarrow t \in \jump{\varphi}_{\mathcal{S}}\}.$
  \item $\jump{\mu x.\varphi(x)}_{\mathcal{S}} := \bigcap\{T \in \mathcal{P}(S) \mid
        \jump{\varphi(x)}_{\mathcal{S}[x \mapsto T]} \subseteq T\}.$
  \item $\jump{\nu x.\varphi(x)}_{\mathcal{S}} := \bigcup\{ T \in \mathcal{P}(S) \mid
        T \subseteq \jump{\varphi(x)}_{\mathcal{S}[x \mapsto T]}\}.$
\end{itemize}
In accordance with the usual terminology, we say that a formula
$\varphi$ is \textit{true} or \textit{satisfied} at
a possible world $s \in S$ (denoted $\mathcal{S}, s \models \varphi$) if
$s \in \jump{\varphi}_{\mathcal{S}}$. A formula $\varphi$ is \textit{valid} (denoted
$\models \varphi$) if $\varphi$ is true at every world in any model. 
\end{Definition}

\begin{Example}\normalfont
Let $\mathcal{S} = (S, R, \lambda)$ be a Kripke model. A formula $\varphi(x)$ such that
$x \in \mathsf{Free}(\varphi(x))$
can be naturally seen as the following function:
\[
\begin{xy}
  (0,8) *{\mathcal{P}(S)},(20,8)*{\mathcal{P}(S)}, 
  (0,4) *{\rotatebox[origin=c]{90}{$\in$}},(20,4) *{\rotatebox[origin=c]{90}{$\in$}},
  (0,0) *{T}="T",(26,0) *{\jump{\varphi(x)}_{\mathcal{S}[x \mapsto T]}.}="[T]",
	\ar (5,8);(15,8)
	\ar @{|->} (3,0);(15,0)
\end{xy}
\]
This function is \textit{monotone} if $x$ is positive in $\varphi(x)$. Thus, by the Knaster-Tarski Theorem
\cite{tarski1955}, $\jump{\mu x.\varphi(x)}_{\mathcal{S}}$ and
$\jump{\nu x.\varphi(x)}_{\mathcal{S}}$ are the least and greatest fixpoint of the function $\varphi(x)$,
respectively.

Under this characterization of fixpoint operators,
we find that
many interesting properties of the Kripke model can be
represented by modal $\mu$-formulas. For
example, consider the formula $\varphi_{1} = \mu x.(\Diamond x \vee p)$. For every Kripke model
$\mathcal{S}$ and its possible world $s$, we have
$\mathcal{S}, s \models \varphi_{1}$ if and only if
there is some possible world reachable from $s$ in which $p$ is true. Consider the formula
$\varphi_{2} = \nu y.\mu x.((\Diamond y \wedge p)\vee(\Diamond x \wedge \neg p))$.
Then $\mathcal{S}, s \models \varphi_{2}$ if and only if
there is some path from $s$ on which $p$ is true infinitely often.
\end{Example}

\subsection{Axiomatization}\label{subsec: axiomatization}
We give the Kozen's axiomatization $\mathsf{Koz}$ for the modal $\mu$-calculus in the Tait-style
calculus.\footnote{
In Kozen's original article \cite{DBLP:journals/tcs/Kozen83}, the system
$\mathsf{Koz}$
was defined as the axiomatization of the equational theory. Nevertheless we present
$\mathsf{Koz}$ as an equivalent Tait-style calculus due to the calculus'
affinity with the tableaux discussed in the sequel.
}
Hereafter, we will write $\Gamma$, $\Gamma'$, $\dots$ for a finite set of formulas. Moreover, the
standard abbreviation in the Tait-style calculus are used. That is, we write $\alpha, \Gamma$ for
$\{ \alpha \} \cup \Gamma$; $\Gamma, \Gamma'$ for $\Gamma \cup \Gamma'$; and $\sim\!\Gamma$ for
$\{ \sim\!\gamma \mid \gamma \in \Gamma \}$ and so forth.

\fbox{\bf Axioms}
$\mathsf{Koz}$ contains basic tautologies of classical propositional calculus and the
\textit{pre-fixpoint axioms}:
$$
\infer[(\textsf{Bot})]
    {\bot \vdash}
    {}
    \qquad
\infer[(\textsf{Tau})]
    {\varphi, \sim\!\varphi \vdash}
    {}
    \qquad
\infer[(\textsf{Prefix})]
    {\alpha(\mu x.\alpha(x)), \sim\!\mu x.\alpha(x) \vdash}
    {}
$$

\fbox{\bf Inference Rules}
In addition to the classical inference rules from propositional modal logic, for any formula $\varphi(x)$
such that $x$ appears only positively, we have the \textit{induction rule} $(\mathsf{Ind})$ to handle
fixpoints: 
$$
    \infer[(\vee)]{\alpha\vee\beta, \Gamma \vdash}
    {\alpha, \Gamma \vdash \quad \beta, \Gamma \vdash} \qquad
    \infer[(\wedge)]{\alpha\wedge\beta, \Gamma \vdash}
    {\alpha, \beta, \Gamma \vdash} \qquad
$$
$$
    \infer[(\mathsf{Weak})]{\alpha, \Gamma \vdash}
    {\Gamma \vdash} \qquad
    \infer[(\Diamond)]{\Diamond \psi, \Gamma \vdash}
    {\psi, \{ \alpha \mid \square \alpha \in \Gamma\} \vdash}
$$
$$
    \infer[(\mathsf{Cut})]{\Gamma, \Gamma' \vdash}
    {\Gamma, \sim\!\alpha \vdash \quad \alpha, \Gamma' \vdash} \qquad
    \infer[(\mathsf{Ind})]{\mu x.\varphi(x), \sim\!\psi \vdash}
    {\varphi(\psi), \sim\!\psi \vdash}
$$
Of course, the condition of substitution is satisfied in the $(\mathsf{Ind})$-rule; namely,
no new binding relation occurs by applying the substitution $\varphi(\psi)$.
As usual, we say that a formula $\sim\!\bigwedge\Gamma$ is \textit{provable} in $\mathsf{Koz}$
(denoted $\Gamma \vdash$) if there exists a proof diagram of $\Gamma$. We frequently use notation
such as
$\Gamma \vdash \Gamma'$ to mean $\Gamma, \sim\!\Gamma' \vdash$.

The following two lemmas state some basic properties of $\mathsf{Koz}$. We leave the proofs of these
statement as an exercise to the reader.

\begin{Lemma}\label{lem: basic properties of KOZ 01}
Let $\varphi$ be a modal $\mu$-formula and let $\alpha(x)$ and $\beta(x, x)$ be modal
$\mu$-formulas where $x$ appears only positively. Then, the following holds:

\begin{enumerate}
\item
  $\vdash \eta x.\alpha(x) \leftrightarrow \eta y.\alpha(y)$ where $y \notin \mathsf{Free}(\alpha(x))$.

\item
  $\vdash \eta x.\beta(x, x) \leftrightarrow \eta x.\eta y.\beta(x, y)$ where
  $y \notin \mathsf{Free}(\beta(x, x))$.

\item
  $\vdash \mu x. \alpha(x) \leftrightarrow \alpha(\bot)$, if no appearances of $x$ are
  in the scope of any modal operators.

\item
  $\vdash \nu x. \alpha(x) \leftrightarrow \alpha(\top)$, if no appearances of $x$ are
  in the scope of any modal operators.

\item
  We can construct a well-named formula $\mathsf{wnf}(\varphi) \in \mathsf{WNF}$ such that
  $\vdash \varphi \leftrightarrow \mathsf{wnf}(\varphi)$.
\end{enumerate}
\end{Lemma}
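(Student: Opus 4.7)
The plan is to reduce all five clauses to direct manipulations of $(\mathsf{Prefix})$ and $(\mathsf{Ind})$, supplemented by a derived monotonicity principle. Before attacking the clauses I would first establish that for every formula $\alpha(x)$ with $x$ positive and any $\varphi, \psi$ with $\vdash \varphi \to \psi$, we also have $\vdash \alpha(\varphi) \to \alpha(\psi)$; this is a routine structural induction on $\alpha(x)$, with the propositional and modal cases using standard rules and the fixpoint cases using $(\mathsf{Ind})$ together with $(\mathsf{Prefix})$. I would then derive the dual principles for $\nu$ (a greatest-fixpoint induction rule and the postfixpoint axiom $\nu x.\alpha(x) \vdash \alpha(\nu x.\alpha(x))$) from $(\mathsf{Prefix})$, $(\mathsf{Ind})$ and De~Morgan duality.

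For clause (1), each direction is a single application of $(\mathsf{Ind})$: to derive $\mu x.\alpha(x) \vdash \mu y.\alpha(y)$, take $\psi := \mu y.\alpha(y)$, and observe that the required premise $\alpha(\mu y.\alpha(y)) \vdash \mu y.\alpha(y)$ is an instance of $(\mathsf{Prefix})$; the converse and the $\nu$-case are dual. For clause (2), the direction $\mu x.\mu y.\beta(x,y) \vdash \mu x.\beta(x,x)$ follows by two successive applications of $(\mathsf{Ind})$ (first on the outer $\mu x$, then on the inner $\mu y$) with target $\mu x.\beta(x,x)$; the final premise $\beta(\mu x.\beta(x,x), \mu x.\beta(x,x)) \vdash \mu x.\beta(x,x)$ is an instance of $(\mathsf{Prefix})$. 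For the reverse direction I would first derive, from $(\mathsf{Prefix})$, $(\mathsf{Ind})$ and monotonicity, the ``unfold both ways'' principle $\vdash \mu x.F(x) \leftrightarrow F(\mu x.F(x))$, and then use $(\mathsf{Ind})$ on $\mu x.\beta(x,x)$ together with two unfoldings and monotonicity to close the derivation.

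Clauses (3) and (4) rely on the following observation: when no occurrence of $x$ is under a modal operator, the positive propositional function $\alpha(x)$ is provably equivalent in $\mathsf{Koz}$ to $\alpha(\bot) \vee (\alpha(\top) \wedge x)$, shown by straightforward induction on $\alpha(x)$ using only classical propositional reasoning. It follows immediately that $\vdash \alpha(\alpha(\bot)) \leftrightarrow \alpha(\bot)$, so $\alpha(\bot)$ is a fixpoint of $\alpha$; the direction $\mu x.\alpha(x) \vdash \alpha(\bot)$ is then an instance of $(\mathsf{Ind})$, and $\alpha(\bot) \vdash \mu x.\alpha(x)$ follows by applying monotonicity to $\bot \vdash \mu x.\alpha(x)$ and composing with $(\mathsf{Prefix})$. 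Clause (4) is entirely dual.

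For clause (5), the construction of $\mathsf{wnf}(\varphi)$ proceeds by induction on $\varphi$: literals and modal formulas are immediate from the inductive hypothesis; Boolean combinations use clause (1) to rename bound variables to fresh names and so satisfy the disjointness conditions of Definition \ref{def: well-named formula}; and a fixpoint $\sigma x.\alpha(x)$ is handled by first applying the inductive hypothesis to $\alpha(x)$ to get some $\alpha'(x) \in \mathsf{WNF}$, then using clauses (3) and (4) to collapse $\sigma x.\alpha'(x)$ to a closed form whenever no occurrence of $x$ lies under a modal operator, and otherwise applying clause (2) iteratively (together with clause (1) for freshness) to split each remaining occurrence of $x$ into a distinct bound variable, yielding the shape $\sigma x_1.\dots\sigma x_k.\alpha'(x_1,\dots,x_k)$ required by clause 4 of Definition \ref{def: well-named formula}. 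The main obstacle throughout is the bookkeeping: in particular every application of $(\mathsf{Ind})$ requires that the substitution $\varphi(\psi)$ introduce no new binding relation, so at each step one must carefully choose fresh names for bound variables.
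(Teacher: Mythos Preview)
The paper does not prove this lemma; the sentence immediately following its statement reads ``We leave the proofs of these statement as an exercise to the reader.'' So there is no paper argument to compare against, and your outline for clauses (1), (2), and (5) is a standard and correct way to discharge the exercise.

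There is, however, a genuine gap in your treatment of clause (3) (and dually (4)). Your key intermediate step---that when no occurrence of $x$ lies under a modal operator one has $\vdash \alpha(x)\leftrightarrow\alpha(\bot)\vee(\alpha(\top)\wedge x)$ by ``classical propositional reasoning''---is \emph{false} in general, because the hypothesis does not exclude occurrences of $x$ lying in the scope of a fixpoint binder for some other variable. Take $\alpha(x)=\mu z.(\Diamond z\vee x)$: here $x$ is not under any modality, $\alpha(\bot)\equiv\mu z.\Diamond z\equiv\bot$ and $\alpha(\top)\equiv\top$, so your decomposition would yield $\alpha(x)\leftrightarrow x$; but semantically $\alpha(x)$ says ``some $x$-world is reachable,'' which is strictly stronger than $x$. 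Your structural induction breaks exactly at the case $\alpha(x)=\sigma z.\gamma(z,x)$, which cannot be handled propositionally. Note that clause (3) itself remains true for this $\alpha$ (both $\mu x.\alpha(x)$ and $\alpha(\bot)$ are provably $\bot$), and this situation \emph{does} arise in your clause-(5) construction, since $\mu z.(\Diamond z\vee x)\in\mathsf{WNF}$ while $x$ sits under $\mu z$ but not under $\Diamond$.

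The repair is to prove the full guardedness transformation $\vdash \mu x.\alpha(x)\leftrightarrow\mu x.\alpha^{\bot}(x)$ (with $\alpha^{\bot}$ obtained by replacing every occurrence of $x$ not under a modality by $\bot$), of which clause (3) is the special case where no guarded occurrence remains. One route is to argue by induction on the number of fixpoint subformulas of $\alpha$ that contain a free unguarded $x$: in the base case your Shannon argument is valid (since then $x$ lies under neither a modality nor a fixpoint and really is propositional), and the inductive step uses clause~(2) together with monotonicity and the unfolding law to peel off an innermost offending binder. This is standard (it goes back to Kozen's original paper) but is genuinely more than ``classical propositional reasoning.''
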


\begin{Lemma}\label{lem: basic properties of KOZ 02}
Let $\alpha$, $\beta$, $\varphi(x)$, $\psi(x)$, $\chi_{1}(x)$ and $\chi_{2}(x)$ be modal $\mu$-formulas
where $x$ appears only positively in $\varphi(x)$ and $\psi(x)$. Further,
suppose that
$\chi_{1}(\alpha)$, $\chi_{1}(\beta)$ and $\chi_{2}(\alpha)$ are legal substitution;
namely, a new binding relation does not occur by such substitutions. Then, the
following holds:

\begin{enumerate}

\item
  If $\vdash \varphi(x) \rightarrow \psi(x)$ then
  $\vdash \eta x.\varphi(x) \rightarrow \eta x.\psi(x)$.

\item
  If $\vdash \alpha \leftrightarrow \beta$ then $\vdash \chi_{1}(\alpha) \leftrightarrow \chi_{1}(\beta)$.

\item
  If $\vdash \chi_{1}(x) \leftrightarrow \chi_{2}(x)$ then
  $\vdash \chi_{1}(\alpha) \leftrightarrow \chi_{2}(\alpha)$.

\end{enumerate}
\end{Lemma}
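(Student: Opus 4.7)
The plan is to prove all three parts using a single auxiliary result: a substitution theorem for $\mathsf{Koz}$ stating that if $\Gamma \vdash$ is derivable and $[x/\alpha]$ is a legal substitution in every formula of the derivation (which can always be arranged by alpha-renaming bound variables, appealing to Lemma \ref{lem: basic properties of KOZ 01}(1)), then $\Gamma[x/\alpha] \vdash$. This theorem is standard and is proved by induction on the depth of the derivation: each axiom remains an axiom after substitution (in particular $(\mathsf{Prefix})$, where one uses that nested substitutions commute under the legality condition), and each inference rule is readily seen to be stable under substitution. With this theorem in hand, part 3 is immediate: applying $[x/\alpha]$ to a derivation of $\chi_{1}(x) \leftrightarrow \chi_{2}(x)$ yields a derivation of $\chi_{1}(\alpha) \leftrightarrow \chi_{2}(\alpha)$.

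For part 1 with $\sigma = \mu$, the goal $\mu x.\varphi(x) \vdash \mu x.\psi(x)$ is reduced by $(\mathsf{Ind})$, with the rule's target formula taken to be $\mu x.\psi(x)$ (alpha-renaming where necessary), to the task of deriving $\varphi(\mu x.\psi(x)) \vdash \mu x.\psi(x)$. From the hypothesis $\vdash \varphi(x) \rightarrow \psi(x)$, the substitution theorem delivers $\varphi(\mu x.\psi(x)) \vdash \psi(\mu x.\psi(x))$, while $(\mathsf{Prefix})$ gives $\psi(\mu x.\psi(x)) \vdash \mu x.\psi(x)$; a single $(\mathsf{Cut})$ closes the case. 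For $\sigma = \nu$, I would first derive the dual co-induction rule and the post-fixpoint axiom from $(\mathsf{Ind})$ and $(\mathsf{Prefix})$ via the De Morgan duality between $\mu$ and $\nu$, after which the derivation is entirely symmetric.

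For part 2, I would proceed by induction on the structure of $\chi_{1}(x)$. The atomic cases ($\chi_{1} = x$, a literal, or a formula without $x$ free) and the inductive cases for $\vee$, $\wedge$, $\Diamond$, and $\square$ all follow from the induction hypothesis together with the ordinary propositional and modal rules of $\mathsf{Koz}$. The only genuinely new case is $\chi_{1}(x) = \sigma y.\chi_{1}'(x,y)$, which is handled by splitting the biconditional supplied by the induction hypothesis, $\vdash \chi_{1}'(\alpha,y) \leftrightarrow \chi_{1}'(\beta,y)$, into its two component implications and invoking part 1 (with $y$ in the role of the bound variable) on each, yielding $\vdash \sigma y.\chi_{1}'(\alpha,y) \leftrightarrow \sigma y.\chi_{1}'(\beta,y)$.

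The main obstacle is the auxiliary substitution theorem itself: although conceptually routine, it requires careful bookkeeping of variable bindings in $(\mathsf{Prefix})$ and $(\mathsf{Ind})$ to confirm that the nested substitutions behave correctly under the legality conditions. Once this is verified, all three parts of the lemma follow by the short derivations sketched above.
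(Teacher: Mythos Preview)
Your proposal is correct; the paper itself leaves the proof of this lemma explicitly ``as an exercise to the reader'' and provides no argument of its own, so there is nothing to compare against. The route you outline---a substitution theorem proved by induction on derivations, part~3 as an immediate corollary, part~1 via $(\mathsf{Ind})$, $(\mathsf{Prefix})$ and $(\mathsf{Cut})$ (and its dual for $\nu$), and part~2 by structural induction invoking part~1 at the fixpoint case---is exactly the standard way to discharge this exercise.
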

The following lemma is essentially used when proving claim (f).

\begin{Lemma}\label{lem: basic properties of KOZ 03}
Let $\varphi$ and $\alpha(x)$ be modal
$\mu$-formulas with $x$ appearing only positively in $\alpha(x)$ and $x \notin \mathsf{Free}(\varphi)$. Then we have that
if $\alpha\big(\mu x.(\varphi \wedge \alpha(x))\big) \vdash \varphi$ then
$\mu x.\alpha(x) \vdash \varphi$.
\end{Lemma}

\begin{proof}
Suppose that
\[
    \alpha\big(\mu x.(\varphi \wedge \alpha(x))\big) \vdash \varphi
\]
By propositional principal we have
\begin{equation}\label{eq: extension of rule 1}
    \alpha\big(\mu x.(\varphi \wedge \alpha(x))\big) \vdash
    \varphi \wedge \alpha\big(\mu x.(\varphi \wedge \alpha(x))\big) 
\end{equation}
On the other hand, by $(\mathsf{Prefix})$ rule we have
\begin{equation}\label{eq: extension of rule 2}
    \varphi \wedge \alpha\big(\mu x.(\varphi \wedge \alpha(x))\big) \vdash
    \mu x.(\varphi \wedge \alpha(x))
\end{equation}
By combining Statements $(\ref{eq: extension of rule 1})$ and $(\ref{eq: extension of rule 2})$ we get
\begin{equation}\label{eq: extension of rule 3}
    \alpha\big(\mu x.(\varphi \wedge \alpha(x))\big) \vdash
    \mu x.(\varphi \wedge \alpha(x))
\end{equation}
Therefore by applying $(\mathsf{Ind})$ to $(\ref{eq: extension of rule 3})$ we have
\begin{equation}\label{eq: extension of rule 4}
    \mu x.\alpha(x) \vdash
    \mu x.(\varphi \wedge \alpha(x))
\end{equation}
The following statement $(\ref{eq: extension of rule 5})$ is easily provable in $\mathsf{Koz}$:
\begin{equation}\label{eq: extension of rule 5}
    \mu x.(\varphi \wedge \alpha(x)) \vdash
    \varphi
\end{equation}
Finally apply $(\mathsf{Cut})$ rule to Statement $(\ref{eq: extension of rule 4})$ and
$(\ref{eq: extension of rule 5})$, then we have
\[
    \mu x.\alpha(x) \vdash \varphi
\]
This completes the proof.
\end{proof}

\begin{Remark}\label{rem: remember}\normalfont
Let $\varphi(x)$ be a formula where $x$ appears only positively in  $\varphi(x)$.
By Lemma \ref{lem: basic properties of KOZ 02} and \ref{lem: basic properties of KOZ 03},
we can assume that $\mathsf{Koz}$ can simulate the following two inference rules:
$$
    \infer[(\mathsf{Record})]{\mu x.\alpha(x), \Gamma \vdash}
    {\alpha\big(\mu x.(\sim\!\bigwedge\Gamma\wedge\alpha(x))\big), \Gamma \vdash} \qquad
    \infer[(\mathsf{Forget})] {\chi\big(\mu x.(\sim\!\bigwedge\Gamma\wedge\alpha(x)\big), \Gamma' \vdash}
    {\chi\big(\mu x.\alpha(x)\big), \Gamma' \vdash}
$$
where $x \notin \mathsf{Free}(\Gamma)$.
In the following, we will discuss the $\mathsf{Koz}$ as having the above inference rules from the beginning.
\end{Remark}

\subsection{Tableau}\label{subsec: tableau}

\begin{Definition}[\bf Cover modality]\normalfont
Let $\Phi$ be a finite set of formulas. Then $\triangledown \Phi$ denotes an abbreviation of the
following formula:
\[
    \big(\bigwedge\Diamond \Phi\big) \wedge \big(\square \bigvee\Phi\big).
\]
Here, $\Diamond \Phi$ denotes the set $\{ \Diamond \varphi \mid \varphi \in \Phi\}$, and as always, we
use the convention that $\bigvee\emptyset := \bot$ and $\bigwedge\emptyset := \top$. The symbol
$\triangledown$ is called the \textit{cover modality}.
\end{Definition}

\begin{Remark}\normalfont
Note that the both the ordinary diamond $\Diamond$ and the ordinary box $\square$ can be expressed in
term of cover modality and the disjunction:
\begin{eqnarray*}
  \Diamond \varphi \equiv \triangledown\{\varphi, \top\},\\
  \square \varphi \equiv \triangledown\emptyset \vee \triangledown\{\varphi\}.
\end{eqnarray*}
Therefore, without loss of generality we restrict ourselves to using only $\triangledown$ instead of
$\Diamond$ and $\square$. Hereafter, we exclusively use cover modality notation instead of
ordinal modal notation; thus \textit{if not otherwise mentioned, all formulas are assumed to be using
this new constructor}.
Moreover, syntactic concepts such as
the well-named formula and the alternation depth extend to formulas using this modality.
\end{Remark}

\begin{Definition}\normalfont
Let $\Gamma$ be a set of formulas. We will say that $\Gamma$ is
\textit{locally consistent} if $\Gamma$ does not contain $\bot$ nor any propositional variable $p$ and
its negation $\neg p$ simultaneously. On the other hand, $\Gamma$ is said to be \textit{modal}
(under $\varphi$) if $\Gamma$ does not contain formulas of the forms $\alpha\vee\beta$,
$\alpha\wedge\beta$, $\eta x.\alpha(x)$, or $x \in \mathsf{Bound}(\varphi)$. In
other words, if $\Gamma$ is modal, then $\Gamma$ can possess only literals and formulas of the form
$\triangledown\Phi$.
\end{Definition}

\begin{Definition}[\bf Tableau]\label{def: tableau}\normalfont
Let $\varphi$ be a well-named formula. A set of \textit{tableau rules} for $\varphi$ is defined as
follows:
$$
  \infer[(\vee)]{\alpha\vee\beta, \Gamma}
  {\alpha, \Gamma \;\mid\; \beta, \Gamma} \qquad
  \infer[(\wedge)]{\alpha\wedge\beta, \Gamma}
  {\alpha, \beta, \Gamma}
$$
$$
  \infer[(\eta)]{\eta_{x}x.\varphi_{x}(x), \Gamma}
  {\varphi_{x}(x), \Gamma} \qquad
  \infer[(\mathsf{Regeneration})]{x, \Gamma}
  {\varphi_{x}(x), \Gamma}
$$
$$
  \infer[(\triangledown)]
  {\triangledown\Psi_{1}, \dots, \triangledown\Psi_{i}, l_{1}, \dots, l_{j}}
  {\{\psi_{k}\}\cup\{\bigvee\Psi_{n}\mid n \in N_{\psi_{k}}\}\:\mid\: \text{For every $k \in \omega$ with
  $1 \leq k \leq i$ and $\psi_{k} \in \Psi_{k}$.}}
$$
where in the $(\triangledown)$-rule,
$l_{1}, \dots, l_{j} \in \mathsf{Lit}(\varphi)$ and
$N_{\psi_{k}} := \{ n \in \omega \mid 1 \leq n \leq i,\; n \neq k\}$. Therefore, the premises of
a $(\triangledown)$-rule is equal to $\sum_{1 \leq k\leq i}|\Psi_{k}|$. 

A \textit{tableau} for $\varphi$ is a structure $\mathcal{T}_{\varphi} = (T, C, r, L)$ where $(T, C, r)$
is a tree structure and $L: T \rightarrow \mathcal{P}(\mathsf{Sub}(\varphi))$ is a
\textit{label function} satisfying the following clauses:
\begin{enumerate}
\item
  $L(r) = \{\varphi\}$.
\item
  Let $t \in T$. If $L(t)$ is modal and inconsistent then $t$ has no child. Otherwise, if $t$ is labeled
  by a set of formulas which fulfills the form of the conclusion of some tableau rules, then $t$ has
  children which are labeled by the sets of
  formulas of premises of one of those tableau rules, e.g., if $L(t) = \{ \alpha \vee \beta\}$, then $t$
  must have two children $u$ and $v$ with $L(u) = \{\alpha\}$ and $L(v) = \{\beta\}$. 
\item
  The rule $(\triangledown)$ can be applied in $t$ only if $L(t)$ is modal.
\end{enumerate}
We call a node $t$ a $(\triangledown)$-node if the rule $(\triangledown)$ is applied between $t$ and its
children. The notions of $(\vee)$-, $(\wedge)$-,
$(\eta)$- and $(\mathsf{Regeneration})$-node are
defined similarly.
\end{Definition}

\begin{Definition}[\bf Modal and choice nodes]\label{def: modal and choice nodes}\normalfont
Leaves and $(\triangledown)$-nodes are called \textit{modal nodes}. The root of the tableau and
children of modal nodes are called \textit{choice nodes}. We say that a modal node $t$ and
choice node $u$ are \textit{near} to each other if $t$ is a descendant of $u$ and between the $C$-sequence
from $u$ to $t$, there is no node in which the rule $(\triangledown)$ is applied. Similarly, we say
that a modal node $t'$ is a \textit{next modal node} of a modal node $t$ if $t'$ is a descendant of $t$
and between the $C$-sequence from $t$ to $t'$, rule $(\triangledown)$ is applied exactly once
between $t$ and
its child.
\end{Definition}

\begin{Definition}[\bf Trace]\label{def: trace}\normalfont
Let $\Gamma$ and $\Gamma'$ are finite sets of formulas. We define the \textit{trace function}
$\mathsf{TR}_{\Gamma, \Gamma'}: \Gamma\rightarrow\mathcal{P}(\Gamma')$ as follows:
\begin{itemize}
\item
If $\Gamma$ and $\Gamma'$ can be the lower label and
one of the upper label of a tableau inference rule respectively,
then $\mathsf{TR}_{\Gamma, \Gamma'}$ is a function which
outputs set of formulas of the result of reduction of $\gamma$ where $\gamma \in \Gamma$ as input.
For instance, if $\Gamma = \{ \alpha\wedge\beta, \gamma\}$ and $\Gamma' = \{ \alpha, \beta, \gamma\}$,
then these are the labels of the following $(\wedge)$-rule: 
$$
  \infer[(\wedge)]
  {\alpha\wedge\beta, \gamma}
  {\alpha, \beta, \gamma}
$$
Hence, we have $\mathsf{TR}_{\Gamma, \Gamma'}(\alpha\wedge\beta) := \{ \alpha, \beta\}$
and $\mathsf{TR}_{\Gamma, \Gamma'}(\gamma) := \{ \gamma \}$.

\item
Otherwise, we set $\mathsf{TR}_{\Gamma, \Gamma'}(\gamma) := \emptyset$ for every $\gamma \in \Gamma$.
\end{itemize}
Take a finite or infinite sequence $\vec{\Gamma} = \Gamma_{1}\Gamma_{2}\dots$
of finite sets of formulas.
A \textit{trace} $\mathsf{tr}$ on $\vec{\Gamma}$ is a finite or infinite sequence of formulas
satisfying the following two conditions;
\begin{itemize}
\item
  $\mathsf{tr}[1] = \Gamma_{1}$.
\item
  For any $n \in \omega\setminus\{0\}$, if $\mathsf{tr}[n]$ is defined and satisfies  
  $\mathsf{TR}_{\Gamma_{n}, \Gamma_{n+1}}(\mathsf{tr}[n]) \neq \emptyset$,
then $\mathsf{tr}[n+1]$ is also defined
and satisfies $\mathsf{tr}[n+1] \in \mathsf{TR}_{\Gamma, \Gamma'}(\mathsf{tr}[n])$.
\end{itemize}
The infinite trace $\mathsf{tr}$ is said to be \textit{even} if
\[
    \max \mathsf{Infinite}(\vec{\Omega}_{\varphi}(\mathsf{tr})) = 0 \pmod 2.
\]
$\vec{\Gamma}$ is said to be even if there exists an even trace on $\vec{\Gamma}$.

Let $\mathcal{T}_{\varphi} = (T, C, r, L)$ be a tableau for a well-named formula $\varphi$.
Let $\xi$ be an infinite branch of $\mathcal{T}_{\varphi}$. Then we say $\xi$ is even
if $\vec{L}(\xi)$ is even.
\end{Definition}

\begin{Definition}[\bf $\mu$-trace]\label{def: mu-trace}\normalfont
Let $\mathsf{tr}$ be an infinite trace. Then we call $\mathsf{tr}$ \textit{$\mu$-trace} if
the smallest variable (with respect to dependency order $\preceq_{\varphi}$) regenerated infinitely often
is $\mu$-variable. Similarly, We call a trace $\mathsf{tr}$ a \textit{$\nu$-trace} if
the smallest variable regenerated infinitely often is $\nu$-variable.
Note that every infinite trace $\mathsf{tr}$ is either a $\mu$-trace or $\nu$-trace since
all the rules except $(\mathsf{regeneration})$-rule decrease the size of formulas
and formulas are eventually reduced since every bound variable is in the scope of some modal operator.

\textit{Based on the above definition, the fact that $\vec{\Gamma}$ is even can be rephrased that
$\vec{\Gamma}$ contains a $\mu$-trace.}
\end{Definition}

\subsection{Refutation}\label{subsec: refutatiuon}

\begin{Definition}[\bf Refutation]\label{def: refutation}\normalfont
A well-named formula $\varphi$ is given. \textit{Refutation rules} for $\varphi$ are defined as the
rules of tableau, but this time, we modify the set of rules by adding an explicit weakening rule:
$$
  \infer[(\mathsf{Weak})]{\alpha, \Gamma}
  {\Gamma}
$$
and, instead of the $(\triangledown)$-rule, we take the following $(\triangledown_{r})$-rule:
$$
\infer[(\triangledown_{r})]
  {\triangledown\Psi_{1}, \dots, \triangledown\Psi_{i}, l_{1}, \dots, l_{j}}
  {\{\psi_{k}\}\cup\{\bigvee\Psi_{n}\mid n \in N_{\psi_{k}}\}}
$$
where in the $(\vee_{r})$-rule, we have $1 \leq k \leq i$, $\psi_{k} \in \Psi_{k}$,
$N_{\psi_{k}} = \{ n \in \omega \mid 1 \leq n \leq i, \; n \neq k \}$ and
$l_{1}, \dots, l_{j} \in \mathsf{Lit}(\varphi)$.
Therefore the $(\triangledown_{r})$-rule has one premise.

A \textit{refutation} for $\varphi$ is a structure
$\mathcal{R}_{\varphi} = (T, C, r, L)$ where $(T, C, r)$ is a tree structure and
$L: T \rightarrow \mathcal{P}(\mathsf{Sub}(\varphi))$ is a \textit{label function} satisfying the
following clauses:
\begin{enumerate}
  \item $L(r) = \{\varphi\}$.
  \item Every leaf is labeled by some inconsistent set of formulas.
  \item Let $t \in T$. If $L(t)$ is modal and inconsistent, then $t$ has no child. Otherwise, if $t$ is
        labeled by the set of formulas which fulfils the form of the conclusion of some refutation
        rules, then $t$ has children which are labeled by the sets of formulas of premises of those
        refutation rules. 
  \item The rule $(\triangledown_{r})$ can be applied to $t$ only if $L(t)$ is modal.
  \item For any infinite branch $\xi$, $\xi$ is even in the sense of Definition
        \ref{def: trace}. In other words, $\vec{L}(\xi)$ contains some $\mu$-traces.
\end{enumerate}
\end{Definition}

The following theorem is proved by
Niwinski and Walukiewicz \cite{Niwinski199699}
 (see also the literature \cite{DBLP:conf/dagstuhl/2001automata}).

\begin{Theorem}\label{the: refutation}
Let $\varphi$ be a well-named formula. If $\varphi$ is not satisfiable, then there exists a
refutation for $\varphi$.
\end{Theorem}

\begin{Remark}\label{rem: thin}\normalfont
The refutation is very similar to the proof diagram.
Indeed, it is easy to see that among the rules of refutation, $(\vee)$, $(\wedge)$,
and $(\mathsf{Weak})$ are the same as the rules of $\mathsf{Koz}$, and $(\triangledown_{r})$
corresponds to $(\Diamond)$.
The remarkable difference is that the proof diagram is a finite tree, whereas the refutation may contain infinite branches.
When trying to convert a refutation to a proof diagram,
the whole problem lies in ``cutting" these infinite branches.

The condition that the infinite branch contains a $\mu$-trace and inference rule
$(\mathsf{Record})$ are the keys to solving this cutting problem.
Consider an infinite branch $\xi$ of refutation, as shown on the left side of Figure \ref{fig: conversion to proof diagram}.
  \begin{figure}[htbp]
    \centering
    \includegraphics[width=14cm]{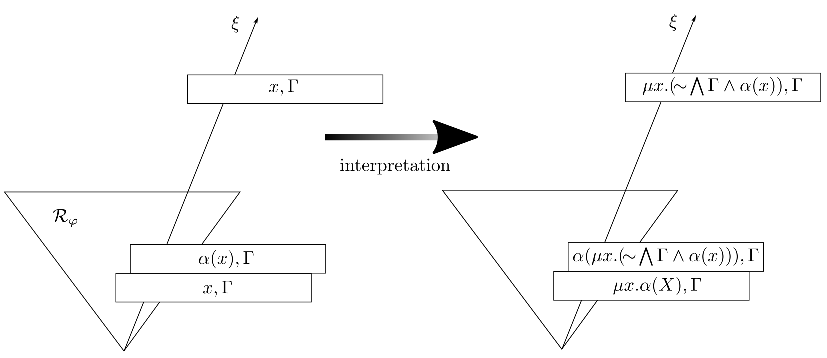}
    \caption{The conversion to proof diagram.}
    \label{fig: conversion to proof diagram}
  \end{figure}
Since $\xi$ contains a $\mu$-trace, there exists $\mu$-variables $x$ that will be regenerated
infinitely often.
The right side of Figure \ref{fig: conversion to proof diagram}
is constructed so that the interpretation of $x$ matches the inference rule
$(\mathsf{record})$; that is, it is interpreted as $x = \mu x.(\sim\!\bigwedge\Gamma\wedge\alpha(x))$.
Since $\{ \mu x.(\sim\!\bigwedge\Gamma\wedge\alpha(x))\} \cup \Gamma$ is provable in $\mathsf{Koz}$,
we can obtain the proof diagram we seek.
However, in reality, the branch of refutation is branched,
so inference rules $(\mathsf{Record})$ and $(\mathsf{Forget})$ must be applied very carefully
so that the above argument holds on \textit{all branches}.
The strategies for applying these inference rules when refutation satisfies a special \textit{thin} condition will be
discussed in detail in the proof of Theorem \ref{the: thin refutation}.
\end{Remark}

\section{Automata}\label{sec: automata}
The purpose of this section is to define
the terminology of $\omega$-automata theory and
to prepare the necessary tools to prove the conpleteness of $\mathsf{Koz}$.
Specifically, we will introduce two important concepts, the \textit{Safra's construction} \cite{safra1988}
and the \textit{index appearence record} defined by K{\v r}et\`ins\`y et al. \cite{Kret2017}.

\subsection{$\omega$-automata}\label{subsec: omega-automata}
$\omega$-automata are finite automata that are interpreted over infinite words and recognise
$\omega$-regular languages $\mathcal{L} \subseteq \Sigma^{\omega}$.
There are several variations of $\omega$-automata, depending on their acceptance conditions.
Among them, we deal with \textit{B\"uchi automata}, \textit{Rabin automata}, and
\textit{parity automata}. Firstly, we define the B\"uchi automata.

\begin{Definition}[\bf B\"uchi automata]\normalfont
A \textit{B\"uchi automaton} is a quintuple $\mathcal{BA} = \langle Q, \Sigma, q_{0}, \Delta, F \rangle$ where:
\begin{itemize}
    \item $Q$ is a finite set of \textit{states} of the automaton, 
    \item $\Sigma$ is an \textit{alphabet}, 
    \item $q_{0} \in Q$ is a state called the \textit{initial state}, 
    \item $\Delta : Q \times \Sigma \rightarrow \mathcal{P}(Q)$ is a \textit{transition function}, and
    \item $F \subseteq Q$ is a set of \textit{final states}. 
\end{itemize}
Using the usual definitions, we say that $\mathcal{BA}$ is \textit{deterministic} if $|\Delta(q, a)| \leq 1$
for every $q \in Q$ and $a \in \Sigma$. Let $\mathcal{BA} = (Q, \Sigma, q_{0}, \Delta, F)$ be a
B\"uchi automaton. A \textit{run} of $\mathcal{BA}$ on an $\omega$-word
$\sigma \in \Sigma^{\omega}$ is an infinite sequence
$\rho \in Q^{\omega}$ of a state where $\rho[1] = q_{0}$ and
$\rho[n+1] \in \Delta(\rho[n], \sigma[n])$ for any $n \geq 1$. An $\omega$-word
$\sigma \in \Sigma^{\omega}$ is \textit{accepted} by $\mathcal{BA}$ if there is a run $\rho$ of
$\mathcal{BA}$ on $\sigma$ satisfying the following condition: 
\[ \mathsf{Infinite}(\rho) \cap F \neq \emptyset. \] 
The $\omega$-language of all $\omega$-words accepted by $\mathcal{BA}$ is denoted by
$\mathcal{L}(\mathcal{BA})$. 
\end{Definition}

\begin{Remark}\normalfont
Consider the problem of converting nondeterministic automaton $\mathcal{BA}$ to
its equivalent deterministic automaton $\mathcal{BA}'$, as in the case of finite automata theory.
Note that the usual powerset construction 
which convert of nondeterministic finite automata to deterministic finite automata
does not work for $\omega$-automata.
\begin{figure}[htbp]
  \centering
  \includegraphics[width=8.5cm]{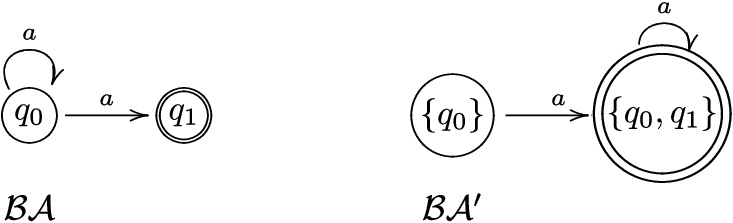}
  \caption{A counterexample of powerset construction.}
  \label{fig: powerset construction}
\end{figure}
For example, consider an
automaton $\mathcal{BA}$ and an automation $\mathcal{BA'}$ constructed by the powerset construction,
shown in Figure \ref{fig: powerset construction}. The
two automata are not equivalent, $\mathcal{L}(\mathcal{BA}) = \emptyset \neq \{ a^{\omega} \} = \mathcal{L}(\mathcal{BA'})$.
The problem is that the fact that the
final state $\{q_{0}, q_{1}\}$ of the powerset automaton $\mathcal{BA'}$ occurs infinitely often on a run
does not guarantee
that the automaton $\mathcal{BA}$ has a run on which its final state $q_{1}$ occurs infinitely often.
\end{Remark}

Secondly, we define the Rabin automata.

\begin{Definition}[\bf Rabin automata]\normalfont
A \textit{Rabin automaton} is a quintuple $\mathcal{RA} = \langle Q, \Sigma, q_{0}, \Delta, \{ (A_{j}, R_{j}) \mid j \in J \} \rangle$
where:
\begin{itemize}
  \item The definition of $Q$, $\Sigma$, $q_{0}$, and $\Delta$
           are exactly the same as for the B\"uchi automaton.
  \item $J$ is a finite set of subscripts (index), and for each $j \in J$, $A_{j}, R_{j} \subseteq Q$.
           $(A_{j}, R_{j})$ is called a Rabin's pair.
           Incidentally, $A_{j}$ is an acronym for ``Accept",
           $R_{j}$ is an acronym for ``Reject", respectively.
\end{itemize}
An $\omega$-word
$\sigma \in \Sigma^{\omega}$ is accepted by $\mathcal{RA}$ if there is a run $\rho$ of
$\mathcal{RA}$ on $\sigma$
and index $j \in J$ such that: 
\[ A_{j} \cap \mathsf{Infinite}(\rho) \neq \emptyset = R_{j} \cap \mathsf{Infinite}(\rho).\]
\end{Definition}

Finally, we define the parity automata.

\begin{Definition}[\bf Parity automata]\normalfont
A \textit{parity automaton} is a quintuple $\mathcal{PA} = \langle Q, \Sigma, q_{0}, \Delta, \mathsf{pri} \rangle$
where:
\begin{itemize}
  \item The definition of $Q$, $\Sigma$, $q_{0}$, and $\Delta$
           are exactly the same as for the B\"uchi automaton.
  \item  $\mathsf{pri}: Q \rightarrow \omega$ is called the \textit{priority function}.
\end{itemize}
An $\omega$-word
$\sigma \in \Sigma^{\omega}$ is accepted by $\mathcal{PA}$ if there is a run $\rho$ of
$\mathcal{PA}$ on $\sigma$ such that: 
\[ \max \{ \mathsf{pri}(q) \mid q \in \mathsf{Infinite}(\rho) \} \equiv 0 \mod 2.\]
\end{Definition}

The class of $\omega$-language characterized by the deterministic B\"uchi automata
is denoted by $\mathsf {DBA}$.
The class of $\omega$-language characterized by nondeterministic B\"uchi automata
is denoted by $\mathsf {NDBA}$.
Similarly, $\mathsf{DRA}$, $\mathsf{NDRA}$, $\mathsf{DPA}$, and $\mathsf{NDPA}$
are classes of  $\omega$-language characterized by
deterministic Rabin automata,
nondeterministic Rabin automata,
deterministic parity automata, and
nondeterministic parity automata, respectively.
Then, it is widely known that the following inclusion holds (see, e.g. the Literature \cite{DBLP:conf/dagstuhl/2001automata}):
\[
  \mathsf{DBA} \subsetneq \mathsf{NDBA} = \mathsf{DRA} = \mathsf{NDRA}
  = \mathsf{DPA} = \mathsf{NDPA}
\]
In the next subsection,
we will prove $\mathsf {NDBA} \subseteq \mathsf{DRA}$ which is the part of the above theorem,
by using a method called \textit{Safra's construction}.

\subsection{Safra's construction}\label{subsec: safra's construction}
In this subsection, the non deterministic B\"uchi automaton $\mathcal{BA} = \langle Q, \Sigma, q_{0}, \Delta, F \rangle$ is
fixed and discussed.
The subject of this subsection is to specifically construct a deterministic Rabin automaton equivalent to $\mathcal{BA}$.
For a given $n \in \omega$, we use $\Pi^{n}$ to denote the set of all permutations of $N := \{1, \dots, n\}$,
i.e., the set of all bijective function  $\pi: N \rightarrow N$. We identify $\pi$ with its canonical representation
as a vector $(\pi(1), \dots, \pi(n))$.
In the following, we will often say ``the position of $j \in N$ in $\pi$" or similar to refer to $\pi^{-1}(i)$.

\begin{Definition}[\bf Safra's tree]\label{def: safra's tree}\normalfont
The structure $\mathsf{s} = \langle J, C, 1, l \rangle$ is called \textit{Safra's tree} for $\mathcal{BA}$
when:
\begin{enumerate}
  \item $J \subseteq \{ 1, 2, \dots, (|Q|+1)^{2} \}$ \footnote{The number of pool for vertice is $(|Q|+1)^{2}$,
  where the reason why the upper limit is $(|Q|+1)^{2}$, will be described later in the Remark \ref {rem: node}.}
  is the set of vertices\footnote{The tree structures mentioned in this article are tableau and safra's tree.
  We use the term vertex for a vertex of safra's tree, and use the term node for a vertex of tableau.
  We use these terms strictly to prevent confusion.},
  where $Q$ is the set of states of $\mathcal{BA}$.
  \item $C$ is a childhood relation over $J$.\footnote{In the definition of the safra tree, 
           it is common to specify priorities (so-called older-younger relationships) between siblings.
           however, in this article, the older-younger relationship is defined by a \textit{index appearence record},
           so it is not defined here.}
  \item $1$ is a root of safra's tree.
  \item $l: J \rightarrow \mathcal{P}(Q)$ is a labeling function satisfying the following conditions:
  \begin{enumerate}
    \item For any $j \in J$, $l(j) \neq \emptyset$.
    \item For any $j \in J$, $l(j) \supseteq \bigcup_{k \in C(j)}l(k)$. In particular, if $j$ is not the root
    (i.e. $j \neq 1$), then $l(j) \supsetneq \bigcup_{k \in C(j)}l(k)$.\footnote{Again, it is more general to assume that
    $l(1) \supsetneq \bigcup_{k \in C(1)}l(k)$.
    In this article, however, it is intentionally allowed to be $l(1) = \bigcup_{k \in C(1)}l(k)$
    so that it is convenient to prove the completeness of the modal $\mu$-calculation later.}
    \item For any $j_{1}, j_{2} \in J$, if $j_{1}$ and $j_{2}$ are siblings, then $l(j_{1}) \cap l(j_{2}) = \emptyset$.
  \end{enumerate}
\end{enumerate}
\end{Definition}

\begin{Remark}\label{rem: safra 1}\normalfont
Let $\mathsf{s} = \langle J, C, r, l \rangle$ be a safra's tree.
Consider the assignment $Q \rightarrow J\setminus\{1\}$
which assign $j \in J$ for given $q \in Q$, where $q \in l(j)$ and
$q \notin \bigcup_{k \in C(j)}l(k)$.
According to condition (a) and (b) in part $4$ of Definition \ref{def: safra's tree},
it can be said that the assignment is surjective, and thus $|Q| \geq |J|-1$ holds.
In other words, the number of vertices in the safra's tree is at most $|Q|+1$.
\end{Remark}

\begin{Definition}[\bf Index appearence record \cite{Kret2017}]\normalfont
A duplex $\langle \pi, \mathsf {col} \rangle$
is an \textit{index appearance record} for $\mathcal{BA}$ if:
\begin{enumerate}
  \item $\pi \in \Pi^{(|Q|+1)^{2}}$ is a permutation; that is, $\pi = (\pi(1), \dots, \pi((|Q|+1)^{2}))$.
  \item $\mathsf{col}: \{ 1, \dots, (|Q|+1)^{2} \} \rightarrow \{ \textrm{green, red, white, black} \}$
  is a colouring function. For a vertice
  $j \in \{ 1, \dots, (|Q|+1)^{2} \}$, we say that ``$j$ is colored red" or similar if $\mathsf{col}(j) = \textrm{red}$;
  and the same applies to other colors.
\end{enumerate}
Take an index appearence record $\langle \pi, \mathsf{col} \rangle$.
For any $n, m \in \{ 1, \dots, (|Q|+1)^{2} \}$, we say that $n$ is older than $m$
if $n$ is to the right of $m$
(i.e., $\pi^{-1}(m) < \pi^{-1}(n)$).

Let $\mathsf{s} = \langle J, C, r, l \rangle$ be a safra's tree for $\mathcal{BA}$,
and $\langle \pi, \mathsf{col} \rangle$ be an index appearence record for $\mathcal{BA}$.
We say that $j_{1}$ is $j_{2}$'s older brother if
$j_{1}, j_{2} \in J$ are siblings, and $j_{1}$ is older than $j_{2}$ in $\pi$.
\end{Definition}

From now on, we will construct a deterministic Rabin automaton
$\mathcal{RA} = \langle Q', \Sigma, q'_{0}, \Delta', \{ (A_{j}, R_{j}) \mid j \in J \}\rangle$
equivalent to the nondeterministic B\"uchi automaton $\mathcal{BA}$.

\begin{enumerate}
  \item Let $\mathsf{S}$ be the set of all safra's trees for $\mathcal{BA}$, and let $\mathsf{IAR}$
           be the set of index appearence record for $\mathcal{BA}$.
           Then, set $Q' := \mathsf{S} \times \mathsf{IAR}$.
  \item Set $q'_{0} := \langle \{ 1 \}, \emptyset, l_{0}, \pi_{0}, \mathsf{col}_{0} \rangle$; where
           $l_{0}(1) := \{ q_{0} \}$,
           $\pi_{0} := ((|Q|+1)^{2}, \dots, 3, 2, 1)$,
           moreover, we set $\mathsf{col}_{0}(1) = \textrm{white}$ and $\mathsf{col}_{0}(k) = \textrm{black}$ $(k \geq 2)$.
  \item The transition function $\Delta'$ will be described later.
  \item $J := \{ 1, \dots, (|Q|+1)^{2} \}$; that is, the set of indices is the pool of vertices of safra's tree.
           For any $j \in J$, $A_{j}$ is the set of states in which $j$ is shining green, and
           $R_{j}$ is the set of states in which $j$ is shining red.
\end{enumerate}

\begin{Remark}[\bf Intuitive meaning of the index appearence record]\normalfont
Before defining the transition function $\Delta'$, let us explain the intuitive meaning of the index appearence record.
Suppose $\mathcal{RA}$ is in state $\langle \mathsf{s}, \pi, \mathsf{col} \rangle$,
reads an alphabet, and transitions to state $\langle \mathsf{s'}, \pi', \mathsf{col'} \rangle$.
In this situation, $\mathsf{s'}$ is generated by adding new vertices to $\mathsf{s}$ and
removing unnecessary vertices.
The coloring function $\mathsf{col'}$ records the usage of each vertex $j$ in the transition from
$\mathsf{s}$ to $\mathsf{s'}$, and intuitively has the meanings shown in the table \ref{table: meaning of colors}.
\begin{table}[htb]
\caption{Intuitive meaning of colors.}
\label{table: meaning of colors}
  \begin{tabular}{|c|l|} \hline
    Coloring & Intuitive meaning  \\ \hline \hline
    $\mathsf{col'}(j) = \textrm{white}$ & $j$ is used for the vertex of $\mathsf{s'}$.  \\ \hline
    $\mathsf{col'}(j) = \textrm{green}$ & $j$ is used for the vertex of $\mathsf{s'}$; moreover,
      it is related to an acceptance condition.\\ \hline
    $\mathsf{col'}(j) = \textrm{black}$ & $j$ is not used for the vertex of $\mathsf{s'}$,
      and is waiting for reuse in the pool.\\ \hline
    $\mathsf{col'}(j) = \textrm{red}$ & $j$ is not used for the vertex of $\mathsf{s'}$,
      and was deleted in the latest transition. \\ \hline
  \end{tabular}
\end{table}
$\pi'$ represent not only the most recent transition, but also
the seniority-based relationships; that is, the farther to the right, the longer it has been used
as the vertex of the safra's tree. In particular, since the root $1$ is always the oldest,
position of $1$ is always on the far right side of $\pi$.
\end{Remark}
Now let's define the transition $\Delta'$ of $\mathcal{RA}$.
Suppose $\mathcal {RA}$ is in the state $\langle J, C, 1, l, \pi, \mathsf{col} \rangle$ and the alphabet
$a$ is readed. The next state is generated in the following 7 steps:
\begin{steps}
  \item {\bf Initialize index appearence record.}
  Let $\pi^{(1)}$ be a permutation obtained from $\pi$ by moving all indices shining in red to the front.
  For instance, suppose that $\mathsf{col}(\pi[3]) = \mathsf{col}(\pi[5]) = \textrm{red}$, then $\pi^{(1)}$
  is a permutation shown in Figure \ref{fig: initialize index appearence record};

  \begin{figure}[htbp]
    \centering
    \[
      \xymatrix@R=50pt{
        \pi: & \pi[1] \ar[drr] & \pi[2] \ar[drr] & \text{\mask{$\pi[3]$}{A}}_{\textrm{red}} \ar[dll] 
        & \pi[4] \ar[dr] & \text{\mask{$\pi[5]$}{A}}_{\textrm{red}} \ar[dlll] & \pi[6] \ar[d] & \pi[7] \ar[d]\\
        \pi^{(1)}: & \pi[3]             & \pi[5]            & \pi[1]            & \pi[2]            
        & \pi[4]            & \pi[6]           & \pi[7]
      }
    \]
    \caption{An example of initialize index appearence record.}
    \label{fig: initialize index appearence record}
  \end{figure}
  Also, the coloring function $\mathsf{col^{(1)}}$ is defined as follows:
  \[
    \mathsf{col^{(1)}}(j) := \begin{cases}
      \textrm{white} & (j \in J) \\
      \textrm{black} & (j \notin J)
    \end{cases}
  \]
  Then, update the state of $\mathcal{RA}$ to $\langle J, C, 1, l, \pi^{(1)}, \mathsf{col^{(1)}}\rangle$.

  \item {\bf Update of vertice labels.}
  Each vertex $j \in J$ is labeled with $l(j) \subseteq Q$.
  New labeling $l^{(1)}: J \rightarrow \mathcal{P}(Q)$ is defined below:
  \[
    l^{(1)}(j) := \Delta(l(j), a)
  \]
  That is, the label of each vertex is updated according to the transition function
  $\Delta$ of the original B\"{u}chi automaton $\mathcal{BA}$.
  Then, update the state of $\mathcal{RA}$ to $\langle J, C, 1, l^{(1)}, \pi^{(1)}, \mathsf{col^{(1)}} \rangle$.

  \item {\bf Add new children.}
  For each $j \in J$ and $q \in F$, add a new child $k$ of $j$ if $q \in l^{(1)} (j)$ where
  $k$ is the rightmost vertice black-colored in $\langle \pi^{(1)}, \mathsf{col^{(1)}}\rangle$.
  In this way, we extend $J$ to $J^{(1)}$ and $C$ to $C^{(1)}$.
  A label of new child $k$ is $\{q \}$, which extends $l^{(1)}$ to $l^{(2)}$.
  Also, let $\mathsf{col}^{(2)}$ be the coloring function that changed the color of the newly added $k$ from black to white.
  Then, update the state of $\mathcal{RA}$ to $\langle J^{(1)}, C^{(1)}, 1, l^{(2)}, \pi^{(1)}, \mathsf{col^{(2)}} \rangle$.

  \item {\bf Horizontal pruning.}
  We obtain a labeling $l^{(3)}$ from $l^{(2)}$ by removing, for every vertex $j \in J^{(1)}$
  with label $l^{(2)}(j)$ and all states $q \in l^{(2)}(j)$, $q$ from the labels of all younger siblings of $j$
  and all of their descendants.
  Then, update the state of $\mathcal{RA}$ to $\langle J^{(1)}, C^{(1)}, 1, l^{(3)}, \pi^{(1)}, \mathsf{col^{(2)}} \rangle$.
  Figure \ref{fig: an example of the horizontal pruning} is an example of the horizontal pruning.
  \begin{figure}[htbp]
    \centering
    \includegraphics[width=11.5cm]{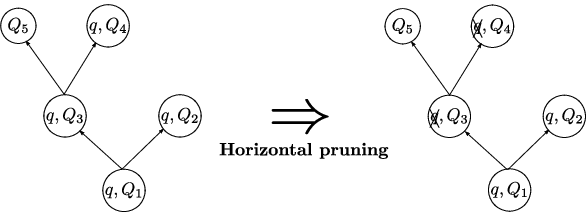}
    \caption{An example of horizontal pruning.}
    \label{fig: an example of the horizontal pruning}
  \end{figure}

  \item {\bf Vertical pruning.}
  For every $j \in J^{(1)} \setminus\{1 \}$, if $l^{(3)}(j) = \bigcup_ {k \in C(j)}l^{(3)}(k)$, then remove all descendants of $j$
  from the Safra's tree.
  In this way, $J^{(1)}$ is reduced to $J^{(2)}$. Similarly, $C^{(1)}$ is reduced to $C^{(2)}$, $l^{(3)}$ is reduced to $l^{(4)}$.
  In addition, we changes the color of $j$ from white to green and the color of the deleted descendant $k$ from
  white to red. This coloring function is denoted by $\mathsf{col}^{(3)}$.
  Then, update the state of $\mathcal{RA}$ to $\langle J^{(2)}, C^{(2)}, 1, l^{(4)}, \pi^{(1)}, \mathsf{col^{(3)}} \rangle$.

  \item {\bf Removing vertices with empty label.}
  For each $j \in J^{(2)}$, if $l^{(3)}(j) = \emptyset$, then remove $j$ from the safra's tree.
  Thus, $J^ {(2)}$ is reduce to $J^{(3)}$.
  Similarly, $C^{(2)}$ is reduced to $C^{(3)}$, $l^{(4)}$ is reduced to $l^{(5)}$.
  In addition, we change the color of the deleted vertex $j$ from white to red.
  Let this coloring function be $\mathsf{col}^{(4)}$.
  Then, update the state of $\mathcal{RA}$ to $\langle J^{(3)}, C^{(3)}, 1, l^{(5)}, \pi^{(1)}, \mathsf{col}^{(4)} \rangle$.

  \item {\bf We have a new state.}
  The state obtained through the above operation is the next states of $\mathcal{RA}$.
  In other words, $\Delta'$ is defined as,
  \[  
    \Delta'(\langle J, C, 1, l, \pi, \mathsf{col} \rangle, a) := 
    \langle J^{(3)}, C^{(3)}, 1, l^{(5)}, \pi^{(1)}, \mathsf{col}^{(4)} \rangle.
  \]
\end{steps}

\begin{Remark}\label{rem: node}\normalfont
From the definition of $\Delta'$, it becomes clear why the number of elements of the pool for the vertex
of the safra's tree is $(|Q|+1)^{2}$.
As is mentioned in Remark \ref{rem: safra 1}, the number of vertices in the safra's tree is at most $|Q|+1$.
{\bf Add new children} may add a new vertex $k$ for every $j \in J$ and $q \in F$, that is,
up to $(|Q|+1)\cdot|Q| (\geq (|Q|+1)\cdot|F|)$ vertices $k$ may be added.
That is, to implement {\bf add new children}, temporarily at the maximum,
$|Q|+1 + (|Q|+1)\cdot|Q| = (|Q|+1)^{2}$ vertices need to be prepared.
\end{Remark}

We prove that $\mathcal{RA}$ is equivalent to $\mathcal{BA}$ by the following two lemmas.

\begin{Lemma}\label{lem: safra01}
For any $\omega$-word $\sigma \in \Sigma^{\omega}$,
if $\sigma \in \mathcal{L}(\mathcal{BA})$ then $\sigma \in \mathcal{L}(\mathcal{RA})$.
\end{Lemma}

\begin{proof}
Suppose that $\omega$-word $\sigma \in \Sigma^{\omega}$ is accepted by $\mathcal {BA}$; therefore
there exists a run $\rho$ of $\mathcal{BA}$ on $\sigma$, and $\rho$ satisfies B\"{u}chi's acceptance condition.
That is, a certain final state $q \in F$ exists such that $q \in \mathsf{Infinite}(\rho)$.
Let $\xi$ be the run of $\mathcal{RA}$ on $\sigma$.
For each  $n \geq 1$, set $\xi[n] := \langle J_{n}, C_{n}, 1, l_{n}, \pi_{n}, \mathsf{col_{n}} \rangle$.
In this situation, the definition of Safra's construction shows that for any $n \geq 1$, it becomes $\rho[n] \in l_{n}(1)$.
For a $n \geq 2$ and a vertex  $j \in \{ 1, \dots, (|Q|+1)^{2} \}$,
when $\rho[n-1] \in l_{n-1}(j)$ and $\rho[n] \notin l_{n}(j)$,
we say ``$\rho$ disappears from vertex $j$ on $n$-th transition".
Note that from the definition of the transition function $\Delta'$,
we can assume that $\rho$ disappears from the vertex $j$ in the $n$-th transition only if
either of the following two holds;
\begin{description}
  \item[(Case 1)] In the $n$-th run, $\rho[n]$ moved (joined) to $j$'s older brother by {\bf horizontal pruning}.
  \item[(Case 2)] In the $n$-th run, $j$ itself was deleted by {\bf vertical pruning} (an ancestor of $j$ glowed green).
\end{description}
Note that a certain moment $n_{0}$ exists
and that $\rho$ always appears in one of the children of route $1$ in the $n$-th runs with $\forall n \geq n_{0}$;
because $1$ does not glow green, thus (case 1) cannot occur,
and a final states $q$ appears infinitely often in $\rho$.
From that moment, $\rho$ can only finitely many times move to older brother by {\bf horizontal pruning}.
Let $j_{1} \in C_{n_{0}}(1)$ be the destination where $\rho$ finally moved by {\bf horizontal pruning}.
If $j_{1}$ lights green infinitelly often, then we are done.
Otherwise, a moment $n_{1}$ exists
and that $\rho$ always appears in one of the children of $j_{1}$ in the $n$-th runs with $\forall n \geq n_{1}$.
From that moment, $\rho$ can only finitely many times move to older brother by {\bf horizontal pruning}.
Let $j_{2} \in C_{n_{0}}(j_{1})$ be the destination where $\rho$ finally moved by {\bf horizontal pruning}.
If $j_{2}$ lights green infinitelly often, then we are done.
Otherwise, we repeat the reasoning and find a son of $j_{1}$, $j_{2}$ an so on. Observe that we cannot go this
way forever because safra's tree has a bounded size.
Therefore, there exists a vertex $j_{i}$ ($i \geq 1$) such that $j_{i}$ is deleted only a finite times
and shines green infinitely often.
\end{proof}

\begin{Lemma}\label{lem: safra02}
For any $\omega$-word $\sigma \in \Sigma^{\omega}$,
if $\sigma \in \mathcal{L}(\mathcal{RA})$ then $\sigma \in \mathcal{L}(\mathcal{BA})$.
\end{Lemma}

\begin{proof}
Let $\sigma \in \Sigma^{\omega}$.
Let $\xi$ be the run of $\mathcal{RA}$ on $\sigma$, and
set $\xi[n] = \langle J_{n}, C_{n}, 1, l_{n}, \pi_{n}, \mathsf{col}_{n} \rangle$.
First, suppose that $1 < N < M$ and the vertex $j \in \{ 1, \dots, (|Q|+1)^{2} \}$ satisfy the following two conditions.
\begin{enumerate}
  \item $j$ is always used as a vertex between $\xi[N]$ and $\xi[M]$.
   In other words, $N \leq \forall k \leq M$, $\mathsf{col}_{k}(j)\neq \mathrm{red}$.
  \item $\mathsf{col}_{N}(j) = \mathsf{col}_{M}(j) = \mathrm{green}$, and $N < \forall k < M$, $\mathsf{col}_{k}(j) \neq \mathrm{green}$.
\end{enumerate}
From the definition of $\mathcal{RA}$, for any $q_{M} \in l_{M}(j)$, there exist $q_{N} \in l_{N}(j)$ and a sequence
$q_{N}q_{N+1}\dots q_{M}$ such that:
\[
  N \leq \forall k \leq M-1, \; q_{k+1} \in \Delta(q_{k}, \sigma[k]).
\]
Such a sequence $q_{N}q_{N+1}\dots q_{M}$ is called
``$\Delta$-path from $q_ {N}$ to $q_ {M}$". 
Then, the following claim holds.
\begin{description}
  \item[\bf$(\dag)$]
  For any $q_{N}\in l_{N}(j)$ and $q_{M}\in l_{M}(j)$, if there is a $\Delta$-path from $q_{N}$ to $q_{M}$,
  then at least one of them (let's denote this $q_{N}q_{N+1}\dots q_{M}$) intersects $F$.
  That is, there exists a $L$ with $N \leq L \leq M$ such that $q_{L} \in F$.
\end{description}
The above claim is proved by induction, but we leave it as an exercise for the reader.

Now suppose $\sigma$ is accepted by $\mathcal{RA}$.
Thus, there is a run $\xi$ of $\mathcal{RA}$ on $\sigma$, and $\xi$ satisfies Rabin's acceptance condition;
that is, there is a $j \in \{1, \dots, (|Q|+1)^{2}\}$ which meets the following conditions:
\begin{itemize}
  \item We can take $N_ {0}>1$, and $j$ is always used in the transitions after the $N_ {0}$-th transition.
  \item There exist $N_{0} < N_{1} < \dots < N_{k} < \cdots$ with $(k \in \omega)$ such that
  $\mathsf{col}_{N_{k}}(j) = \mathsf{col}_{N_{k+1}}(j) = \mathrm{green}$, and
  $N_{i} < \forall k < N_{i+1}$, $\mathsf{col}_{k}(j) \neq \mathrm{green}$.
\end{itemize}
Let $\rho$ be a run of $\mathcal{BA}$ on $\sigma$
such that $\forall n > N_{0}$, $\rho[n] \in l_{n}(j)$.
In general, $\rho$ does not always satisfy the B\"{u}chi's acceptance condition,
but from Claim $(\dag)$, the following holds:
\begin{description}
  \item[\bf ($\ddag$)]
  There is a $\Delta$-path from $\rho[N_{i}]$ to $\rho[N_{i+1}]$ which intersects $F$
  (let's denote this $q_{N_{i}}q_{N_{i}+1}\dots q_{N_{i+1}}$).
\end{description}
Let $\rho'$ be the $\Delta$-sequence in which a part  $\rho[N_{i}]$ to $\rho[N_{i+1}]$ of sequence $\rho$
is replaced with another sequence $q_{N_{i}}q_{N_{i}+1}\dots q_{N_{i+1}}$ for every $i \geq 1$.
Then, from Claim $(\ddag)$, $\rho'$ satisfy the B\"{u}chi's acceptance condition, therefore
$\sigma \in \mathcal{L}(\mathcal{BA})$.
\end{proof} 

From Lemma \ref{lem: safra01} and Lemma \ref{lem: safra02}, we obtain the following theorem.

\begin{Theorem}[\bf Safra's construction \cite{safra1988}]\label{the: safra}
For any nondeterministic B\"{u}chi automaton $\mathcal{BA}$,
the deterministic Rabin automaton $\mathcal{RA}$ generated by Safra's construction
satisfies $\mathcal{L}(\mathcal{BA}) = \mathcal{L}(\mathcal{RA})$.
\end{Theorem}

\subsection{Conversion to parity automaton}\label{subsec: conversion to parity automaton}
Suppose that an arbitrarily nondeterministic B\"uchi automaton $\mathcal{BA}$ is given.
In this subsection,
we will construct an equivalent deterministic parity automaton $\mathcal{PA}$.
In fact, most of the content to be
discussed is completed in the previous subsection \ref{subsec: safra's construction}.

Let $\langle \pi, \mathsf{col} \rangle$ be an index appearence record for $\mathcal{BA}$.
Set
\[
  \mathsf{maxind}(\pi, \mathsf{col}) :=
  \max \{ \pi^{-1}(j) \mid \mathsf{col}(j) \in \{ \mathrm{green}, \mathrm{red} \} \} \cup \{ 0 \}.
\]
In other words, among the vertices colored in either green or red in $\pi$,
the position of the vertex on the far right of these is $\mathsf{maxind}(\pi, \mathsf{col})$.
From this, we will concretely build the deterministic Parity automaton
$\mathcal{PA} = \langle Q', \Sigma, q'_{0}, \Delta', \mathsf{pri} \rangle$ as follows:
\begin{enumerate}
  \item
  $Q'$, $q'_ {0}$, and $\Delta'$ are exactly the same as
  the Rabin automaton defined in Safra's construction.
  \item The priority function is defined as follows:
  \[
    \mathsf{pri}\big(\langle \mathsf{s}, \pi, \mathsf{col} \rangle\big) := \begin{cases}
      1 & \big(\mathsf{maxind}(\pi, \mathsf{col}) = 0\big) \\
      2\cdot\mathsf{maxind}(\pi, \mathsf{col}) &
        \big(\mathsf{col}(\pi[\mathsf{maxind}(\pi, \mathsf{col})]) = \mathrm{green}\big)\\
      2\cdot\mathsf{maxind}(\pi, \mathsf{col}) + 1&
        \big(\mathsf{col}(\pi[\mathsf{maxind}(\pi, \mathsf{col})]) = \mathrm{red}\big)
    \end{cases}
  \]
\end{enumerate}

\begin{Theorem}[\bf Legitimacy of $\mathcal{PA}$ \cite{Kret2017}]
For any nondeterministic B\"{u}chi automaton $\mathcal{BA}$,
the deterministic parity automaton $\mathcal{PA}$ generated by construction shown above
satisfies $\mathcal{L}(\mathcal{BA}) = \mathcal{L}(\mathcal{PA})$.
\end{Theorem}

\begin{proof}
Let $\mathcal{RA}$ be the Rabin automaton constructed by Safra's construction.
From Theorem \ref{the: safra}, it is enough to show that $\mathcal{L}(\mathcal{RA}) = \mathcal{L}(\mathcal{PA})$.
Take an $\omega$-word $\sigma \in \Sigma^{\omega}$ arbitrarily.
Let $\xi$ be the run of $\mathcal{PA}$ on $\sigma$.
Note that $\xi$ is also a run of $\mathcal{RA}$.

First, note that the position of any vertex $j \in \{ 1, \dots, (|Q|+1)^{2} \}$ only changes in two
different ways:
\begin{itemize}
  \item $j$ itself is removed from the safra's tree and driven to the far left in {\bf Initialize index appearence record}.
  In this case, we say that $j$ was \textit{demoted} in the transition.
  \item Some $k \in \{ 1, \dots, (|Q|+1)^{2} \}$ with a position older than $j$ has
  been removed (demoted), increasing the position of $j$.
  In this case, we say that $j$ was \textit{promoted} in the transition.
\end{itemize}
Set $\xi[n] = \langle J_{n}, C_{n}, 1, l_{n}, \pi_{n}, \mathsf{col_{n}} \rangle$ for $n \geq 1$.
Suppose that a vertex $j \in \{ 1, \dots, (|Q|+1)^{2} \}$ and a natural number $N \geq 1$
satisfy $\forall n \geq N$, $j \in J_{n}$.
In this situation, $j$ will not be demoted in the $N$-th and subsequent transitions,
and promotion can be done only finitely many times, so if a sufficiently large $M> N$ is taken,
then $j$ will not be demoted nor promoted in the $M$-th and subsequent transitions.
The position of $j$ when it is no longer demoted nor promoted $\pi^{-1}_{M}(j)$ is called the
\textit{stable position} of $ j $ in the run $\xi$ (notated as $\mathsf{stable}_{\xi}(j)$).

It is obvious from the definition of the priority function $\mathsf{pri}$ that
if $\xi$ satisfies the parity condition, then $\xi$ also satisfies the Rabin's acceptance condition.
On the contrary, if $\xi$ satisfies Rabin's acceptance condition,
then there exists some $j \in \{1, \dots, (|Q|+1)^{2}\}$ such that
\[ A_{j} \cap \mathsf{Infinite}(\xi) \neq \emptyset = R_{j} \cap \mathsf{Infinite}(\xi).\]
Let $k$ be the vertex with the largest stable position among such $j$s.
In the transition well ahead, $k$ is in a stable position, colored green infinitely often, and
the elders of $k$ are not colored red ($\because$ if the elders of $k$ are removed, $k$ is promoted).
Therefore, we have 
\[
  \max \{ \mathsf{pri}(q) \mid q \in \mathsf{Infinite}(\xi) \} = 2\cdot\mathsf{stable}_{\xi}(k),
\]
that is, $\xi$ satisfy the parity condition.
Hence, $\mathcal{L}(\mathcal{RA}) = \mathcal{L}(\mathcal{PA})$ holds.
\end{proof}

\section{Application of automata to the modal $\mu$-calculous}\label{sec: application}
In this section, we apply the results of Section \ref{sec: automata} to the modal $\mu$-calculous to prove two important
results. First, in Subsection \ref{subsec: automata that determines}, we give an automaton
that determines the parity of the tableau branch.
In the following Subsection \ref{subsec: completeness for thin}, this automaton is used to prove
completeness of $\mathsf{Koz}$ for the thin refutation; which
is Claim (f) mentioned in Section \ref{sec: introduction}.
In the last Subsection \ref{subsec: automaton normal form}, the proof of the existence of the automaton normal form
(Claim (b)) is proved along with the concrete construction method.

\subsection{Automata that determines the parity of tableau branches}\label{subsec: automata that determines}

\begin{Definition}[\bf Activeness]\label{def: activeness}\normalfont
Let $\varphi$ be a well-named formula, and $\preceq_{\varphi}$ be its dependency order
(recall Definition \ref{def: alternation depth}). Then,
For any $\psi \in \mathsf{Sub}(\varphi)$ and $x \in \mathsf{Bound}(\varphi)$, we say $x$ is
\textit{active} in $\psi$ if there exists $y \in \mathsf{Sub}(\psi)\cap\mathsf{Bound}(\varphi)$
such that $x \preceq_{\varphi} y$.
\end{Definition}
Suppose that a well-named formula $\varphi$ is arbitrarily given.
From now on, we will construct a nondeterministic B\"uchi automaton
$\mathcal{BA}_{\varphi} = \langle Q, \Sigma, q_{0}, \Delta, F\rangle$
which determine the parity of the tableau branch for $\varphi$.
The letter handled by the automaton $\mathcal{BA}_{\varphi}$ is
subset $\Gamma \subseteq \mathsf{Sub}(\varphi)$, therefore $\Sigma := \mathcal{P}(\mathsf{Sub}(\varphi))$.
The state $q \in Q$ is of the form $q = (\Gamma, \gamma) \in \mathcal{P}(\mathsf{Sub}(\varphi))\times\mathsf{Sub}(\varphi)$
or $q = (\Gamma, \gamma, x) \in \mathcal{P}(\mathsf{Sub}(\varphi))\times\mathsf{Sub}(\varphi)\times\mathsf{Bound}(\varphi)$
which satisfies the following three conditions:
\begin{enumerate}
  \item $\gamma \in \Gamma$
  \item $x$ is active in $\gamma$.
  \item $x$ is $\mu$-variable in $\varphi$. That is, $\mu x.\varphi_{x}(x) \in \mathsf{Sub}(\varphi)$.
\end{enumerate}
The initial state is $(\{\varphi\}, \varphi)$. The transition function $\Delta$ is defined as follows:
\begin{align*}
 \Delta\big((\Gamma, \gamma), \Gamma' \big)&:= \{ (\Gamma', \gamma'), (\Gamma', \gamma', x) \mid
  \gamma' \in \mathsf{TR}_{\Gamma, \Gamma'}(\gamma), \; x \in \mathsf{Bound}(\varphi)\},\\
 \Delta\big((\Gamma, \gamma, x), \Gamma' \big)&:= \{ (\Gamma', \gamma', x) \mid
  \gamma' \in \mathsf{TR}_{\Gamma, \Gamma'}(\gamma)\}.
\end{align*}
Finally, the final state is defined as $F := \{ (\Gamma, x, x)  \in Q \mid x \in \mathsf{Bound}(\varphi) \}$.
$\mathcal{BA}_{\varphi}$ embodies a naive way to seek $\mu$-trace non-deterministically.
Indeed, let
\[
  \xi = (\Gamma_{1}, \gamma_{1})\dots(\Gamma_{k}, \gamma_{k})(\Gamma_{k+1}, \gamma_{k+1}, x)(\Gamma_{k+2}, \gamma_{k+2}, x)\dots
\]
be a run of $\mathcal{BA}_{\varphi}$ on $\vec{L}(\rho)$ where $\rho$ is an infinite branch of a tableau $\mathcal{T}_{\varphi}$. 
Then, from the definition of $\mathcal{BA}_{\varphi}$, it can be seen that 
$\vec{L}(\rho) = \Gamma_{1}\Gamma_{2}\dots$ and that $\gamma_{1}\gamma_{2}\dots$ is a trace on $\vec{L}(\rho)$.
In short, a run picked a specific trace $\gamma_{1}\gamma_{2}\dots$ from multiple traces on $\rho$.
The intuitive meaning of transitioning from $(\Gamma_{k}, \gamma_{k})$ to $(\Gamma_{k+1}, \gamma_{k+1}, x)$ is,
\begin{itemize}
  \item[$(\star)$]
  $x$ is a variable such that the value of $\Omega_{\varphi}(x)$ is maximized in the $(k+1)$-th and subsequent transitions;
  and that regenerated infinitely often.
\end{itemize}
Indeed, for any $y \in \mathsf{Bound}(\varphi)$, if $\Omega_{\varphi}(y) > \Omega_{\varphi}(x)$, then
$(\Gamma, y, x)$ cannot be a states of automaton because $y$ is not active in $x$.
Therefore, $y$, which has a higher priority than $x$, does not appear in the traces $\gamma_{k+1}\gamma_{k+2}\dots$.
In addition, if $\rho$ is accepted, a states in the form of $(\Gamma, x, x)$ must appear infinitely often in $\rho$.
This means that $x$ will be regenerated infinitely often in the trace $\gamma_{1}\gamma_{2}\dots$.
Therefore, Claim $(\star)$ agrees that the automaton accepts $\xi$.
From the above, $\mathcal{BA}_{\varphi}$ certainly determines the parity of tableau branches.

Let $\mathcal{PA}_{\varphi}$ be the parity automaton which is converted from
$\mathcal{BA}_{\varphi}$ by the method introduced in Subsection \ref{subsec: conversion to parity automaton}.
Then $\mathcal{PA}_{\varphi}$ becomes a deterministic parity automaton that determines the parity of tableau branches.
Hereinafter, We denote $N_{\varphi} := (|Q|+1)^{2}$; where $Q$ is a set of states of $\mathcal{BA}_{\varphi}$.

\begin{Remark}\label{rem: duplication}\normalfont
Let $\mathcal{PA}_{\varphi} = \langle Q, \Sigma, q_{0}, \Delta, \mathsf{pri} \rangle$ be the parity automaton
given above.
Let $\mathcal{T}_{\varphi} = (T, C, r, L)$ be a tableau for $\varphi$.
For any node $t \in T$, set
\[
\Delta(\vec{L}(\vec{rt}), q_{0}) = \langle J_{t}, C_{t}, 1, l_{t}, \pi_{t}, \mathsf{col}_{t} \rangle.
\]
Then, if $(\Gamma, \gamma) \in l_{t}(1)$ or $(\Gamma, \gamma, x) \in l_{t}(1)$, then $\Gamma = L(t)$ holds.
Moreover, $\{ \gamma \mid (\Gamma, \gamma) \in l_{t}(1) \} = L(t)$ holds.
What this means is that there is duplication of information in the first and second quadrants
of label elements of the safra's tree .
With this in mind, we can omitt the first quadrant, hence each vertex of the safra's tree is labeld by
$\mathsf{Sub}(\varphi)\cup \big(\mathsf{Sub}(\varphi)\times\mathsf{Bound}(\varphi)\big)$.
In this article, we will think so in the following.
In other words, the label of the safra's tree is considered to be in the shape of
\[ \{ \gamma_{1}, \dots, \gamma_{j} \} \cup \{ (\gamma'_{1}, x_{1}), \dots, (\gamma'_{k}, x_{k})\} \quad (j, k \geq 0). \]
\end{Remark}

\subsection{Completeness for thin refutation}\label{subsec: completeness for thin}
In this subsection, the completeness of $\mathsf{Koz}$ will be proven when $\varphi$ has thin refutation.
The parity automaton $\mathcal{PA}_{\varphi}$ created in Subsection \ref{subsec: automata that determines}
is used for the proof.
First, we will define the concept of the thin refutation.

\begin{Definition}[\bf Thin refutation]\normalfont\label{def: thin refutation}
Let $\mathcal{R}_{\varphi}$ be a refutation for some well-named formula $\varphi$. We say that
$\mathcal{R}_{\varphi}$ is \textit{thin} if, whenever a formula of the form $\alpha\wedge\beta$ is
reduced, some node of the refutation and some variable is active in $\alpha$ as well as $\beta$, then
at least one of $\alpha$ and $\beta$ is immediately discarded by using the $(\mathsf{Weak})$-rule.
\end{Definition}

\begin{Remark}\normalfont
Let $\mathcal{R}_{\varphi} = (T, C, r, L)$ be a thin refutation.
Let $\mathcal{PA}_{\varphi}$ be a parity automaton that determines the parity of tableau branches.
For any node $t \in T$, set $\Delta(\vec{L}(\vec{rt}), q_{0}) = \langle J_{t}, C_{t}, 1, l_{t}, \pi_{t}, \mathsf{col}_{t}\rangle$;
then it has the following distinctive characteristics:
\begin{itemize}
  \item
  For non-root vertices $k \in \{2, 3, \dots, N_{\varphi}\}$, $k$ is labeled with elements in the form of $(\gamma, x)$.
  \item
  For any $k \in \{2, 3, \dots, N_{\varphi}\}$, $l_{t}(k)$ consists of at most two elements.
  \item
  For any $k \in \{2, 3, \dots, N_{\varphi}\}$, if $l_{t}(k)$ consists of two elements, then $t$ is a
  $(\mathsf{Weak})$-node, and one of them is discarded by $(\mathsf{Weak})$-rule
  in the transition between $t$ and its child.
\end{itemize}
In short, if $\mathcal{PA}_{\varphi}$ load $\vec{L}(\xi)$ where $\xi$ is a branch of thin refutation,
each vertex of the safra's tree (except the root) will be labeled with a single element in almost all cases.
\end{Remark}

\begin{Definition}[\bf Definition list]\normalfont
Let $\varphi$ be a well-named formula.
The sequence $(x_{1}, x_{2}, \dots x_{N})$ is a linear ordering of all bound variables of $\varphi$
which is compatible with dependency order, i.e., if $x_{i} \preceq_{\varphi} x_{j}$ then $i \leq j$.
A definition list $\mathcal{D}_{\varphi}$ is a following format:
\[
  \mathcal{D}_{\varphi} := \big( x_{1} = \eta_{1}x_{1}.\alpha_{1}(x_{1}),\dots, x_{N} = \eta_{N}x_{N}.\alpha_{N}(x_{N}) \big);
\]
where for any $k \leq N$, 
$\eta_{k}x_{k}.\alpha_{k}(x_{k}) = \eta_{x_{k}}x_{k}.\varphi_{x_{k}}(x_{k})$, or
$\eta_{k}x_{k}.\alpha_{k}(x_{k}) = \eta_{x_{k}}x_{k}.(\beta\wedge\varphi_{x_{k}}(x_{k}))$ holds with
$\beta$ is arbitrary formula such that $x_{k} \notin \mathsf{Free}(\beta)$.
Especially, we call
\[
  \big( x_{1} = \eta_{1}x_{1}.\varphi_{x_{1}}(x_{1}),\dots, x_{N} = \eta_{N}x_{N}.\varphi_{x_{N}}(x_{N}) \big)
\]
a plain definition list.
For any $\beta \in \mathsf{Sub}(\varphi)$, we define a expantion
$\expa{\beta}_{\mathcal{D}_{\varphi}}$ of 
$\beta$ by $\mathcal{D}_{\varphi}$ as follows:
\[
  \langle\![ \beta ]\!\rangle_{\mathcal{D}_{\varphi}} :=
  \beta[x_{N}/\eta_{N}x_{N}.\alpha_{N}(x_{N})]\dots[x_{1}/\eta_{1}x_{1}.\alpha_{1}(x_{1})].
\]
In addition, for $\Gamma \subseteq \mathsf{Sub}(\varphi)$, we set $\expa{\Gamma}_{\mathcal{D}_{\varphi}} :=
\{ \expa{\gamma}_{\mathcal{D}_{\varphi}} \mid \gamma \in \Gamma \}$.
In the following, if the $\varphi$ being discussed is clear from the context, it may be represented by
$\mathcal{D}$, omitting the subscript of $\mathcal{D}_{\varphi}$.
\end{Definition}

\begin{Theorem}[\bf Completeness for formulas in which the thin refutation exists]\label{the: thin refutation}
Let $\varphi$ be a well-named formula. If there exists a thin refutation for $\varphi$, then
$\sim\!\varphi$ is probable in $\mathsf{Koz}$.
\end{Theorem}

\begin{proof}
Let $\mathcal{R}_{\varphi} = (T, C, r, L)$ be a thin refutation for $\varphi$.
Our goal is to show that there exists label $f: T \rightarrow \mathcal{P}(\mathsf{Form})$ satisfies
the following conditions:
\begin{enumerate}
  \item[(C1)]
  $f(r) = \{ \varphi \}$.

  \item[(C2)]
  For any $t \in T$ and its children $u_{1}, \dots, u_{k} \in C(t)$,
  $$
    \infer
    {f(t)}
    {f(u_{1}) \;\mid \dots \mid \; f(u_{k})}
  $$
  can be simulated with axiomatic system $\mathsf{Koz}$.
  That is, if $f(u_{i})\vdash_{\mathsf{Koz}}$ for every $i \;(1 \leq i \leq k)$, then $f(t)\vdash_{\mathsf{Koz}}$.

  \item[(C3)]
  For any leaf $t$ of $\mathcal{R}_{\varphi}$, $f(t)$ is inconsistent; thus $f(t) \vdash_{\mathsf{Koz}}$ holds.

  \item[(C4)]
  For any infinite branch $\xi$ of $\mathcal{R}_{\varphi}$,
  there exists a node $t$ on $\xi$ such that $f(t) = \{ \Gamma, \; \mu x.\big(\bigwedge\!\sim\!\Gamma\wedge\alpha(x)\big) \}$
  for some $\alpha(x), \beta \in \mathsf{Form}$.
  Thus $f(t) \vdash_{\mathsf{Koz}}$ holds.
\end{enumerate}
It is clear that the theorem holds if the above $f$ can be defined.
As is mentioned in Remark \ref{rem: thin}, in constructing $f$,
we must carefully apply inference rules $(\mathsf{Record})$ and $(\mathsf{Forget})$.
Figure \ref{fig: the idea of construction of function f} illustrates the idea of how to construct the function $f$.
\begin{figure}[htbp]
  \centering
  \includegraphics[width=15.5cm]{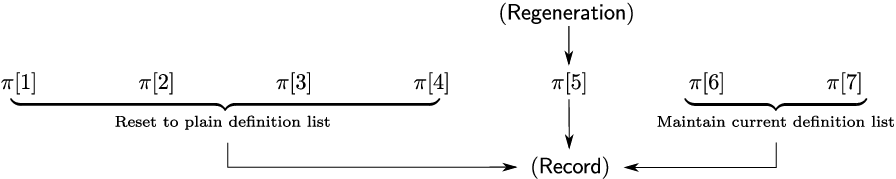}
  \caption{The idea of construction of function $f$.}
  \label{fig: the idea of construction of function f}
\end{figure}
Let $j$ be a vertex where a formula belonging to $j$ is reduced by $(\mathsf{Regeneration})$-rule.
The basic strategy is to apply $(\mathsf{Forget})$ at the vertex on the left side of  $j$,
and to apply $(\mathsf{Record})$ at $j$.

For any node $t \in T$, set
\[
  \Delta_{\varphi}(\vec{L}(\vec{rt}), q_{0}) = q_{t} = \langle J_{t}, C_{t}, 1, l_{t}, \pi_{t}, \mathsf{col}_{t} \rangle.
\]
From now, for each node $t \in T$ and vertex $j \in \{ 1, \dots, N_{\varphi} \}$,
the definition list $\mathcal{D}_{t, j}$ is inductively defined from the root of the tree toward the leaves.
First, set $\mathcal{D}_{r, j}$ $(j \geq 1)$ be a plain definition list.
Next, assuming that $\mathcal{D}_{t, j}$ $(t \in T, j \leq N_{\varphi})$ is already defined,
for each $t' \in C(t)$ and $j' \in \{ 1, \dots, N_{\varphi} \}$, $\mathcal{D}_{t', j'}$ is defined by case as follows:
\begin{description}
  \item[(Case 1)] $t$ is not a $(\mathsf{Regeneration})$-node:
  For the vertex $j'$ that is deleted in the transition from $q_{t}$ to $q_{t'}$, $\mathcal{D}_{t', j'}$ is a plain definition list.
  For other $j'$, set $\mathcal{D}_{t', j'} := \mathcal{D}_{t, j'}$;
  that is, it inherits the same definition list from the same vertex of the parent node.

  \item[(Case 2)] $t$ is a $(\mathsf{Regeneration})$-node: 
  Suppose the following inferences are made between $t$ and $t'$:
  $$
    \infer[(\mathsf{Regeneration})]{x, \Gamma}
    {\varphi_{x}(x), \Gamma}
  $$
  If $x$ is a $\nu$-variable, then $\mathcal{D}_{t', k'}$ is similar to (Case 1).
  If $x$ is a $\mu$-variable, and there does \textit{not} exist the vertex $k \in \{1, \dots, N_{\varphi}\}$
  such that $l_{t}(k) = \{ (x, x) \}$, then $\mathcal{D}_{t', k'}$ is also similar to (Case 1).
  If $x$ is a $\mu$-variable, and there exists the vertex $k \in \{1, \dots, N_{\varphi}\}$
  such that $l_{t}(k) = \{ (x, x) \}$, then
  \begin{itemize}
    \item
    Let $\mathcal{D}_{t', k'} := \mathcal{D}_{t, k'}$ for $k'$ on the right side of $k$ in $\pi$.
    That is, for $k'$ older than $k$, the same definition list is inherited from the same vertex of the parent node.

    \item
    For $k'$ on the left side of $k$ in $\pi$, $\mathcal{D}_{t', k'}$ is a plain definition list.

    \item
    The definition list $\mathcal{D}_{t', k}$ is obtained by replacing the definition of $x$ in definition list
    $\mathcal{D}_{t, k}$ with
    $x = \mu x. \big(\!\sim\!\expa{\Gamma} \wedge \varphi_{x}(x)\big)$; where
    $\expa{\Gamma} :=
    \{ \expa{\gamma}_{\mathcal{D}_{t', k'}} \mid \gamma \in l_{t'}(k'), \; k' \in J_{t'}\setminus\{k\}.\}$.
  \end{itemize}
\end{description}
Set $f(t) := \bigcup_{j \in J_{t}} \expa{l_{t}(j)}_{\mathcal{D}_{t, j}}$, then $f$ is the function we seek.
Indeed, it is obvious that $f$ satisfies (C1), (C2), and (C3).
For (C4), take an infinite branch $\xi$ arbitrarily and consider the run
$\vec{q} := q_{\xi[1]}q_{\xi[2]}\dots$.
Since $\vec{L}(\xi)$ contains $\mu$-trace, $\vec{q}$ satisfies the parity condition;
therefore we can find a vertex $k \in \{ 1, \dots, N_{\varphi}\}$ and
$M > N > 1$ such that
\begin{itemize}
  \item $\max \{ \mathsf{pri}(q) \mid q \in \mathsf{Infinite}(\vec{q}) \} = 2\cdot \mathsf{stable}_{\vec{q}}(k)\equiv 0 \mod 2$.
  \item $\pi^{-1}_{\xi[N]}(k) = \pi^{-1}_{\xi[M]}(k) = \mathsf{stable}_{\vec{q}}(k)$.
  \item $\xi[N]$ and $\xi[M]$ are both $(\mathsf{Regeneration})$-nodes and regenerated $x$ in them where
  vertex $k$ is labeled by $\{ (x, x) \}$ in $\xi[N]$ and $\xi[M]$.
  \item For any $L$ $(N < L < M)$ and $k$'s older brother $j$ in $\xi[L]$, the label for $j$ is not regenerated.
  In other words, the definition list of $k$'s older brother does not change between $\xi[N]$ and $\xi[M]$.
\end{itemize}
From the above conditions, $\mathcal{D}_{\xi[N], j} = \mathcal{D}_{\xi[M], j}$ for any $j \leq N_{\varphi} \setminus \{ k \}$ and thus
$f(\xi[M]) = \{ \expa{\Gamma}, \mu x.\big(\!\sim\!\bigwedge\expa{\Gamma}\wedge\varphi_{x}(x)\big) \}$
with $\expa{\Gamma} =
    \{ \expa{\gamma}_{\mathcal{D}_{\xi[M], k'}} \mid \gamma \in l_{\xi[M]}(k'), \; k' \in J_{\xi[M]}\setminus\{k\}.\}$.
Therefore, it certainly satisfies (C4).
\end{proof}

\subsection{Automaton normal form}\label{subsec: automaton normal form}

\begin{Definition}[\bf Automaton normal form]\label{def: automaton normal form}\normalfont
The set of an \textit{automaton normal form} $\mathsf{ANF}$ is the smallest set of formulas defined by the
following clauses:

\begin{enumerate}
\item
  If $l_{1}, \dots, l_{i} \in \mathsf{Lit}$, then $\bigwedge_{1 \leq j \leq i} l_{j} \in \mathsf{ANF}$.
\item
  If $\alpha \vee \beta \in \mathsf{ANF}$,
  $\mathsf{Bound}(\alpha)\cap\mathsf{Free}(\beta) = \emptyset$
  and $\mathsf{Free}(\alpha)\cap\mathsf{Bound}(\beta) = \emptyset$,
  then $\alpha\vee\beta \in \mathsf{ANF}$.
\item
  If $\alpha(x) \in \mathsf{ANF}$ where $x$ occurs only positively in the scope of some modal
  operator (cover modality), occurs at once, and $\mathsf{Sub}(\alpha(x))$ does not contain a formula of the form
  $x\wedge\beta$ where $\beta \neq \top$. Then, $\eta x.\alpha(x) \in \mathsf{ANF}$.
\item
  If $\Phi \subseteq \mathsf{ANF}$ is a finite set such that for any $\varphi_{1}, \varphi_{2} \in \Phi$,
  we have
  $\mathsf{Bound}(\varphi_{1})\cap\mathsf{Free}(\varphi_{2}) = \emptyset$,
  then
  $(\triangledown \Phi) \wedge (\bigwedge_{1 \leq i \leq j}l_{i}) \in \mathsf{ANF}$
  where $l_{1}, \dots, l_{j} \in \mathsf{Lit}\setminus
  \bigcup_{\varphi \in \Phi}\mathsf{Bound}(\varphi)$ with
  $0 \leq j$.
\item
  If $\alpha \in \mathsf{ANF}$ then $\alpha \wedge \top \in \mathsf{ANF}$.
\end{enumerate}
Note that the above clauses imply $\mathsf{ANF} \subseteq \mathsf{WNF}$.
\end{Definition}

\begin{Remark}\label{rem: shape of anf}\normalfont
For any automaton normal form $\widehat{\varphi}$, a tableau
$\mathcal{T}_{\widehat{\varphi}} = (T, C, r, L)$
for $\widehat{\varphi}$ forms
very simple shapes. Indeed, for any node $t \in T$,
there exists at most one formula
$\widehat{\alpha} \in L(t)$ which includes some bound variables.
Note that for any infinite trace $\mathsf{tr}$, $\mathsf{tr}[n]$ must include some bound variables.
Consequently, for any infinite branch of the tableau for an automaton
normal form, there exists a unique trace on it.
\end{Remark}

\begin{Definition}[\bf Tableau bisimulation]\label{def: tableau bisimulation}\normalfont
Let $\mathcal{T}_{\alpha} = (T, C, r, L)$ and $\mathcal{T}_{\beta} = (T', C', r', L')$ be two tableaux for
some well-named formulas $\alpha$ and $\beta$. Let $T_{m}$ and $T'_{m}$
be sets of modal nodes of $\mathcal{T}_{\alpha}$ and $\mathcal{T}_{\beta}$, respectively, and let
$T_{c}$ and $T'_{c}$ be a set of choice nodes of $\mathcal{T}_{\alpha}$ and $\mathcal{T}_{\beta}$, 
respectively. Then $\mathcal{T}_{\alpha}$ and $\mathcal{T}_{\beta}$ are said to be
\textit{tableau bisimilar}
(notation: $\mathcal{T}_{\alpha}\rightleftharpoons\mathcal{T}_{\beta}$)
if there exists a binary relation $Z\subseteq (T_{m}\times T'_{m})\cup(T_{c}\times T'_{c})$
satisfying the following seven conditions:
\begin{description}
\item[Root condition:]
$(r, r') \in Z$.

\item[Prop condition:]
For any $t \in T_{m}$ and $t' \in T'_{m}$, if $(t, t') \in Z$, then
\[
  L(t)\cap\mathsf{Lit}(\alpha) = L'(t')\cap\mathsf{Lit}(\beta).
\]
Consequently $L(t)$ is consistent if and only if $L'(t')$ is consistent.

\item[Forth condition on modal nodes:]
  Take $t \in T_{m}$, $u \in T_{c}$ and $t' \in T'_{m}$ arbitrarily.
  If $(t, t') \in Z$ and $u \in C(t)$, then there exists $u' \in C'(t')$ such that $(u, u') \in Z$
  (see Figure \ref{fig: forth condition}).
  \begin{figure}[htbp]
    \centering
    \includegraphics[width=12cm]{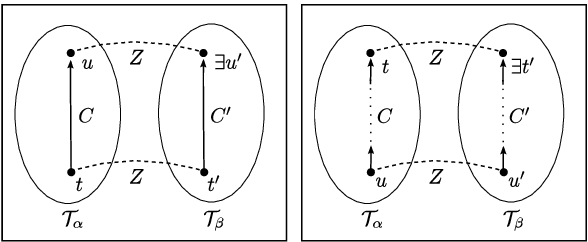}
    \caption{The forth conditions.}
    \label{fig: forth condition}
  \end{figure}

\item[Back condition on modal nodes:]
  The converse of the forth condition on modal nodes:
  Take $t \in T_{m}$, $t' \in T'_{m}$ and $u' \in T'_{c}$ arbitrarily.
  If $(t, t') \in Z$ and $u' \in C'(t')$, then there exists $u \in C(t)$ such that $(u, u') \in Z$.

\item[Forth condition on choice nodes:]
  Take $u \in T_{c}$, $t \in T_{m}$ and $u' \in T'_{c}$ arbitrarily.
  If $(u, u') \in Z$ and $t$ is near $u$, then there exists $t' \in T'_{m}$ such that $(t, t') \in Z$
  and $t'$ is near $u'$ (see Figure \ref{fig: forth condition}).

  \item[Back condition on choice nodes:]
  The converse of the forth condition on choice nodes:
  Take $u \in T_{c}$, $u' \in T'_{c}$ and $t' \in T'_{m}$ arbitrarily.
  If $(u, u') \in Z$ and $t'$ is near $u'$, then there exists $t \in T_{m}$ such that $(t, t') \in Z$
  and $t$ is near $u$.

  \item[Parity condition:]
  Let $\xi$ and $\xi'$ be infinite branches of $\mathcal{T}_{\alpha}$ and $\mathcal{T}_{\beta}$,
  respectively. We say that $\xi$ and $\xi'$ are \textit{associated} with each other if
  the $k$-th modal nodes $\xi[i_{k}]$ and $\xi'[i'_{k}]$ satisfy
  $(\xi[j_{k}], \xi'[j'_{k}]) \in Z$ for any $k \in \omega\setminus\{0\}$.
  For any $\xi$ and $\xi'$ which are associated with each other, we have $\xi$ is even if and only if
  $\xi'$ is even.
\end{description}
If $\mathcal{T}_{\alpha}$ and $\mathcal{T}_{\beta}$ are tableau bisimilar with $Z$, then $Z$ is
called a \textit{tableau bisimulation} from $\mathcal{T}_{\alpha}$ to $\mathcal{T}_{\beta}$.
\end{Definition}

We accept the following Lemma \ref{lem: basic properties of tableau bisimulation} without
proof (see, e.g., the literature \cite{DBLP:conf/dagstuhl/2001automata}).
\begin{Lemma}\label{lem: basic properties of tableau bisimulation}
Let $\alpha$, $\beta$ be well-named formulas.
If $\mathcal{T}_{\alpha}\rightleftharpoons\mathcal{T}_{\beta}$,
then
$\models \alpha \leftrightarrow \beta$.
\end{Lemma}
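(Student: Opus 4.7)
The plan is to show that for any Kripke model $\mathcal{S}$ and world $s \in S$, $\mathcal{S}, s \models \alpha$ iff $\mathcal{S}, s \models \beta$. Since $Z^{-1}$ is easily checked to be a tableau bisimulation from $\mathcal{T}_{\beta}$ to $\mathcal{T}_{\alpha}$, by symmetry it suffices to prove the forward implication. By Theorem~\ref{the: fundamental theorem}, this amounts to translating a winning strategy for Player~$0$ in $\mathcal{EG}(\mathcal{S}, s, \alpha)$ into one in $\mathcal{EG}(\mathcal{S}, s, \beta)$.

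The approach routes through the tableau games in three stages. First, starting from a memoryless winning strategy $f_{0}$ for Player~$0$ in $\mathcal{EG}(\mathcal{S}, s, \alpha)$, I would use the construction inside the proof of Lemma~\ref{lem: tableau game 01} to obtain a pair $(f_{2}, \mathsf{mark})$ where $f_{2}$ is a winning strategy for Player~$2$ in $\mathcal{TG}(\alpha)$ and $\mathsf{mark}: T \to S$ is a marking with $\mathsf{mark}(r) = s$ satisfying property $(\ddag)$ from that proof. Second, use the bisimulation $Z$ to transfer $(f_{2}, \mathsf{mark})$ to a pair $(f'_{2}, \mathsf{mark}')$ on $\mathcal{T}_{\beta}$: inductively maintain, alongside a play in $\mathcal{TG}(\beta)$, a \textit{shadow} play in $\mathcal{TG}(\alpha)$ whose current modal or choice node is $Z$-related to that of the $\beta$-play; the forth conditions on modal and choice nodes let Player~$2$'s moves in $\mathcal{TG}(\beta)$ mirror those of $f_{2}$ in the shadow, the back conditions let us update the shadow upon Player~$3$'s moves, and we set $\mathsf{mark}'(t') := \mathsf{mark}(t)$ at the current shadow $t$. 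Third, convert $(f'_{2}, \mathsf{mark}')$ into a winning strategy for Player~$0$ in $\mathcal{EG}(\mathcal{S}, s, \beta)$ via an adaptation of the proof of Lemma~\ref{lem: tableau game 02}: one keeps the given Kripke model $\mathcal{S}$ in place of constructing a fresh one from the winning tree, using $\mathsf{mark}'$ to designate witnessing worlds. Theorem~\ref{the: fundamental theorem} then delivers $\mathcal{S}, s \models \beta$.

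The main obstacle is the middle stage, specifically verifying that the transferred $f'_{2}$ is winning in $\mathcal{TG}(\beta)$. For finite plays this is straightforward: the prop condition ensures that the consistency of $L'(t')$ at a modal leaf $t'$ matches that of $L(t)$ at its shadow, so $f'_{2}$ never loses at an inconsistent leaf and the induced marking is compatible with $\mathcal{S}$'s valuation of literals. For infinite plays the issue is more delicate: any $f'_{2}$-consistent branch $\pi'$ of $\mathcal{T}_{\beta}$ is, by construction, associated with some $f_{2}$-consistent (hence even) branch $\pi$ of $\mathcal{T}_{\alpha}$, and the parity condition of $Z$ then forces $\pi'$ to be even as well. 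A subtle point is that between consecutive modal nodes the tableaux may use very different propositional and fixpoint decompositions, yet $Z$ relates only modal and choice nodes; one must rely on the choice-node conditions to re-synchronize the shadow at each next modal step, which is precisely what they were designed for.
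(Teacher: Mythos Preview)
Your three-stage plan captures the right intuition, but stage~3 has a genuine gap. The adaptation of Lemma~\ref{lem: tableau game 02} you sketch cannot run with $\mathsf{mark}'$ as a \emph{function} $T'\to S$. In that lemma the model is manufactured from the winning tree, so every $R$-successor of the current world \emph{is} the marking of some child of the current modal node; this is precisely what lets Player~$0$ update the token after a $\square$-move by Player~$1$ (case~\textbf{(C3)} in that proof). With the given model $\mathcal{S}$ kept in place, an arbitrary $R$-successor $s'$ of $\mathsf{mark}'(t')$ need not equal $\mathsf{mark}'(u')$ for any child $u'$ in the winning tree, and the token-tracking collapses. Repairing this forces you, at each such $\square$-move, to go back to $f_0$ and re-run the construction of Lemma~\ref{lem: tableau game 01} for the fresh world $s'$; the marking must then be a \emph{relation}, and the clean separation of stages~1--3 dissolves.

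The paper's proof is exactly this repair, packaged abstractly. It introduces a \emph{marking relation} $\mathsf{Mark}\subseteq T\times S$ between a tableau and a pointed model, axiomatized by root, literal, forth/back clauses on both modal and choice nodes, and a parity clause, and proves once---by the methods of Lemmas~\ref{lem: tableau game 01} and~\ref{lem: tableau game 02}---that $\mathcal{S},s_0\models\varphi$ iff such a relation exists. Your stage~2 then becomes a one-line composition: if $\mathsf{Mark}'$ witnesses $\mathcal{S},s_0\models\alpha$ and $Z$ is the bisimulation, then $Z^{-}\mathsf{Mark}'$ witnesses $\mathcal{S},s_0\models\beta$. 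The relational formulation is exactly what absorbs the $\square$-case that a functional $\mathsf{mark}'$ cannot.
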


\begin{Theorem}[\bf Janin and Walukiewicz \cite{conf/mfcs/JaninW95}]\label{the: automaton normal form}
For any well-named formula $\alpha$, we can construct an automaton normal
form $\mathsf{anf}(\alpha)$ such that
$\mathcal{T}_{\alpha}\rightleftharpoons\mathcal{T}_{\mathsf{anf}(\alpha)}$
for some tableau $\mathcal{T}_{\alpha}$ for $\alpha$.\footnote{Note that the tableau of  $\mathsf{anf}(\alpha)$ is uniquely determined.}
\end{Theorem}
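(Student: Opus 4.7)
The plan is to prove the theorem by induction on the structure of $\alpha$. The base case and several inductive cases follow almost directly from the clauses of Definition \ref{def: automaton normal form}. For $\alpha \in \mathsf{Lit}$, take $\mathsf{anf}(\alpha) := \alpha$ with the identity bisimulation. For $\alpha = \beta_1 \vee \beta_2$, apply induction to each $\beta_i$, rename bound variables to meet the disjointness condition of Clause 2, and take the disjoint union of the witnessing bisimulations. For $\alpha = \sigma \vec{x}.\beta(\vec{x})$, apply induction to $\beta(\vec{x})$ (viewed with $\vec{x}$ free), then wrap with $\sigma \vec{x}$; the side conditions of Clause 3 can be discharged using Lemma \ref{lem: basic properties of KOZ 01} and manipulations that preserve tableau bisimilarity, possibly introducing indexed tops via Clause 5 to keep leaves consistent.

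The difficulty is concentrated in the cases $\alpha = \beta \wedge \gamma$ and $\alpha = \triangledown \Phi$, since by Remark \ref{rem: shape of anf} a tableau for an ANF has the strong property that at most one formula at each node carries bound variables, while a tableau for $\beta \wedge \gamma$ or $\triangledown\Phi$ may carry several independent "active" components simultaneously. My plan for these cases is to view the tableau $\mathcal{T}_\alpha$ together with its trace relation as a nondeterministic parity automaton $\mathcal{B}_\alpha$ whose states are finite subsets of $\mathsf{Sub}(\alpha)$ encoding the simultaneously active subformulas, whose letters are sets of literals, whose transitions mirror the simultaneous $(\triangledown)$-reductions, and whose acceptance condition encodes evenness of traces (via the machinery of Lemma \ref{lem: automaton}). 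Applying the Complementation Lemma \ref{lem:complemantation lemma} twice yields an equivalent deterministic parity automaton $\mathcal{D}$, and the formula $\mathsf{anf}(\alpha)$ is then read off from $\mathcal{D}$ in the standard way: each state of $\mathcal{D}$ becomes a bound fixpoint variable $x_q$, each transition produces a $\triangledown$ over the successor variables conjoined with the appropriate literals (yielding a body in the form prescribed by Clause 4), and the nesting of $\mu/\nu$ over $\vec{x}$ is dictated by the priorities of $\mathcal{D}$ so that the resulting alternation-depth computation reproduces $\mathcal{D}$'s parity condition.

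Constructing the tableau bisimulation $Z$ is then essentially forced: a node $t$ of $\mathcal{T}_\alpha$ is related to a node $t'$ of $\mathcal{T}_{\mathsf{anf}(\alpha)}$ exactly when the set of active formulas at $t$ corresponds to the state of $\mathcal{D}$ reached at the modal ancestor of $t'$. The prop condition holds because the read-off of $\mathcal{D}$ preserves the conjoined literals; the forth/back conditions on modal and choice nodes hold because the $(\triangledown)$-branches at modal nodes of $\mathcal{T}_{\mathsf{anf}(\alpha)}$ are in one-to-one correspondence with the transitions of $\mathcal{D}$, which themselves enumerate the admissible successor assignments for the multi-component $(\triangledown)$-rule in $\mathcal{T}_\alpha$.

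The main obstacle will be verifying the parity condition of the bisimulation, namely that an infinite branch of $\mathcal{T}_\alpha$ is even if and only if the associated branch of $\mathcal{T}_{\mathsf{anf}(\alpha)}$ is. By Remark \ref{rem: shape of anf}, each branch of $\mathcal{T}_{\mathsf{anf}(\alpha)}$ has a unique trace, and the priority that this trace contributes (via Definition \ref{def: alternation depth} and equation \ref{eq: priority of formulas}) must coincide with the priority assigned by $\mathcal{D}$ to the corresponding run. This match is the technical heart of the proof and has to be arranged at the moment the $\mu/\nu$ blocks are placed on top of the read-off body: the nesting order must be chosen so that $\Omega_{\mathsf{anf}(\alpha)}(\varphi_{x_q}(x_q))$ agrees with $\Omega_{\mathcal{D}}(q)$ up to the standard shift between automaton parities and fixpoint-alternation parities. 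Once this alignment is established, the parity condition of tableau bisimilarity follows, and together with Lemma \ref{lem: basic properties of tableau bisimulation} this also confirms semantic equivalence of $\alpha$ and $\mathsf{anf}(\alpha)$.
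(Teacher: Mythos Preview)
Your approach differs from the paper's and has a real gap in the fixpoint case.

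The paper does not use structural induction. It works globally on one tableau $\mathcal{T}'_\alpha$: it runs the deterministic automaton $\mathcal{A}_\alpha$ of Lemma~\ref{lem: automaton} along the branches and cuts the tableau at \emph{loop nodes} --- nodes where the pair $\langle L(t),\mathsf{stat}(t)\rangle$ repeats a proper ancestor's pair under a priority-dominance condition --- inserting back edges to the repeated ancestors. This yields a finite \emph{tableau with back edges} $\mathcal{TB}_\alpha$, and $\mathsf{anf}(\alpha)$ is read off from it bottom-up: each loop leaf becomes a fresh bare variable $x_t$, $(\triangledown)$-nodes become cover-modality formulas, and each return node gets wrapped with a block $\sigma_t x_{t_1}\cdots\sigma_t x_{t_n}$ where $\sigma_t\in\{\mu,\nu\}$ is dictated by the parity of $\Omega(\mathsf{stat}(t))$. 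The bisimulation $Z$ is then essentially the identity on nodes, and the parity condition follows directly from the priority-dominance built into the definition of loop node.

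Your structural induction breaks at the $\sigma$ case. You want to take $\mathsf{anf}(\beta(\vec{x}))$ from the IH and wrap with $\sigma\vec{x}$ via Clause~3 of Definition~\ref{def: automaton normal form}, but Clause~3 forbids any subformula of the form $x\wedge\gamma$, and the IH gives no such guarantee: a free variable $x$ in an ANF can perfectly well sit inside a conjunction of literals produced by Clause~1 or Clause~4 at a modal leaf, e.g.\ $x\wedge p$. Renaming (Lemma~\ref{lem: basic properties of KOZ 01}) and padding with indexed tops cannot separate $x$ from a genuine conjunct $p$. The paper never faces this because its bound variables are created \emph{only} at loop leaves as bare $x_t$'s and are never conjoined with anything; your compositional framing loses exactly this control. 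Your automaton-based treatment of $\wedge$ and $\triangledown$ is close in spirit to the paper's global construction (your double complementation plays the role of Lemma~\ref{lem: automaton}), but since that treatment does not use the inductive hypothesis anyway, there is no payoff from the surrounding induction --- you would do better to drop it and apply the automaton method once to the whole of $\alpha$, which is precisely what the paper does.
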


\begin{proof}
Let $\mathcal{T}'_{\alpha} = (T, C, r, L)$ be a tableau for a given formula $\alpha$, let
$\mathcal{PA}_{\alpha} = (Q, \Sigma, \Delta, q_{0}, \mathsf{pri})$ be a parity
automaton that is given in Subsection \ref{subsec: conversion to parity automaton}.
For any node $t \in T$, set
\[
  \Delta(\vec{L}(\vec{rt}), q_{0}) = q_{t} = \langle J_{t}, C_{t}, 1, l_{t}, \pi_{t}, \mathsf{col}_{t} \rangle.
\]
First, we construct a tableau-like structure
$\mathcal{TB}_{\alpha} =
  (T_{b}, C_{b}, r_{b}, L_{b}, B_{b})$
called a \textit{tableau with back edge} from $\mathcal{T}'_{\alpha}$ as follows:
\begin{itemize}
  \item The node $t \in T$ is called a \textit{loop node} if;
  \begin{description}
    \item[$(\spadesuit)$] There is a proper ancestor $t'$ such that
    $q_{t} = q_{t'}$, and
    \item[$(\heartsuit)$] for any $u \in T$ such that $u \in C^{\ast}(t')$ and $t \in C^{\ast}(u)$,
    we have $\mathsf{pri}(q_{u}) \leq \mathsf{pri}(q_{t}) (= \mathsf{pri}(q_{t'}))$. 
  \end{description}
  In this situation, the node $t'$ is called a \textit{return node} of $t$.
  Note that for any infinite branch $\xi$ of $\mathcal{T}_{\alpha}$, there exists
  a loop node on $\xi$ since $Q$ is finite. We define the set $T_{b}$ of nodes as follows:
  \[
    T_{b} := \{ t \in T \mid
    \text{for any proper ancestor $t'$ of $t$, $t'$ is \textit{not} a loop node}\}
  \]
  Intuitively speaking, we trace the nodes on each branch from the root
  and as soon as we arrive at a return node, we cut off the former branch from the tableau.

  \item Set $C_{b} := C|_{T_{b}\times T_{b}}$, $r_{b} := r$ and $L_{b} := L|_{T_{b}}$.

  \item $B_{b} := \{(t, t') \in T_{b}\times T_{b} \mid
  \text{$t$ is a loop node and $t'$ is a return node of $t$}\}$.
  An element of $B_{b}$ is called \textit{back edge}.
\end{itemize}
By K\"onig's lemma, we can assume that $\mathcal{TB}_{\alpha}$ is a finite structure because it has
no infinite branches.
The tableau with back edge is very
similar to the basic tableau. In fact, the unwinding $\mathsf{UNW}_{r_{b}}(\mathcal{TB}_{\alpha})$ is a
tableau for $\alpha$.
Therefore, we use the terminology and concepts of the tableau, such as the
concept of the parity of the sequence of nodes. From the definition of loop and return nodes
(particularly Condition $(\heartsuit)$), we can assume that
\begin{quote}
  $(\dag)$: Let $\xi$ be an infinite $(C_{b}\cup B_{b})$-sequence and let $t \in T_{b}$
  be the return node which appears infinitely often in $\xi$ and is nearest to the root of all such
  return nodes. Then, $\xi$ is even if and only if $\mathsf{pri}(q_{t})$ is even.
\end{quote}

Next, we assign an automaton normal form $\mathsf{anf}(t)$ to each node $t \in T_{b}$ by using
top-down fashion:
\begin{description}
  \item[Base step:]
  Let $t \in T_{b}$ be a leaf. If $t$ is not a loop node, then $t$ must be a modal node with
  an inconsistent label or contain no formula of the form $\triangledown \Phi$. In both cases, we
  assign $\mathsf{anf}(t) := \bigwedge_{1 \leq k \leq i} l_{k}$
  where $\{ l_{1}, \dots, l_{i}\} = L_{b}(t)\cap\mathsf{Lit}(\alpha)$. If $t$ is a loop node,
  we take $x_{t} \in \mathsf{Prop}\setminus\mathsf{Sub}(\varphi)$ uniquely for each such leaf and we set
  $\mathsf{anf}(t) := x_{t}$.
 
  \item[Inductive step I:]
  Suppose $t \in T_{b}$ is a $(\triangledown)$-node where
  $t$ is labeled by $\{ \triangledown \Psi_{1}, \dots, \triangledown \Psi_{i}, l_{1}, \dots, l_{j} \}$
  with $l_{1}, \dots, l_{j} \in \mathsf{Lit}(\alpha)$,
  and we have already assigned the automaton normal form
  $\mathsf{anf}(u)$
  for each child
  $u \in C_{b}(t)$.
  In this situation, we first assign $\mathsf{anf}^{-}(t)$ to $t$ as follows:
  \begin{align}
    \mathsf{anf}^{-}(t) &:= 
      \triangledown \{\mathsf{anf}(u)\mid u \in C_{b}(t)\}\wedge
      \left(\bigwedge_{1\leq k\leq j}l_{k}\right) \notag \\
    &=\left( \bigwedge_{u \in C_{b}(t)} \Diamond \mathsf{anf}(u) \right) \wedge \square
      \left( \bigvee_{1 \leq k \leq i}
        \left(
          \bigvee_{u \in C^{(k)}_{b}(t)}\mathsf{anf}(u)
        \right)
      \right) \wedge
      \left(\bigwedge_{1\leq k\leq j}l_{k}\right) \label{eq: automaton normal form 1}
  \end{align}
  where $C^{(k)}_{b}(t)$ denotes the set of all children $u \in C_{b}(t)$ such that
  $\triangledown \Psi_{k}$ is reduced to some $\psi_{k} \in \Psi_{k}$ between $t$ and $u$. That is,
  we designate the order of disjunction in $\mathsf{anf}^{-}(t)$ for technical reasons
  (see Remark \ref{rem: automaton normal form}).
  If $t$ is not a return node, then we set $\mathsf{anf}(t) := \mathsf{anf}^{-}(t)$.
  Alternatively, if $t$ is a return node, then
  let $t_{1}, \dots, t_{n}$ be all the loop nodes such that
  $(t_{k}, t) \in B_{b}$ $(1 \leq k \leq n)$. We set
  \begin{eqnarray}\label{eq: automaton normal form 2}
    \eta_{t} := \left\{\begin{array}{ll}
    \mu & \text{If $\mathsf{pri}(t) (= \mathsf{pri}(t_{1}) = \dots
      = \mathsf{pri}(t_{n})) \equiv 1 \pmod 2$}\\
    \nu & \text{If $\mathsf{pri}(t) (= \mathsf{pri}(t_{1}) = \dots
      = \mathsf{pri}(t_{n})) \equiv 0 \pmod 2$}
    \end{array}\right.
  \end{eqnarray}
  In this case we define $\mathsf{anf}(t)$
  as $\mathsf{anf}(t) := \eta_{t}x_{t_{1}}.\dots\eta_{t}x_{t_{n}}.\mathsf{anf}^{-}(t)$.

  \item[Inductive step II:]
  Suppose $t \in T_{b}$ is a $(\vee)$-node where, for both children $u_{1}, u_{2} \in C_{b}(t)$,
  we have already assigned the automaton normal forms $\mathsf{anf}(u_{1})$ and $\mathsf{anf}(u_{2})$,
  respectively.
  If $t$ is not a return node, then we set $\mathsf{anf}(t) := \mathsf{anf}(u_{1}) \vee \mathsf{anf}(u_{2})$.
  Suppose $t$ is a return node. Let $t_{1}, \dots, t_{n}$ be all the loop nodes such that
  $(t_{k}, t) \in B_{b}$ $(1 \leq k \leq n)$. In this case, $\eta_{t}$ is defined in the same way as
  $(\ref{eq: automaton normal form 2})$ and
  we define $\mathsf{anf}(t)$ as
  $\mathsf{anf}(t) :=
    \eta_{t}x_{t_{1}}.\dots\eta_{t}x_{t_{n}}.\big(\mathsf{anf}(u_{1}) \vee \mathsf{anf}(u_{2})\big)$.

  \item[Inductive step III:]
  Suppose $t \in T_{b}$ is a $(\wedge)$-, $(\eta)$- or $(\mathsf{Regeneration})$-node
  where
  we have already assigned the automaton normal form $\mathsf{anf}(u)$
  for the child $u \in C_{b}(t)$.
  If $t$ is not a return node, then we assign
  $\mathsf{anf}(t) := \mathsf{anf}(u) \wedge \top$.
  If $t$ is a return node and
  $t_{1}, \dots, t_{n}$ are all the loop nodes such that
  $(t_{k}, t) \in B_{b}$ $(1 \leq k \leq n)$, then, $\eta_{t}$ is defined in the same way as
  $(\ref{eq: automaton normal form 2})$,
  and we define $\mathsf{anf}(t)$ as
  $\mathsf{anf}(t) := \eta_{t}x_{t_{1}}.\dots\eta_{t}x_{t_{n}}.\;\mathsf{anf}(u)$.
\end{description}
We take $\mathsf{anf}(\alpha) := \mathsf{anf}(r_{b})$.

Consider the structure $(T_{b}, C_{b}, r_{b}, \mathsf{anf}, B_{b})$. We intuit that
this structure is almost a tableau with back edge for $\mathsf{anf}(\alpha)$. To clarify
this intuition, we give a
structure $\mathcal{TB}_{\mathsf{anf}(\alpha)} = (\widehat{T}, \widehat{C}, \widehat{r}, \widehat{L},
\widehat{B})$ by applying the following four steps of procedure
re-formatting $(T_{b}, C_{b}, r_{b}, \mathsf{anf}, B_{b})$ so that $\mathcal{TB}_{\mathsf{anf}(\alpha)}$
can be seen as a proper tableau with back edge.
At the same time, we define the relation $Z^{+} \subseteq T_{b} \times \widehat{T}$.
\begin{description}
\item[Step I (insert $(\eta)$-nodes)]
  Initially, we set $(\widehat{T}, \widehat{C}, \widehat{r}, \widehat{L}, \widehat{B}) :=
    (T_{b}, C_{b}, r_{b}, \widehat{L}, B_{b})$ where $\widehat{L}(t) := \{ \mathsf{anf}(t) \}$, and
  set $Z^{+} := \{ (t, t) \mid t \in T_{b} \}$.
  Let $t \in \widehat{T}$ be a return node where $t_{1}, \dots, t_{n}$ are all the
  loop nodes such that
  $(t_{k}, t) \in \widehat{B}$ $(1 \leq k \leq n)$.
  Then, we insert the $(\eta)$-nodes $u_{1}, \dots, u_{n}$ between $t$ and its children
  in such a way that
  \[
    \mathsf{anf}(t) = \eta_{t}x_{t_{1}}.\eta_{t}x_{t_{2}}.\dots\eta_{t}x_{t_{n}}.
    \beta(x_{t_{1}}, \dots, x_{t_{n}})
  \]
  is reduced to $\beta(x_{t_{1}}, \dots, x_{t_{n}})$ from $u_{1}$ to $u_{n}$.\footnote{
  In other words, we add  $u_{1}, \dots, u_{n}$ into $\widehat{T}$,
  add $(t, u_{1}), (u_{1}, u_{2}), \dots, (u_{n-1}, u_{n})$ and
  $\{ (u_{n}, u) \mid u \in \widehat{C}(t) \}$  
  into $\widehat{C}$, discard $\{ (t, u) \mid u \in \widehat{C}(t) \}$ from $\widehat{C}$,
  and expand $\widehat{L}$ to $u_{1}, \dots, u_{n}$ appropriately.
  }
  Moreover, we expand the relation $Z^{+}$ by adding $\{ (t, u_{k}) \mid 1 \leq k \leq n \}$.
  For example, if $t$ is a $(\vee)$-node in $\mathcal{TB}_{\alpha}$ such that
  $\{ v_{1}, v_{2} \} = C_{b}(t)$,
  then our procedure would be as follows:
$$
  \infer[\quad \Rightarrow]{\eta_{t}x_{t_{1}}.\eta_{t}x_{t_{2}}.\dots\eta_{t}x_{t_{n}}.\left(
    \mathsf{anf}(v_{1}) \vee \mathsf{anf}(v_{2})
    \right)}
  {\mathsf{anf}(v_{1}) \;\mid\; \mathsf{anf}(v_{2})} \quad
\infer[(\eta)]
  {
    \eta_{t}x_{t_{1}}.\eta_{t}x_{t_{2}}.\dots\eta_{t}x_{t_{n}}.\left(
    \mathsf{anf}(v_{1}) \vee \mathsf{anf}(v_{2})
    \right)
  }
  {\infer*[(\eta)]
    {
      \eta_{t}x_{t_{2}}.\dots\eta_{t}x_{t_{n}}.\left(
      \mathsf{anf}(v_{1}) \vee \mathsf{anf}(v_{2})
      \right)
    }
    {
      \infer[(\vee)]
        {\mathsf{anf}(v_{1}) \vee \mathsf{anf}(v_{2})}
        {\mathsf{anf}(v_{1}) \:\mid\: \mathsf{anf}(v_{2})}
    }
  }
$$
  \item[Step II (insert $(\wedge)$-nodes)]
    Let $t \in \widehat{T}$ be a node which is labeled by; 
    \[\triangledown \{\mathsf{anf}(u)\mid u \in \widehat{C}(t)\}\wedge
    \left(\bigwedge_{1\leq k\leq j}l_{k}\right).\]
Then, we insert the $(\wedge)$-nodes $u_{0}, \dots, u_{i}$ between $t'$ and its children (i.e., the
nodes of $\widehat{C}(t)$) and label such $u_{1}, \dots, u_{j}$ as below:
$$
\infer[\quad \Rightarrow]
{\triangledown \{\mathsf{anf}(u)\mid u \in \widehat{C}(t)\}\wedge\left(\bigwedge_{1\leq k\leq j}l_{k}\right)}
{\mathsf{anf}(u) \;\mid\; u \in \widehat{C}(t)} \quad
\infer[(\wedge)]
{
  \triangledown \{\mathsf{anf}(u)\mid u \in \widehat{C}(t)\}\wedge\left(\bigwedge_{1\leq k\leq j}l_{k}\right)
}
{\infer*[(\wedge)]
  {
    \triangledown \{\mathsf{anf}(u)\mid u \in \widehat{C}(t)\}, \left(\bigwedge_{1\leq k\leq j}l_{k}\right)
  }
  {
    \infer[(\triangledown)]
      {\triangledown \{\mathsf{anf}(u)\mid u \in \widehat{C}(t)\}, l_{1}, \dots, l_{j}}
      {\mathsf{anf}(u) \;\mid\; u \in \widehat{C}(t)}
  }
}
$$
Further, we expand the relation $Z^{+}$ by adding $\{ (t, u_{k}) \mid 1 \leq k \leq j \}$.

\item[Step III (revise the back edges)]
Let $t_{k}$ with $1 \leq k \leq n$ be the loop node, and $t$ be the return node of $t_{k}$ such that
\begin{align*}
  \mathsf{anf}(t_{k})& = x_{t_{k}}\\
  \mathsf{anf}(t)& = \eta_{t}x_{t_{1}}.\eta_{t}x_{t_{2}}.\dots\eta_{t}x_{t_{n}}.
    \beta(x_{t_{1}}, \dots, x_{t_{n}}).
\end{align*}
If $2 \leq k$, then we delete $(t_{k}, t)$ from $\widehat{B}$ and add $(t_{k}, u_{k})$ into $\widehat{B}$
where $u_{k}$ is the unique nodes satisfying;
\[
\widehat{L}(u_{k}) = \{ \eta_{t}x_{t_{k}}.\dots\eta_{t}x_{t_{n}}.
    \beta(x_{t_{1}}, \dots, x_{t_{n}})\}.
\]
By this revising procedure, for any loop node $t$ and its return node $u$,
$\widehat{L}(t)$ and $\widehat{L}(u)$ form the $(\mathsf{Regeneration})$-rule of $\mathsf{anf}(\alpha)$.

\item[Step IV (add top to label)]
Suppose $t \in \widehat{T}$ and its child $u$ are labeled as follows;
$$
  \infer
  {\mathsf{anf}(u)\wedge \top}
  {\mathsf{anf}(u)}
$$
Then, we add $\top$ to $\widehat{L}(v)$ where $v \in (\widehat{C}\cup\widehat{B})^{+}(t)$
such that, between
the $(\widehat{C}\cup\widehat{B})$-path from
$t$ to $v$, there does not exist a $(\triangledown)$-node.
By this adding procedure, such a $t$ becomes a proper $(\wedge)$-node.
\end{description}
The structure 
$\mathcal{TB}_{\mathsf{anf}(\alpha)} = (\widehat{T}, \widehat{C}, \widehat{r}, \widehat{L},
\widehat{B})$ repaired by the above four procedures
can be seen as a tableau with back edge for $\mathsf{anf}(\alpha)$ in the sense that the following two
assertions hold:
\begin{description}
\item[$(\clubsuit)$]
The unwinding $\mathsf{UNW}_{\widehat{r}}(\mathcal{TB}_{\mathsf{anf}(\alpha)})$ is a tableau of
$\mathsf{anf}(\alpha)$.

\item[$(\diamondsuit)$]
Let $\widehat{\xi}$ be an infinite $(\widehat{C}\cup\widehat{B})$-sequence and let
  $\widehat{t} \in \widehat{T}$
  be the return node which appears infinitely often in $\widehat{\xi}$ and is nearest to the root of all such
  return nodes. Then $\widehat{\xi}$ is even if and only if $\widehat{L}(\widehat{t})$ includes
  a $\mu$-formula.
\end{description}

Set
$Z := Z^{+}|_{((T_{b})_{m} \times \widehat{T}_{m}) \cup ((T_{b})_{c} \times \widehat{T}_{c})}$.
If we extend the relation $Z$ to the pair of nodes of
$\mathsf{UNW}_{r}(\mathcal{TB}_{\alpha})$ and
$\mathsf{UNW}_{\widehat{r}}(\mathcal{TB}_{\mathsf{anf}(\alpha)})$, then $Z$ clearly satisfies the
root condition, prop condition, back conditions and forth conditions.
Moreover, from $(\dag)$ and $(\diamondsuit)$,
we can assume that $Z$ satisfies the Parity condition. 
Therefore, we have
$\mathsf{UNW}_{r}(\mathcal{TB}_{\alpha})\rightleftharpoons\mathsf{UNW}_{\widehat{r}}(\mathcal{TB}_{\mathsf{anf}(\alpha)})$,
and so $\mathcal{T}_{\alpha} := \mathsf{UNW}_{r}(\mathcal{TB}_{\alpha})$ and
$\mathsf{anf}(\alpha)$ satisfy the required condition.
\end{proof}

\begin{Remark}\normalfont\label{rem: automaton normal form}
Let $\mathsf{Sub}'(\mathsf{anf}(\alpha))$ be the set of subformulas of $\mathsf{anf}(\alpha)$ which
contains some bound variables.
From the relation $Z^{+}$ constructed in the proof of Theorem \ref{the: automaton normal form},
we can construct 
a function $f$ from $\mathsf{Sub}'(\mathsf{anf}(\alpha))$
to $\mathcal{P}(\mathsf{Sub}(\alpha))$ naturally because of the following:
\begin{itemize}
  \item for any
    $\widehat{\beta} \in \mathsf{Sub}'(\mathsf{anf}(\alpha))$,
    there exists a unique $\widehat{t} \in \widehat{T}$ such that
    $\widehat{\beta} \in \widehat{L}(\widehat{t})$; and
  \item for any $\widehat{t} \in \widehat{T}$ there exists a unique $t \in T_{b}$ such that
    $(t, \widehat{t}) \in Z^{+}$.
\end{itemize}
Therefore, if we define
$f(\widehat{\beta}) := L(t)$ where $\widehat{\beta} \in \widehat{L}(\widehat{t})$ and
$(t, \widehat{t}) \in Z^{+}$, then the function $f$ is well-defined.
Moreover, let $t \in T_{b}$ be a $(\triangledown)$-node
such that $L_{b}(t) = \{ \triangledown \Psi_{1}, \dots, \triangledown \Psi_{i}, l_{1}, \dots, l_{j} \}$.
Then,
we expand $f$ to the formula $\chi_{1}$ and $\chi_{2}$ such that
\[
\mathsf{anf}(u) \leq \chi_{1} \leq
\bigvee_{u \in C^{(k)}_{b}(t)}\mathsf{anf}(u) \leq
\chi_{2} \leq
\left(
  \bigvee_{1 \leq k \leq i}\left(\bigvee_{u \in C^{(k)}_{b}(t)}\mathsf{anf}(u)\right)
\right),
\]
for every $k$ where $1 \leq k \leq i$ and for every $u \in C^{(k)}_{b}(t)$.
Now, we define $f(\chi_{2})$ as
\[
  f(\chi_{2}) := \left\{ \bigvee \Psi_{n} \mid 1 \leq n \leq i \right\}.
\]
Next, we note that for any $u \in C^{(k)}_{b}(t)$ there is a unique $\psi_{k} \in \Psi_{k}$ such that
$\triangledown \Psi_{k}$ is reduced to $\psi_{k}$. We denote such a $\psi_{k}$ by $\mathsf{cor}(u)$.
Suppose $\chi_{1} = \bigvee_{u \in X^{(k)}}\mathsf{anf}(u)$ where $X^{(k)} \subseteq C^{(k)}_{b}(t)$.
Then we define $f(\chi_{1})$ as;
\[
  f(\chi_{1}) := \left\{ \bigvee \Psi_{n} \mid 1 \leq n \leq i, \; n \neq k \right\} \cup
  \left\{ \bigvee_{u \in X^{(k)}}\mathsf{cor}(u) \right\}.
\]
Recalling Equation $(\ref{eq: automaton normal form 1})$,
the reason we designated the order of disjunction in $\mathsf{anf}(t)$ is that,
in conjunction with above definition of $f$, we obtain the following useful property:
\begin{description}
\item[(Corresponding Property):]
Consider the section of the tableau which has the root labeled by 
\[
\left(
  \bigvee_{1 \leq k \leq i}\left(\bigvee_{u \in C^{(k)}_{b}(t)}\mathsf{anf}(u)\right)
  \right),
\]
and every leaf labeled by some $\mathsf{anf}(u)$. Then, for any node $u$ and its children
$v_{1}$ and $v_{2}$ we have (i) $f(L(u)) = f(L(v_{1})) =  f(L(v_{2}))$ or, (ii)
$f(L(u))$, $f(L(v_{1}))$ and $f(L(v_{2}))$ forming a $(\vee)$-rule.
\end{description}

Let us confirm the above property by observing a concrete example as depicted in Figure
\ref{fig: an example of the corresponding property}.
\begin{figure}[htbp]
  \centering
  \includegraphics[width=16cm]{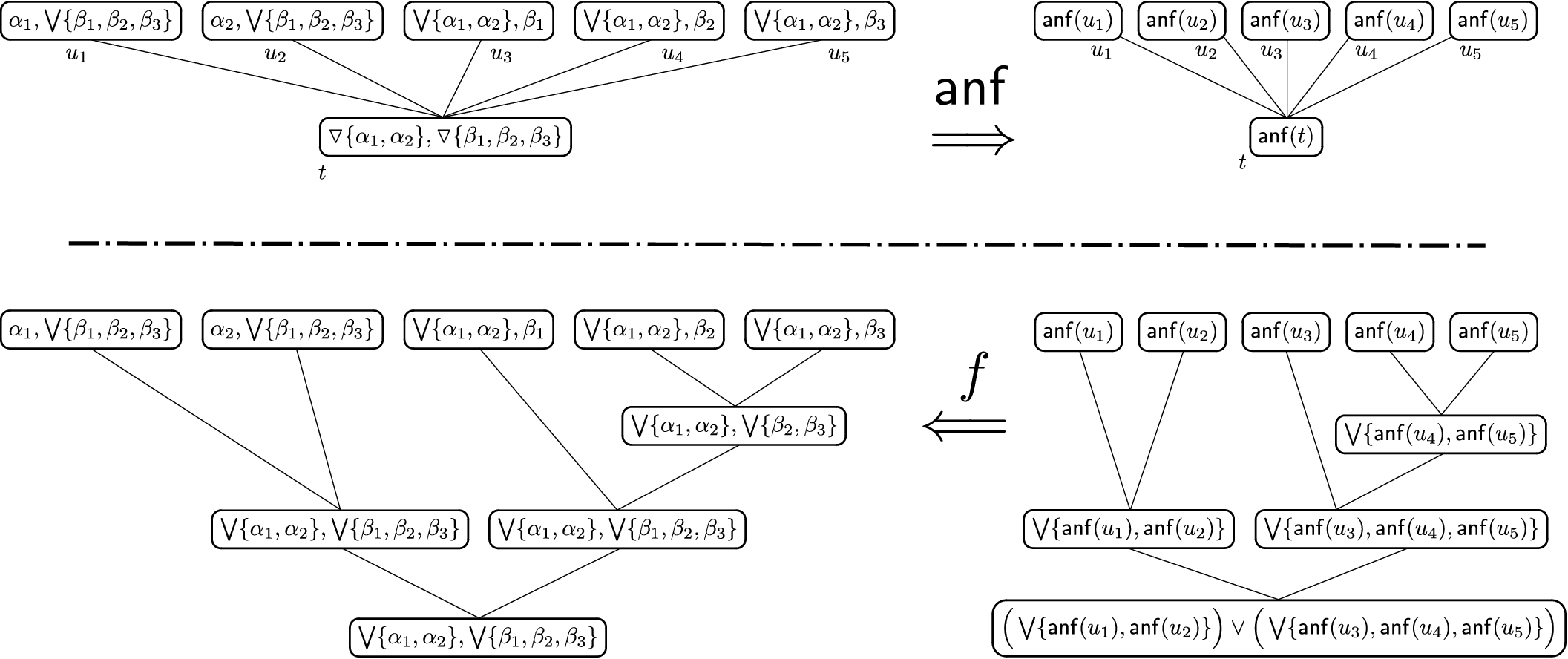}
  \caption{An example of the corresponding property.}
  \label{fig: an example of the corresponding property}
\end{figure}
In this example, the root and its children satisfy (i), and the child of the root and its children
form a $(\vee)$-rule. Thus, (ii) is satisfied.

The function $f$ will be used in the proof of Part $4$
of Lemma \ref{lem: basic properties of tableau consequence}.
\end{Remark}

\begin{Corollary}
For any well-named formula $\alpha$, we can construct an automaton normal form
$\mathsf{anf}(\alpha)$ which is semantically equivalent to $\alpha$.
Moreover, for any $x \in \mathsf{Free}(\alpha)$
which occurs only positively in $\alpha$, it holds that
$x \in \mathsf{Free}(\mathsf{anf}(\alpha))$ and $x$ occurs
only positively in $\mathsf{anf}(\alpha)$.
\end{Corollary}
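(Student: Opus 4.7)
The plan is to derive the corollary from Theorem \ref{the: automaton normal form} together with Lemma \ref{lem: basic properties of tableau bisimulation}, and then verify the extra claim about free positive occurrences by inspecting the construction of $\mathsf{anf}(\alpha)$ given in the proof of Theorem \ref{the: automaton normal form}.

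For the semantic equivalence, I would simply apply Theorem \ref{the: automaton normal form} to obtain $\mathsf{anf}(\alpha)$ together with a tableau $\mathcal{T}_{\alpha}$ for $\alpha$ and a tableau $\mathcal{T}_{\mathsf{anf}(\alpha)}$ for $\mathsf{anf}(\alpha)$ such that $\mathcal{T}_{\alpha} \rightleftharpoons \mathcal{T}_{\mathsf{anf}(\alpha)}$. Lemma \ref{lem: basic properties of tableau bisimulation} then immediately yields $\models \alpha \leftrightarrow \mathsf{anf}(\alpha)$, establishing the first half of the statement.

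For the claim on a free variable $x$ that occurs only positively in $\alpha$, I would trace the construction of $\mathsf{anf}$. Since $\alpha$ is in negation normal form, the hypothesis says that $x$ appears as the literal $x \in \mathsf{Lit}(\alpha)$ but $\neg x$ is never a subformula of $\alpha$. Every tableau rule in Definition \ref{def: tableau} is strictly decomposition: it never introduces a literal that was not already present in the antecedent. Consequently, no label in $\mathcal{T}_{\alpha}$ (equivalently in $\mathcal{TB}_{\alpha}$) ever contains $\neg x$, and the literal $x$ survives to the modal node at the bottom of any branch along which it is propagated. Now I inspect how literals are inherited by $\mathsf{anf}(t)$: at a leaf $t$, $\mathsf{anf}(t)$ is either the conjunction of all literals in $L_{b}(t)$ or a fresh propositional variable $x_{t}$; at a $(\triangledown)$-node, the literals $l_{1},\dots,l_{j}$ of $L_{b}(t)$ appear literally as conjuncts in $\mathsf{anf}^{-}(t)$; and at the remaining $(\vee),(\wedge),(\sigma),(\mathsf{Reg})$ nodes the construction inherits $\mathsf{anf}(t)$ from the children without altering literals. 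Moreover, the fresh variables $x_{t}$ are drawn from $\mathsf{Prop} \setminus \mathsf{Sub}(\alpha)$, so they cannot coincide with $x$ and no $\sigma$-binding inserted by the construction captures $x$. Therefore $x$ appears positively in $\mathsf{anf}(\alpha)$, $\neg x$ does not appear at all, and $x \in \mathsf{Free}(\mathsf{anf}(\alpha))$ with only positive occurrences.

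The only mildly delicate point is confirming that $x$ really does reach some modal node (so that it actually ends up as a conjunct of $\mathsf{anf}(\alpha)$ rather than being silently discarded). This is immediate from the shape of the tableau rules: every occurrence of $x$ in $\alpha$ sits inside some path of decompositions that terminates at a modal node, and at that modal node $x$ is included in the set $L_{b}(t) \cap \mathsf{Lit}(\alpha)$ that feeds the construction. No obstacle beyond this routine bookkeeping arises, since all genuine difficulty was already absorbed in the proof of Theorem \ref{the: automaton normal form}.
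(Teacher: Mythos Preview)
Your argument is correct and follows the paper exactly for the semantic-equivalence half: the paper's own proof is the single sentence ``This is an immediate consequence of Lemma~\ref{lem: basic properties of tableau bisimulation} and Theorem~\ref{the: automaton normal form},'' and you invoke precisely those two results. For the free-variable claim the paper offers nothing further, so your inspection of the construction in Theorem~\ref{the: automaton normal form} is additional detail that the paper simply leaves implicit.

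One small caveat on your last paragraph: a modal node of $\mathcal{TB}_{\alpha}$ that $x$ reaches may be a loop-node leaf, and at such a leaf the Base step sets $\mathsf{anf}(t) := x_{t}$ \emph{without} recording the literals in $L_{b}(t)$. Hence your sentence ``at that modal node $x$ is included in the set $L_{b}(t) \cap \mathsf{Lit}(\alpha)$ that feeds the construction'' is not literally correct in that case. The fix is routine: the return node of such a leaf carries the same label (hence also contains $x$), is not a leaf of $\mathcal{TB}_{\alpha}$, and you can rerun the argument from there; since $T_{b}$ is finite this terminates at a $(\triangledown)$-node or a non-loop leaf where the literals really are incorporated. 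This is no more than what the paper itself glosses over, so your treatment is already at least as thorough.
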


\begin{proof}
This is an immediate consequence of
Lemma \ref{lem: basic properties of tableau bisimulation}
and Theorem \ref{the: automaton normal form}.
\end{proof}

\section{Completeness}\label{sec: completeness}
This section is the final section of this article and includes the main part.
In Subsection \ref{subsec: tableau consequence}, 
we give the concept of \textit{tableau consequence} and show Claim (g); that may be the most difficult to understand
in Walukiewicz \cite{Walukiewicz2000142}.
In Subsection \ref{subsec: proof of completeness}, we prove the completeness of $\mathsf{Koz}$
by proving Claim (h) and (d), in that order.

\subsection{Tableau consequence}\label{subsec: tableau consequence}
First, we extend the definition of tableau for technical reasons.

\begin{Definition}[\bf An extension of tableau]\normalfont
Let $\varphi$ be a well-named formula. The 
rule of a \textit{extended tableau} for $\varphi$ is obtained by adding the following three rules
to the rule of tableau:
$$
  \infer[(\epsilon_{1})]
  {\Gamma}
  {\Gamma} \qquad
  \infer[(\epsilon_{2})]
  {\Gamma}
  {\Gamma \;\mid\; \Gamma}
$$
$$
  \infer[(\triangledown_{e})]
  {\triangledown\Psi_{1}, \dots, \triangledown\Psi_{i}, l_{1}, \dots, l_{j}}
  {\bigvee\Psi_{1}, \dots, \bigvee\Psi_{i} \mid \cdots \mid \bigvee\Psi_{1}, \dots, \bigvee\Psi_{i}}
$$
where in the $(\triangledown_{e})$-rule, $l_{1}, \dots, l_{j} \in \mathsf{Lit}(\varphi)$ and,
the label of premises are all the same (i.e., $\{ \bigvee\Psi_{1}, \dots, \bigvee\Psi_{i} \}$),
and the number of premises is an arbitrary finite number.

An \textit{extended tableau} for $\varphi$ is the structure
defined as a tableau for $\varphi$, but satisfying the following additional clause:
\begin{enumerate}
\setcounter{enumi}{3}
\item
  For any infinite branch $\xi$ of an extended tableau $\mathcal{T}_{\varphi}$,
  $\{ n \in \omega \mid \text{$\xi[n]$ is $(\triangledown)$-node or $(\triangledown_{e})$-node}\}$
  is an infinite set.
\end{enumerate}
Clause $4$ restrains a branch that does not reach any modal node eternally by
infinitely applying $(\epsilon_{1})$ and $(\epsilon_{2})$.
\end{Definition}

\begin{Remark}\normalfont
A tableau can be considered a special case of an extended tableau, in which the extended rules
are not used. Various concepts for tableau, such as trace, parity and tableau bisimulation, can be introduced into
this extended tableau as well. Thus, we apply these concepts and results
freely to this new structure.
\end{Remark}

\begin{Definition}[\bf Tableau consequence]\label{def: tableau consequence}\normalfont
Let $\mathcal{T}_{\alpha} = (T, C, r, L)$ and $\mathcal{T}_{\beta} = (T', C', r', L')$ be two
extended tableaux for some well-named formula $\alpha$ and $\beta$. Let $T_{m}$
and $T'_{m}$ be the set of modal nodes of $\mathcal{T}_{\alpha}$ and $\mathcal{T}_{\beta}$,
and let $T_{c}$ and $T'_{c}$ be the set of choice nodes of
$\mathcal{T}_{\alpha}$ and $\mathcal{T}_{\beta}$, respectively. Then $\mathcal{T}_{\beta}$ is called a
\textit{tableau consequence} of $\mathcal{T}_{\alpha}$
(notation: $\mathcal{T}_{\alpha}\rightharpoonup\mathcal{T}_{\beta}$) if there exists a binary
relation $Z\subseteq (T_{m}\times T'_{m})\cup(T_{c}\times T'_{c})$ satisfying the following six conditions
(here, the condition of the tableau consequence is similar to the condition of tableau
bisimulation so we have illustrated the differences between these two conditions using underlines):
\begin{description}
\item[Root condition:]
$(r, r') \in Z$.

\item[Prop condition:]
For any $t \in T_{m}$ and $t' \in T'_{m}$, if $(t, t') \in Z$, then
\[
  L(t)\cap\mathsf{Lit}(\alpha) \uwave{\;\supseteq\;}
    L'(t')\cap\mathsf{Lit}(\beta).
\]
Consequently, $L(t)$ is consistent \uwave{only if} $L'(t')$ is consistent.

  \item[Forth condition on modal nodes:]
  \uwave{Take $t, u \in T_{m}$ and $t' \in T'_{m}$ arbitrarily.
  If $(t, t') \in Z$ and $u$ is a next modal node of $t$, then $C'(t') = \emptyset$ or
  there exists $u' \in T'_{m}$
  which is a next modal node of $t'$ such that $(u, u') \in Z$.}

  \item[Back condition on modal nodes:]
  Take $t \in T_{m}$, $t' \in T'_{m}$ and $u' \in T'_{c}$ arbitrarily.
  If $(t, t') \in Z$ and $u' \in C'(t')$, then \uwave{$C(t) = \emptyset$} or there exists $u \in C(t)$
  such that $(u, u') \in Z$.

  \item[Forth condition on choice nodes:]
  Take $u \in T_{c}$, $t \in T_{m}$ and $u' \in T'_{c}$ arbitrarily.
  If $(u, u') \in Z$ and $t$ is near $u$, then there exists $t' \in T'_{m}$ such that $(t, t') \in Z$
  and $t'$ is near $u'$.

  \item[Back condition on choice nodes:]
  \uwave{No condition.}

  \item[Parity condition:]
  Let $\xi$ and $\xi'$ be infinite branches of $\mathcal{T}_{\alpha}$ and $\mathcal{T}_{\beta}$
  respectively.
  If $\xi$ and $\xi'$ are associated with each other, then $\xi$ is even \uwave{if}
  $\xi'$ is even.
\end{description}
A relation $Z$ which satisfies the above six conditions called \textit{tableau consequence relation}
from $\mathcal{T}_{\alpha}$ to $\mathcal{T}_{\beta}$.
\end{Definition}

\begin{Remark}\label{rem: tableau bisimulation}\normalfont
As will be shown in Lemma \ref{lem: basic properties of tableau bisimulation},
if $\mathcal{T}_{\alpha}$ and $\mathcal{T}_{\beta}$ are tableau bisimilar, then,
$\alpha$ and $\beta$ are semantically equivalent. However, the reverse is not applied.
For example, consider the following two tableaux, say $\mathcal{T}_{1}$ and $\mathcal{T}_{2}$:
$$
\infer[(\wedge)]
{
  (p \wedge (q \vee r)) \wedge (q \vee r)
}
{
  \infer[(\wedge)]
  {
    p \wedge (q \vee r), q \vee r
  }
  {
    \infer[(\vee)]
    {
      p, q \vee r
    }
    {
      p, q \quad \mid \quad p, r
    }
  }
}
\qquad
\infer[(\wedge)]
{
  (p \wedge (q \vee r)) \wedge (q \vee r)
}
{
  \infer[(\vee)]
  {
    p \wedge (q \vee r), q \vee r
  }
  {
    \infer[(\wedge)]
    {
      p \wedge (q \vee r), q
    }
    {
      \infer[(\vee)]
      {
        p, q \vee r, q
      }
      {
        p, q
        \quad \mid \quad
        p, q, r
      }
    }
    \quad \mid \quad
    \infer[(\wedge)]
    {
      p \wedge (q \vee r), r
    }
    {
      \infer[(\vee)]
      {
        p, q \vee r, r
      }
      {
        p, q, r
        \quad \mid \quad
        p, r
      }
    }
  }
}
$$
In this example, even $\mathcal{T}_{1}$ and $\mathcal{T}_{2}$ are tableaux for the same formula 
$(p \wedge (q \vee r)) \wedge (q \vee r)$,
there does not exist a tableau bisimulation between them. Because,
$\mathcal{T}_{2}$ has leaves labeled by $\{ p, q, r\}$ but $\mathcal{T}_{1}$ does not.

On the other hand, we can assume that
$\mathcal{T}_{2}\rightharpoonup\mathcal{T}_{1}$.
Suppose $t$ is a node of some tableau labeled by $\{ \gamma \} \cup \Gamma$ and,
$u$ is a its child labeled by $\{ \gamma' \} \cup \Gamma$. Then, there exists two possibilities;
$\gamma' \in \Gamma$ or $\gamma' \notin \Gamma$.
We say a collision occurred between $t$ and $u$ if $\gamma' \in \Gamma$.
In the above example, we can find collisions in
$\mathcal{T}_{1}$ but cannot in $\mathcal{T}_{2}$. In general, if we construct a tableau
$\mathcal{ST}_{\varphi}$ for a given
formula $\varphi$ so that collisions occur as many as possible, then,
we have $\mathcal{T}_{\varphi}\rightharpoonup\mathcal{ST}_{\varphi}$
for any tableau $\mathcal{T}_{\varphi}$ for $\varphi$. To denote this fact correctly, we introduce the
following definition and lemma.
\end{Remark}

\begin{Definition}[\bf Small tableau]\normalfont
A well-named formula $\varphi$ and a set $\Gamma \subseteq \mathsf{Sub}(\varphi)$ are given.
For a formula $\gamma \in \Gamma$, a closure of $\gamma$ (denotation: $\mathsf{cl}(\gamma)$)
is defined as follows:
\begin{itemize}
\item $\gamma \in \mathsf{cl}(\gamma)$.
\item If $\alpha \circ \beta \in \mathsf{cl}(\gamma)$, then $\alpha, \beta \in \mathsf{cl}(\gamma)$
      where $\circ \in \{ \vee, \wedge \}$.
\item If $\eta_{x}x.\varphi_{x}(x) \in \mathsf{cl}(\gamma)$, then $\varphi_{x}(x) \in \mathsf{cl}(\gamma)$.
\item If $x \in \mathsf{cl}(\gamma) \cap \mathsf{Bound}(\varphi)$, then $\varphi_{x}(x) \in \mathsf{cl}(\gamma)$.
\end{itemize}
In other words, $\mathsf{cl}(\gamma)$ is a set of all formulas $\delta$ such that
for any tableau $\mathcal{T}_{\varphi} = (T, C, r, L)$ and its node $t \in T$,
if $\gamma \in L(t)$, then,
there is a descendant $u \in C^{\ast}(t)$ near $t$
and a trace $\mathsf{tr}$ on the $C$-sequence from $t$ to $u$ where $\mathsf{tr}[1] = \gamma$
and $\mathsf{tr}[|\mathsf{tr}|] = \delta$.
We say $\gamma$ is \textit{reducible} in $\Gamma$ if, for any $\gamma' \in \Gamma \setminus \{ \gamma \}$,
we have $\gamma \notin \mathsf{cl}(\gamma')$.
A tableau $\mathcal{ST}_{\varphi} = (T, C, r, L)$ is said \textit{small} if for any node $t \in T$
which is not modal, the reduced formula $\gamma \in L(t)$
between $t$ and its children is reducible in $L(t)$.
\end{Definition}

\begin{Lemma}\label{lem: small tableau}
For any well-named formula $\varphi$, we can construct a small tableau $\mathcal{ST}_{\varphi}$ for $\varphi$.
Moreover, for any extended tableau $\mathcal{T}_{\varphi}$ for $\varphi$,
we have $\mathcal{T}_{\varphi}\rightharpoonup\mathcal{ST}_{\varphi}$.
\end{Lemma}
\begin{proof}
Let $\varphi$ be a well-named formula. Then, it is enough to show that
for any $\Gamma \subseteq \mathsf{Sub}(\varphi)$ which is not modal, there exists
a reducible formula $\gamma \in \Gamma$.
Suppose, moving toward a contradiction, that there exists $\Gamma \subseteq \mathsf{Sub}(\varphi)$
which is not modal and does not include any reducible formula.
Take a formula $\gamma_{1} \in \Gamma$ such that $\mathsf{cl}(\gamma_{1}) \supsetneq \{ \gamma_{1} \}$.
Since $\gamma_{1}$ is not reducible in $\Gamma$, there exists
$\gamma_{2} \in \Gamma \setminus \{ \gamma_{1} \}$ such that $\gamma_{1} \in \mathsf{cl}(\gamma_{2})$.
Since $\gamma_{2}$ is not reducible in $\Gamma$, there exists
$\gamma_{3} \in \Gamma \setminus \{ \gamma_{2} \}$ such that $\gamma_{2} \in \mathsf{cl}(\gamma_{3})$.
And so forth, we obtain the sequence $\langle \gamma_{n} \mid n \in \omega \setminus \{ 0 \} \rangle$
such that $\gamma_{n+1} \in \Gamma \setminus \{ \gamma_{n} \}$
and $\gamma_{n} \in \mathsf{cl}(\gamma_{n+1})$ for any $n \in \omega \setminus \{ 0 \}$.
Since $|\Gamma|$ is finite, there exists $i, j \in \omega$
such that $1 \leq i < j$ and $\gamma_{i} = \gamma_{j}$. Consider the tableau
$\mathcal{T}_{\varphi} = (T, C, r, L)$ and its node $t \in T$ such that $\gamma_{j} \in L(t)$.
Then, from the definition of the closure $\mathsf{cl}$, there exists a
trace $\mathsf{tr}$ on $\pi$ such that:
\begin{description}
\item[$(\heartsuit)$] $\pi$ is a finite $C$-sequence starting at $t$ where
  $(\triangledown)$-rule does not applied between $\pi$.
\item[$(\clubsuit)$] $\mathsf{tr}[1] = \mathsf{tr}[|\mathsf{tr}|] =  \gamma_{j}$.
\end{description}
On the other hand, since $\varphi$ is well-named, for any bound variable $x \in \mathsf{Bound}(\varphi)$,
$x$ is in the scope of some modal operator (cover modality) in $\varphi_{x}(x)$. Thus we have:
\begin{description}
\item[$(\spadesuit)$] For any trace $\mathsf{tr}$ on $\pi$, if $(\clubsuit)$ is satisfied,
  then $\pi$ includes a $(\triangledown)$-node or $(\triangledown_{w})$-node.
\end{description}
$(\heartsuit)$ and $(\spadesuit)$ contradict each other.
The proof of the second half of the lemma is left as a reader's exercise.
\end{proof}

The next lemma states some important properties of the tableau consequence;
where the proof of the lemma is easier to understand than Walukiewicz's proof, and is the main contribution of this article.

\begin{Lemma}\label{lem: basic properties of tableau consequence}
Let $\alpha$, $\beta$, $\gamma$ and $\varphi(x)$ be well-named formulas
where $x$ appears only positively and in the scope of some modality in $\varphi(x)$.
Then, we have:
\begin{enumerate}
\item
  If $\mathcal{T}_{\alpha}\rightleftharpoons\mathcal{T}_{\beta}$, then 
  $\mathcal{T}_{\alpha}\rightharpoonup\mathcal{T}_{\beta}$,
  for any extended tableaux $\mathcal{T}_{\alpha}$ and $\mathcal{T}_{\beta}$.

\item
  If $\mathcal{T}_{\alpha}\rightharpoonup\mathcal{T}_{\beta}$ and
  $\mathcal{T}_{\beta}\rightharpoonup\mathcal{T}_{\gamma}$, then
  $\mathcal{T}_{\alpha}\rightharpoonup\mathcal{T}_{\gamma}$,
  for any extended tableaux $\mathcal{T}_{\alpha}$, $\mathcal{T}_{\beta}$ and $\mathcal{T}_{\gamma}$.

\item
  For any extended tableau $\mathcal{T}_{\varphi(\mu \vec{x}.\varphi(\vec{x}))}$,
  there exists an extended tableau $\mathcal{T}_{\mu \vec{x}.\varphi(\vec{x})}$ such that
  $\mathcal{T}_{\varphi(\mu \vec{x}.\varphi(\vec{x}))}
    \rightharpoonup\mathcal{T}_{\mu \vec{x}.\varphi(\vec{x})}$.

\item
  For any extended tableau $\mathcal{T}_{\varphi(\mathsf{anf}(\alpha))}$,
  there exists an extended tableau
  $\mathcal{T}_{\varphi(\alpha)}$ such that
  $\mathcal{T}_{\varphi(\mathsf{anf}(\alpha))}\rightharpoonup\mathcal{T}_{\varphi(\alpha)}$.
\end{enumerate}
\end{Lemma}

\begin{proof}
Part $1$ and Part $2$ are obvious from the definition.

\fbox{Part$3$}
First, we divide $\mathsf{Sub}(\varphi(\mu x.\varphi(x)))$ into two disjoint sets:
\begin{align*}
    \mathsf{Sub}_{1} &:=
    \big\{ \alpha(\mu x.\varphi(x)) \mid \alpha(x) \in \mathsf{Sub}(\varphi(x)) \big\} \setminus \{ \mu x.\varphi(x) \}\\
    \mathsf{Sub}_{2} &:= \mathsf{Sub}(\varphi(\mu x.\varphi(x))) \setminus \mathsf{Sub}_{1}=\mathsf{Sub}(\mu x.\varphi(x))
\end{align*}
A function $f: \mathsf{Sub}(\varphi(\mu x.\varphi(x))) \rightarrow \mathsf{Sub}(\mu x.\varphi(x))$ is defined as follows:
\begin{eqnarray*}
f(\psi) := \left\{\begin{array}{ll}
    \alpha(x) & \text{$\psi = \alpha(\mu x.\varphi(x)) \in \mathsf{Sub}_{1}$,}\\
    \psi & \text{otherwise.}
  \end{array}\right.
\end{eqnarray*}
Take an extended tableau $\mathcal{T}_{\varphi(\mu x.\varphi(x))} = (T, C, r, L)$ arbitrarily.
Set $Z := \{ (t, t) \mid t \in T \}$. If
\begin{equation}
  \mathcal{T}_{\varphi(\mu x.\varphi(x))}\rightharpoonup(T, C, r, f\circ L) \label{eq: tableau consequence 1}
\end{equation}
with $Z$,
then we are done. However unfortunately $(\ref{eq: tableau consequence 1})$ is generally incorrect.
$Z$ generally does not satisfy the parity condition among the requests for tableu consequences.
Let's explain that with a concrete example.
Suppose that $\xi$ is an infinite branch of $\mathcal{T}_{\varphi(\mu x.\varphi(x))}$ where
there are only two traces, $\mathsf{tr}_{1}$ and $\mathsf{tr}_{2}$ on it.
Moreover, suppose that $\mathsf{tr}_{1}$ is a trace on $\mathsf{Sub}_{1}$
(i.e., $\mathsf{Infinite}(\mathsf{tr}_{1}) \subseteq \mathsf{Sub}_{1}$), and
 $\mathsf{tr}_{2}$ is a trace on $\mathsf{Sub}_{2}$
(i.e., $\mathsf{Infinite}(\mathsf{tr}_{2}) \subseteq \mathsf{Sub}_{2}$).
Note that
\begin{itemize}
  \item[$(\spadesuit)$] $\mathsf{tr}_{i}$ is even (i.e., $\mathsf{tr}_{i}$ is a $\mu$-trace) $\Leftrightarrow$
  $\vec{f}(\mathsf{tr}_{i})$ is even (i.e., $\vec{f}(\mathsf{tr}_{i})$ is a $\mu$-trace) $(i = 1, 2)$
\end{itemize}
holds from the definition of $f$.
Suppose that $\vec{f}(\mathsf{tr}_{1})$ and $\vec{f}(\mathsf{tr}_{2})$ repeat merging and branching as shown in
Figure \ref{fig: an example that does not sarisfy the parity condition}.
\begin{figure}[htbp]
  \centering
\[
  \xymatrix@R=20pt{
    \cdots &
    \mathsf{tr}_{1}(k) \ar[dd]^{f} &
    \mathsf{tr}_{1}(k+1) \ar[d]^{f} &
    \mathsf{tr}_{1}(k+2) \ar[dd]^{f} &
    \mathsf{tr}_{1}(k+3) \ar[d]^{f}  &
    \mathsf{tr}_{1}(k+4) \ar[dd]^{f} &
    \cdots
    \\
    \ar@{.>}[dr] &
    &
    \alpha_{1} \ar@{=>}[dr] &
    &
    \alpha_{3} \ar@{.>}[dr] &
    &
    \\
    &
    \alpha_{2} \ar@{=>}[ur] \ar@{.>}[dr] &
    &
    \alpha_{2} \ar@{.>}[ur] \ar@{=>}[dr] &
    &
    \alpha_{2} \ar@{=>}[ur] \ar@{.>}[dr] &
    \\
    \ar@{=>}[ur] &
    &
    \alpha_{3} \ar@{.>}[ur] &
    &
    \alpha_{1} \ar@{=>}[ur] &
    &
    \\
    \cdots &
    \mathsf{tr}_{2}(k) \ar[uu]^{f} &
    \mathsf{tr}_{2}(k+1) \ar[u]^{f} &
    \mathsf{tr}_{2}(k+2) \ar[uu]^{f} &
    \mathsf{tr}_{2}(k+3) \ar[u]^{f}  &
    \mathsf{tr}_{2}(k+4) \ar[uu]^{f} &
    \cdots
    }
\]
  \caption{An example that does not sarisfy the parity condition.}
  \label{fig: an example that does not sarisfy the parity condition}
\end{figure}
Suppose $\Omega_{\mu x.\varphi(x)}(\alpha_{i}) = i$ ($i = 1, 2, 3$).
Then, we have
\[
  \max \Omega_{\mu x.\varphi(x)}\big(\mathsf{Infinite}(\vec{f}(\mathsf{tr}_{1}))\big) = 
  \max \Omega_{\mu x.\varphi(x)}\big(\mathsf{Infinite}(\vec{f}(\mathsf{tr}_{2}))\big) = 3.
\]
Therefore, from $(\spadesuit)$, it turns out that $\mathsf{tr}_{1}$ and $\mathsf{tr}_{2}$ are odd.
Thus, $\vec{L}(\xi)$ is also odd.
On the other hand, set $\mathsf{tr}_{3} = \vec{f}(\mathsf{tr}_{1}[1, k-1])(\alpha_{2}\alpha_{1})^{\omega}$
(i.e., $\mathsf{tr}_{3}$ is the trace represented by $\Rightarrow$ in
Figure \ref{fig: an example that does not sarisfy the parity condition}).
Since $\max \Omega_{\mu x.\varphi(x)}\big(\mathsf{Infinite}(\vec{f}(\mathsf{tr}_{3}))\big) = 2$,
$\overrightarrow{f\circ L}(\xi)$ is even.
This means that $Z$ does not satisfy the parity condition.

It turns out that simply compositing $f$ and the label of $\mathcal{T}_{\varphi(\mu x.\varphi(x))}$ didn't work.
The problem is that $\mathsf{tr}_{1}$ and $\mathsf{tr}_{2}$ may exist such that
$\vec{f}(\mathsf{tr}_{1})$ and $\vec{f}(\mathsf{tr}_{2})$ repeat branching and merging infinitely often,
and these may break the parity condition guaranteed by $(\spadesuit)$.
Therefore, we overcome this obstacle by using the {\bf horizontal prunning} technique
shown in Safra's construction.
We will construct an extended tableau $\mathcal{T}_{\mu x.\varphi(x)}$ where
$\mathcal{T}_{\varphi(\mu x.\varphi(x))}
    \rightharpoonup\mathcal{T}_{\mu x.\varphi(x)}$ holds by $Z$ in the following 5 steps:
\begin{steps}
  \item We define the B\"{u}chi automaton $\mathcal{BA}_{\varphi(\mu x.\varphi(x))} := \langle Q, \Sigma, q_{0}, \Delta, F \rangle$
  as follows:
  \begin{itemize}
    \item $Q := \{ (\Gamma, \gamma) \mid \Gamma \subseteq \mathsf{Sub}(\varphi(\mu x.\varphi(x))), \; \gamma \in \Gamma \}$.
    \item $\Sigma := \mathcal{P}(\mathsf{Sub}(\varphi(\mu x.\varphi(x))))$.
    \item $q_{0} := \big(\{ \varphi(\mu x.\varphi(x)) \}, \varphi(\mu x.\varphi(x))\big)$.
    \item $\Delta(\Gamma', (\Gamma, \gamma)) := \big\{ (\Gamma', \gamma') \mid
    \gamma' \in \mathsf{TR}_{\Gamma, \Gamma'}(\gamma) \big\}$. 
    \item $F := \big\{ q_{0} \big\} \bigcup \big\{ (\Gamma, \mu x.\varphi(x))
    \mid (\Gamma, \mu x.\varphi(x)) \in Q \big\}$.
  \end{itemize}
  Note that we are not interested in $\mathcal{L}(\mathcal{BA}_{\varphi(\mu x.\varphi (x))})$.
  $\mathcal{BA}_{\varphi(\mu x.\varphi (x))}$ is constructed only for the use of {\bf horizontal prunning}
  in the Rabin automaton that will be constructed later.

  \item
  Convert nondeterministic B\"uchi automaton $\mathcal{BA}_{\varphi(\mu x.\varphi(x))}$
  to deterministic Rabin automaton $\mathcal{RA}_{\varphi(\mu x.\varphi(x))}$
  using Safra's construction.
  However, the following two points are changed from the construction described in
  Subsection \ref{subsec: safra's construction}:
  \begin{itemize}
    \item
    The automaton $\mathcal{BA}_{\varphi(\mu x.\varphi(x))}$ reads the alphabet
    $\{ \varphi(\mu x.\varphi(x)) \}$ in the initial state
    \[ q_{0} = \big(\{ \varphi(\mu x.\varphi(x)) \}, \varphi(\mu x.\varphi(x))\big) \]
    and transitions to the next state $q_{0}$ (as a result, the state does not change).
    Since $q_{0} \in F$ and $\pi_{1} = \big(N_{\varphi(\mu x.\varphi(x))}, \dots, 3, 2, 1\big)$,
    normally, by {\bf add new children}, we add $2$ as a new child.
    Now change the child to be added from $2$ to $N_{\varphi(\mu x.\varphi(x))}$.

    \item
    In the {\bf initialize index appearence record}, abolish driving $j$ painted in red to the left end.
    Instead, change it so that it is driven to the left end excluding $N_{\varphi(\mu x.\varphi(x))}$
    (see Figure \ref{fig: a change of initialize index appearence record}):
    \begin{figure}[htbp]
    \centering
      \[
        \xymatrix@R=50pt{
          N_{\varphi(\mu x.\varphi(x))} \ar[d] & \pi[2] \ar[drr] & \text{\mask{$\pi[3]$}{A}}_{\mathrm{red}} \ar[dl] 
          & \pi[4] \ar[rd] & \text{\mask{$\pi[5]$}{A}}_{\mathrm{red}} \ar[dll] & \pi[6] \ar[d] & \pi[7] \ar[d]\\
          N_{\varphi(\mu x.\varphi(x))}             & \pi[3]            & \pi[5]            & \pi[2]            
          & \pi[4]            & \pi[6]           & \pi[7]
        }
      \]
    \caption{A change of initialize index appearence record.}
    \label{fig: a change of initialize index appearence record}
    \end{figure}
  \end{itemize}

  \item
  Let the automaton defined above be
  $\mathcal{RA}_{\varphi(\mu x.\varphi(x))} = \langle Q', \Sigma, q_{0}, \Delta', \{ (A_{j}, R_{j}) \mid j \in J \}\rangle$.
  For a tableau node $t$, set $\Delta'(\vec{L}(\vec{rt}), q_{0}) := \langle S_{t}, C_{t}, 1, l_{t}, \pi_{t}, \mathsf{col}_{t}\rangle$.
  In this situation, Safra's tree $\langle S_{t}, C_{t}, 1, l_{t}, \pi_{t}, \mathsf{col}_{t}\rangle$
  looks like Figure \ref{fig: a state of automaton RA}.\footnote{Here, in the same way as Remark \ref{rem: duplication},
  instead of thinking that each vertex $j$ is labeled
  with a set of elements in the shape of $(\Gamma, \gamma)$, it is simply labeled with a set of formulas.}
    \begin{figure}[htbp]
    \centering
    \includegraphics[width=12cm]{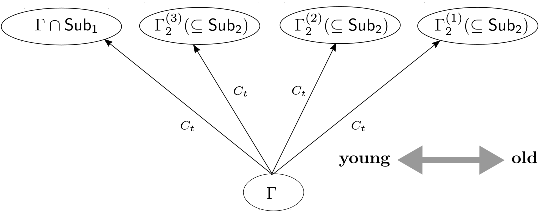}
    \caption{A state of automaton $\mathcal{RA}_{\varphi(\mu x.\varphi(x))}$.}
    \label{fig: a state of automaton RA}
    \end{figure}
  That is, the youngest child of the root $1$ is $N_{\varphi(\mu x.\varphi(x))}$, labeled with a subset of $\mathsf{Sub}_{1}$.
  The other children of the root $1$ are labeled with a subset of $\mathsf{Sub}_{2}$.

  \item
  For each node $t$, we will define a labeled tree $\langle J_{t}, C_{t}, 1, l'_{t}, \pi_{t}, \mathsf{col}_{t}\rangle$ inductively
  from the root to the leaf; where
  $l'_{t}: J_{t} \rightarrow \mathcal{P}(\mathsf{Sub}(\mu x.\varphi(x)))$.
  \begin{description}
    \item[The basis of induction:] $l'_{r}(r) := \{ \mu x.\varphi(x) \} = f(\{ \varphi(\mu x.\varphi(x)) \})$.
    \item[The step of induction:] Suppose $t, u \in T$ fills $u \in C(t)$ and $l'_{t}$ is already determined.
    Then, for each $j \in S_{u}$, set
  \begin{eqnarray*}
  l^{(1)}_{u}(j) := \left\{\begin{array}{ll}
    \mathsf{TR}_{f(l(t)), f(l(u))}(l'_{t}(j)) & \text{if $j \in J_{t}$,}\\
    l_{u}(j) (= \{ \mu x.\varphi(x) \}) & \text{otherwise.}
  \end{array}\right.
  \end{eqnarray*}
  Next, suppose $j_{1}$ and $j_{2}$ are siblings and $j_{1}$ is older.
  Then for every $\beta \in l^{(1)}_{u}(j_{1}) \cap l^{(1)}_{u}(j_{2})$,
  remove $\beta$ from the labels of $j_{2}$ and its descendants.
  That is, execute {\bf horizontal pruning}.
  In this way, the reduced label of $l^{(1)}_ {u}$ is $l'_{u}$.
  \end{description}

  \item
  The new label $L': T \rightarrow \mathcal{P}(\mathsf{Sub}(\mu x.\varphi(x)))$ is defined as follows:
  \[
    L'(t) := \bigcup_{j \in C_{t}(1)}l'_{t}(j)
  \]
  From the above, we have completed the definition of $\mathcal{T}_{\mu x.\varphi(x)} := (T, C, r, L')$.
\end{steps}
The extended tableau $\mathcal{T}_{\mu x.\varphi(x)}$ is what we want; that is,
$\mathcal{T}_{\varphi(\mu x.\varphi(x))}
    \rightharpoonup\mathcal{T}_{\mu x.\varphi(x)}$
holds by $Z$. To show this,
let's make sure that these satisfy the parity condition.
Take any even infinite branch $\xi$ of $\mathcal{T}_{\mu x.\varphi(x)}$.
Then, there exists an even trace $\mathsf{tr'}$ of $\mathcal{T}_{\mu x.\varphi(x)}$.
If $\mathsf{tr'}$ stays at vertex $N_{\varphi(\mu x.\varphi(x))}$ consecutively
(i.e., $\mathsf{tr'}[n] \in l'_{n}(N_{\varphi(\mu x.\varphi(x))})$ for every $n \geq 1$), then,
from $(\spadesuit)$, we can find even trace $\mathsf{tr}$ of $\mathcal{T}_{\varphi(\mu x.\varphi(x))}$
which stays at vertex $N_{\varphi(\mu x.\varphi(x))}$ consecutively.
Similarly, If $\mathsf{tr'}$ is a trace that stays at vertex $j \neq N_{\varphi(\mu x.\varphi(x))}$ consecutively
(i.e., $\{ n \in \omega \mid \mathsf{tr'}[n] \in l'_{n}(j)\}$ is an infinite set),
then, again from $(\spadesuit)$, we can find even trace $\mathsf{tr}$ of $\mathcal{T}_{\varphi(\mu x.\varphi(x))}$
which stays at vertex $j$ consecutively. Therefore, $Z$ certainly satisfies the parity condition.

\fbox{Part $4$}
First, we divide $\mathsf{Sub}(\varphi(\mathsf{anf}(\alpha)))$ into two disjoint sets:
\begin{align*}
    \mathsf{Sub}_{1} &:=
    \big\{ \beta(\mathsf{anf}(\alpha)) \mid \beta(x) \in \mathsf{Sub}(\varphi(x)) \big\} \setminus \{ \mathsf{anf}(\alpha) \}\\
    \mathsf{Sub}_{2} &:= \mathsf{Sub}(\varphi(\mathsf{anf}(\alpha))) \setminus \mathsf{Sub}_{1}
    (=\mathsf{Sub}(\mathsf{anf}(\alpha)))
\end{align*}
A function $g: \mathsf{Sub}(\varphi(\mathsf{anf}(\alpha))) \rightarrow
\mathcal{P}\big(\mathsf{Sub}(\varphi(\alpha))\big)$ is defined as follows:
\begin{eqnarray*}
g(\psi) := \left\{\begin{array}{ll}
    \{ \beta(\alpha) \} & \text{if $\psi = \beta(\mathsf{anf}(\alpha)) \in \mathsf{Sub}_{1}$,}\\
   f(\psi) & \text{otherwise.}
  \end{array}\right.
\end{eqnarray*}
Here, $f$ is the function mentioned in Remark \ref{rem: automaton normal form}.
Note that
\begin{itemize}
  \item[$(\clubsuit)$] $\mathsf{tr}$ is even ($\mu$-trace)
  $\Leftrightarrow$ $\vec{g}(\mathsf{tr})$ is even (i.e., include $\mu$-trace)
\end{itemize}
holds from the definition of $g$.
Take an extended tableau $\mathcal{T}_{\varphi(\mathsf{anf}(\alpha))} = (T, C, r, L)$ arbitrarily.
Set $Z := \{ (t, t) \mid t \in T \}$. If 
\begin{equation}
  \mathcal{T}_{\varphi(\mathsf{anf}(\alpha))}\rightharpoonup(T, C, r, g\circ L) \label{eq: tableau consequence 2}
\end{equation}
with $Z$, then we are done. However unfortunately $(\ref{eq: tableau consequence 2})$ is generally incorrect
for the same reasons as mentioned in Part $3$.
We will construct an extended tableau $\mathcal{T}_{\varphi(\alpha)}$ where
$\mathcal{T}_{\varphi(\mathsf{anf}(\alpha))} \rightharpoonup\mathcal{T}_{\varphi(\alpha)}$ holds by $Z$
in the following 5 steps:

\begin{steps}
  \item
 We define the B\"{u}chi automaton $\mathcal{BA}_{\varphi(\mathsf{anf}(\alpha))} := \langle Q, \Sigma, q_{0}, \Delta, F \rangle$
  as follows:
  \begin{itemize}
    \item $Q := \{ (\Gamma, \gamma) \mid \Gamma \subseteq \mathsf{Sub}(\varphi(\mathsf{anf}(\alpha))),
    \; \gamma \in \Gamma \}$.
    \item $\Sigma := \mathcal{P}(\mathsf{Sub}(\varphi(\mathsf{anf}(\alpha)))$.
    \item $q_{0} := \big(\{ \varphi(\mathsf{anf}(\alpha)) \}, \varphi(\mathsf{anf}(\alpha))\big)$.
    \item $\Delta(\Gamma', (\Gamma, \gamma)) := \big\{ (\Gamma', \gamma') \mid
    \gamma' \in \mathsf{TR}_{\Gamma, \Gamma'}(\gamma) \big\}$. 
    \item $F := \big\{ q_{0} \big\} \bigcup \big\{ (\Gamma, \mathsf{anf}(\alpha))
    \mid (\Gamma, \mathsf{anf}(\alpha)) \in Q \big\}$.
  \end{itemize}

  \item
  Convert nondeterministic B\"uchi automaton $\mathcal{BA}_{\varphi(\mathsf{anf}(\alpha))}$
  to deterministic Rabin automaton $\mathcal{RA}_{\varphi(\mathsf{anf}(\alpha))}$
  using Safra's construction.
  However, the following two points are changed from the conversion described in
  Subsection \ref{subsec: safra's construction}:
  \begin{itemize}
    \item
    In the {\bf add new children}, 
    change the child added in the first transition from $2$ to $N_{\varphi(\mathsf{anf}(\alpha))}$, similar to the
    method described in Part $3$.

    \item
    In the {\bf initialize index appearence record}, abolish driving $j$ painted in red to the left end.
    Instead, change it so that it is driven to the left end excluding $N_{\varphi(\mathsf{anf}(\alpha))}$,
    similar to the method described in Part 3.
  \end{itemize}

  \item
  Let the automaton defined above be
  $\mathcal{RA}_{\varphi(\mathsf{anf}(\alpha))} = \langle Q', \Sigma, q_{0}, \Delta', \{ (A_{j}, R_{j}) \mid j \in J \}\rangle$.
  Then, note that the youngest child of the root $1$ is $N_{\varphi(\mathsf{anf}(\alpha))}$,
  labeled with a subset of $\mathsf{Sub}_{1}$.
  The other children of the root $1$ are labeled with a subset of $\mathsf{Sub}_{2}$.

  \item
  For each node $t$, we will define a labeled tree $\langle J_{t}, C_{t}, 1, l'_{t}, \pi_{t}, \mathsf{col}_{t}\rangle$ inductively
  from the root to the leaf; where
  $l'_{t}: S_{t} \rightarrow \mathcal{P}\big(\mathsf{Sub}(\varphi(\alpha))\big)$.
  \begin{description}
    \item[The basis of induction:] $l'_{r}(r) := \{ \varphi(\alpha) \} = g(\{ \varphi(\mathsf{anf}(\alpha)) \})$.
    \item[The step of induction:] Suppose $t, u \in T$ fills $u \in C(t)$ and $l'_{t}$ is already determined.
    Then, for each $j \in S_{u}$, set
  \begin{eqnarray*}
  l^{(1)}_{u}(j) := \left\{\begin{array}{ll}
    \mathsf{TR}_{g(l(t)), g(l(u))}(l'_{t}(j)) & \text{If $j \in J_{t}$,}\\
    l_{u}(j) (= \{ \mathsf{anf}(\alpha) \}) & \text{Otherwise.}
  \end{array}\right.
  \end{eqnarray*}
  Next, suppose $j_{1}$ and $j_{2}$ are siblings and $j_{1}$ is older.
  Then for every $\beta \in l^{(1)}_{u}(j_{1}) \cap l^{(1)}_{u}(j_{2})$,
  remove $\beta$ from the labels of $j_{2}$ and its descendants.
  That is, execute {\bf horizontal pruning}.
  In this way, the reduced label of $l^{(1)}_ {u}$ is $l'_{u}$.
  \end{description}

  \item
  The new label $L': T \rightarrow \mathcal{P}(\mathsf{Sub}(\varphi(\alpha))$ is defined as follows:
  \[
    L'(t) := \bigcup_{j \in C_{t}(1)}l'_{t}(j)
  \]
  From the above, we have completed the definition of $\mathcal{T}_{\varphi(\alpha)} := (T, C, r, L')$.
\end{steps}

The extended tableau $\mathcal{T}_{\varphi(\alpha)}$ is what we want; that is,
$\mathcal{T}_{\varphi(\mathsf{anf}(\alpha))}
    \rightharpoonup\mathcal{T}_{\varphi(\alpha)}$
holds by $Z$. Indeed, from $(\clubsuit)$, we can show that $Z$ satisfies the parity condition, just as we did in Part $3$.
\end{proof}

\begin{Corollary}\label{cor: tableau}
Let $\widehat{\alpha}(x)$ be an automaton normal form in which
$x \in \mathsf{Free}(\widehat{\alpha}(x))$ occurs at once,
positively, moreover, $x$ is in the scope of some modal operators.
Set $\widehat{\varphi} := \mathsf{anf}(\mu x.\widehat{\alpha}(x))$.
Then there exist tableaux $\mathcal{T}_{\widehat{\alpha}(\widehat{\varphi})}$ and $\mathcal{T}_{\widehat{\varphi}}$
such that
$\mathcal{T}_{\widehat{\alpha}(\widehat{\varphi})}\rightharpoonup\mathcal{T}_{\widehat{\varphi}}$.
\end{Corollary}

\begin{proof}
This corollary is proved using three tableaux; Figure \ref{fig: the plan for the proof of the corollary}
depicts the plan of the proof.
\begin{figure}[htbp]
  \centering
  \includegraphics[width=15cm]{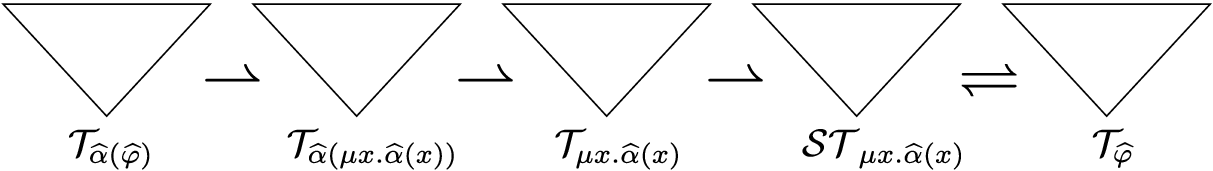}
  \caption{The plan for the proof of the corollary.}
  \label{fig: the plan for the proof of the corollary}
\end{figure}

Let $\mathcal{ST'}_{\mu x.\widehat{\alpha}(x)}$ be a small tableau
for $\varphi$ whose existence is guaranteed by Lemma \ref{lem: small tableau}.
Let $\widehat{\varphi} = \mathsf{anf}(\mu x.\widehat{\alpha}(x))$ be an automaton normal form
generated from $\mathcal{ST'}_{\mu x.\widehat{\alpha}(x)}$.
Let $\mathcal{TB}_{\mu x.\widehat{\alpha}(x)}$ be a tableau with back edge
generated from $\mathcal{ST'}_{\mu x.\widehat{\alpha}(x)}$ in the process of 
creating $\mathsf{anf}(\mu x.\widehat{\alpha}(x))$.
Set $\mathcal{ST}_{\mu x.\widehat{\alpha}(x)} = \mathsf{UNW}_{r}(\mathcal{TB}_{\mu x.\widehat{\alpha}(x)})$.
Note that $\mathcal{ST}_{\mu x.\widehat{\alpha}(x)}$ is also a small tableau.

First, we have
$\mathcal{T}_{\widehat{\alpha}(\widehat{\varphi})}\rightharpoonup
  \mathcal{T}_{\widehat{\alpha}(\mu x.\widehat{\alpha}(x))}$
for some extended tableau $\mathcal{T}_{\widehat{\alpha}(\mu x.\widehat{\alpha}(x))}$;
from Part $4$ of Lemma \ref{lem: basic properties of tableau consequence}.
Second, we have
$\mathcal{T}_{\widehat{\alpha}(\mu x.\widehat{\alpha}(x))}\rightharpoonup
  \mathcal{T}_{\mu x.\widehat{\alpha}(x)}$
for some extended tableau $\mathcal{T}_{\mu x.\widehat{\alpha}(x)}$;
from Part $3$ of Lemma \ref{lem: basic properties of tableau consequence}.
Third,
$\mathcal{T}_{\mu x.\widehat{\alpha}(x)}\rightharpoonup
  \mathcal{ST}_{\mu x.\widehat{\alpha}(x)}$
from Lemma \ref{lem: small tableau}.
Fourth,
since $\mathsf{anf}(\mu x.\widehat{\alpha}(x))$ is generated from
$\mathcal{ST}_{\mu x.\widehat{\alpha}(x)}$, 
$\mathcal{ST}_{\mu x.\widehat{\alpha}(x)}\rightleftharpoons
  \mathcal{T}_{\widehat{\varphi}}$.
Finally, by applying Part $1$ and $2$ of Lemma \ref{lem: basic properties of tableau consequence}
repeatedly,
we obtain
$\mathcal{T}_{\widehat{\alpha}(\widehat{\varphi})}\rightharpoonup\mathcal{T}_{\widehat{\varphi}}$.
\end{proof}

\subsection{Proof of completeness}\label{subsec: proof of completeness}

\begin{Definition}[\bf Aconjunctive formula]\normalfont
Let $\varphi$ be a well-named formula, and $\preceq_{\varphi}$ be its dependency order
(recall Definition \ref{def: alternation depth}). Then,
A variable $x \in \mathsf{Bound}(\varphi)$ is called \textit{aconjunctive} if, for any
$\alpha\wedge\beta \in \mathsf{Sub}(\varphi_{x}(x))$, $x$ is active in at most one of $\alpha$ or
$\beta$.
$\varphi$ is called \textit{aconjunctive} if every $x \in \mathsf{Bound}(\varphi)$ such that
$\eta_{x} = \mu$ is aconjunctive.
\end{Definition}

\begin{Corollary}\label{cor: completeness for anf}
Let $\widehat{\varphi}$ be an automaton normal form. Then, we have
\begin{enumerate}
  \item $\widehat{\varphi}$ is aconjunctive.
  \item If $\widehat{\varphi}$ is not satisfiable, then $\widehat{\varphi} \vdash$.
\end{enumerate}
\end{Corollary}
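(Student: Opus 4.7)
The plan is to deduce Part $1$ directly from Remark~\ref{rem: shape of anf}. Assume for contradiction that $\widehat{\varphi}$ is not aconjunctive; then some $\mu$-bound $x \in \mathsf{Bound}(\widehat{\varphi})$ and some conjunction $\alpha \wedge \beta \in \mathsf{Sub}(\varphi_{x}(x))$ satisfy that $x$ is active in both $\alpha$ and $\beta$, so by the definition of ``active'' there are witnesses $y_{\alpha} \in \mathsf{Sub}(\alpha) \cap \mathsf{Bound}(\widehat{\varphi})$ and $y_{\beta} \in \mathsf{Sub}(\beta) \cap \mathsf{Bound}(\widehat{\varphi})$ with $x \preceq_{\widehat{\varphi}} y_{\alpha}, y_{\beta}$. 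Build any tableau $\mathcal{T}_{\widehat{\varphi}}$: since $\alpha \wedge \beta$ is a subformula of $\widehat{\varphi}$ that must eventually be reduced by a $(\wedge)$-node, the child of such a node carries a label containing both $\alpha$ and $\beta$; each of these formulas contains a bound variable, contradicting Remark~\ref{rem: shape of anf}, which asserts that every label of $\mathcal{T}_{\widehat{\varphi}}$ contains at most one formula mentioning a bound variable.

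For Part $2$, suppose $\widehat{\varphi}$ is not satisfiable. Lemma~\ref{lem: refutation} furnishes a refutation $\mathcal{R}_{\widehat{\varphi}}$. Since a refutation is essentially a substructure of a tableau, Remark~\ref{rem: shape of anf} applies equally to every label of $\mathcal{R}_{\widehat{\varphi}}$: at each $(\wedge)$-reduction the two resulting conjuncts cannot both contain bound variables, and therefore no variable can be active in both sides. The thin condition of Definition~\ref{def: thin refutation} is thus satisfied vacuously, with no application of $(\mathsf{Weak})$ needed, so $\mathcal{R}_{\widehat{\varphi}}$ is thin. Applying Theorem~\ref{the: thin refutation} yields $\vdash \sim\!\widehat{\varphi}$.

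The substantive work behind the plan lies in Remark~\ref{rem: shape of anf} itself, and the main technical obstacle is to justify it by inspecting the clauses of Definition~\ref{def: automaton normal form}. The crucial observation is that every conjunction produced by the ANF construction---whether the top-level $\triangledown \Phi \wedge \bigwedge_{k} l_{k}$ of Clause $4$, the $\alpha \wedge \top_{i}$ of Clause $5$, or a binary subconjunction of a literal block $\bigwedge_{k} l_{k}$---has one side that is either an indexed top or a block of literals. No such literal can itself be a bound variable of $\widehat{\varphi}$, because if some $l_{k} = x \in \mathsf{Bound}(\widehat{\varphi})$ then the block $\bigwedge_{k} l_{k}$ would sit inside $\varphi_{x}(x)$ and, after binary parenthesisation of the iterated $\wedge$, exhibit the forbidden pattern $x \wedge \beta$ as a subformula, contradicting the side condition of Clause $3$. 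Consequently every $(\wedge)$-reduction separates a ``variable-bearing'' side from a ``variable-free'' side, which is precisely what Remark~\ref{rem: shape of anf} records.
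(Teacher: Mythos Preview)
Your proof is correct and follows the same route as the paper's: Part~1 from Remark~\ref{rem: shape of anf}, Part~2 by combining Lemma~\ref{lem: refutation}, the aconjunctive-implies-thin observation, and Theorem~\ref{the: thin refutation}. You add welcome detail the paper omits (a syntactic justification of Remark~\ref{rem: shape of anf} via the clauses of Definition~\ref{def: automaton normal form}); the tableau detour in your Part~1 is unnecessary given that your final paragraph already establishes the needed syntactic fact directly, but it is harmless.
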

\begin{proof}
The first assertion of the Corollary is obvious from the observation of Remark \ref{rem: shape of anf}.
For the second assertion, suppose that $\widehat{\varphi}$ is not satisfiable.
Note that, from the definition, a refutation for a aconjunctive formula is always thin.
Then, from Lemma \ref{the: refutation}, there exists a thin refutation for $\widehat{\varphi}$.
From Theorem \ref{the: thin refutation}, we obtain $\widehat{\varphi}\vdash$.
\end{proof}

In the next Lemma, we confirm that some compositions preserve aconjunctiveness.

\begin{Lemma}[\bf Composition]\label{lem: composition}
Let $\varphi$, $\psi$ and $\alpha(x)$ be aconjunctive formulas where
$x \in \mathsf{Prop}$ appears only positively in $\alpha(x)$. Then
$\varphi\wedge\psi$, $\alpha(\varphi)$ and
$\nu \vec{x}.\alpha(\vec{x})$ are also aconjunctive.
\end{Lemma}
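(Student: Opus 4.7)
The plan is to verify aconjunctiveness of each composition separately by inspecting the condition variable by variable, with the following guiding observation: aconjunctiveness is a property of each individual $\mu$-bound variable $y$, and it depends only on (i) the syntactic structure of the subformula $\varphi_y(y)$ binding $y$ and (ii) which other bound variables $z$ satisfy $y\preceq z$ in the ambient formula. In all three compositions, the binding subformula $\varphi_y(y)$ is preserved (possibly after the $x$-substitution), and the well-namedness assumption (cf. Remark \ref{rem: substitution}) prevents the dependency order from acquiring new cross-dependencies that could turn a previously inactive subformula active.

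For $\varphi\wedge\psi$: since $\mathsf{Bound}(\varphi)$ and $\mathsf{Bound}(\psi)$ are disjoint and each is disjoint from the others' free variables, for any $y \in \mathsf{Bound}(\varphi)$ the binding formula $(\varphi\wedge\psi)_y(y) = \varphi_y(y)$ is unchanged and $\preceq^-_{\varphi\wedge\psi}$ restricted to $\mathsf{Bound}(\varphi)$ coincides with $\preceq^-_{\varphi}$. Hence the $y$-activeness of each subformula of $\varphi_y(y)$, and a fortiori each conjunction in it, agrees with the corresponding notion inside $\varphi$; the case $y\in\mathsf{Bound}(\psi)$ is symmetric. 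For $\nu\vec{x}.\alpha(\vec{x})$: the newly bound variables $x_i$ carry $\sigma_{x_i}=\nu$ and are exempt from the aconjunctive requirement, so only the existing $\mu$-bound $y\in\mathsf{Bound}(\alpha)$ need checking. Its binding formula is the same as in $\alpha$ (up to the harmless renaming of free $x$ into $x_i$'s), and the only potentially new dependency $y\preceq x_i$ would force $y\in\mathsf{Free}(\alpha(\vec{x}))$, which contradicts $y\in\mathsf{Bound}(\alpha)$; hence the set of active bound variables seen from any subformula of $\alpha_y(y)$ is exactly the one inside $\alpha$, and the hypothesis transfers.

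The case $\alpha(\varphi)$ is the main obstacle and requires splitting on where the $\mu$-bound variable lives. If $y\in\mathsf{Bound}(\varphi)$, then $(\alpha(\varphi))_y(y)=\varphi_y(y)$ is untouched by the substitution, and by the same well-namedness argument as above, $\preceq$ restricted to $\mathsf{Bound}(\varphi)$ is unchanged, so aconjunctiveness of $\varphi$ suffices. If $y\in\mathsf{Bound}(\alpha)$, then $(\alpha(\varphi))_y(y)=\alpha_y(y)[x/\varphi]$, and the conjunctions occurring inside it split into two families: images $\gamma_1[x/\varphi]\wedge\gamma_2[x/\varphi]$ of conjunctions already present in $\alpha_y(y)$, and conjunctions lying wholly inside one of the substituted copies of $\varphi$. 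Since $y\in\mathsf{Bound}(\alpha)$ implies $y\notin\mathsf{Free}(\varphi)$, a routine computation shows that $y\preceq_{\alpha(\varphi)} w$ fails for every $w\in\mathsf{Bound}(\varphi)$; therefore in the first family activeness of $y$ is determined entirely by the bound variables of $\alpha$ and transfers from $\alpha$'s aconjunctiveness, while in the second family $y$ is not active in either conjunct, so the condition holds vacuously. Aside from carefully verifying this no-crossing property for $\preceq$, all steps are bookkeeping, so once that property is in hand the lemma drops out.
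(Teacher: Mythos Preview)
Your proposal is correct and takes essentially the same approach as the paper: both arguments hinge on the ``no-crossing'' property of the dependency order, namely that under the well-namedness conventions of Remark~\ref{rem: substitution} a chain $y\preceq_{\alpha(\varphi)} z$ cannot pass from $\mathsf{Bound}(\alpha(x))$ into any $\mathsf{Bound}(\varphi_i)$ (nor vice versa, nor between distinct $\varphi_i$, $\varphi_j$). The paper proves only the $\alpha(\varphi)$ case and leaves the other two as exercises; you treat all three, and for $y\in\mathsf{Bound}(\alpha)$ you spell out the two families of conjunctions in $\alpha_y(y)[x/\varphi]$ where the paper simply asserts ``$y$ is aconjunctive in $\alpha(\varphi)$ iff $y$ is aconjunctive in $\alpha(x)$'' --- your added detail is a faithful unpacking of that line rather than a different argument.
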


\begin{proof}
We leave the proofs of these statement as an exercise to the reader.
\end{proof}
Next, in preparation for proving claim (h), we extend the definition of the trace given in Definition \ref{def: trace}.
\begin{Definition}[\bf An extension of trace]\label{def: an extension of trace}\normalfont
Let $\mathcal{T}_{\varphi} = (T, C, r, L)$ be a tableau for some well-named formula $\varphi$.
Let $\xi$ be a finite or infinite branch of $\mathcal{T}_{\varphi}$ and let $\mathsf{tr}$ be a trace on $\xi$.
The set of all traces on $\xi$ is denoted by $\mathsf{TR}(\xi)$. $\mathsf{TR}(\xi[n, m])$ denotes the set
$\{ \mathsf{tr}[n, m] \mid \mathsf{tr} \in \mathsf{TR}(\xi) \}$
and may also be written
$\mathsf{TR}(\xi[n], \xi[m])$.
For any two factors $\mathsf{tr}[n, m]$ and $\mathsf{tr'}[n', m']$, we say
$\mathsf{tr}[n, m]$ and $\mathsf{tr'}[n', m']$ are \textit{equivalent}
(denoted $\mathsf{tr}[n, m] \equiv \mathsf{tr'}[n', m']$) if,
by ignoring invariant portions of the traces, they can be seen as the same sequence. For example, let;
\[
\xymatrix @C=5mm@R=1mm{
  *{\mathsf{tr}[n, n+3] =}&
  *{\langle (\alpha\wedge\beta)\vee\gamma,} &
  *{(\alpha\wedge\beta)\vee\gamma,} &
  *{\alpha\wedge\beta,} &
  *{\beta \rangle} &
  &\\
  *{\mathsf{tr'}[n', n'+4] =}&
  *{\langle (\alpha\wedge\beta)\vee\gamma,} &
  *{\alpha\wedge\beta,} &
  *{\alpha\wedge\beta,} &
  *{\alpha\wedge\beta,} &
  *{\beta \rangle} &
  }
\]
then $\mathsf{tr}[n, n+3]$ and $\mathsf{tr'}[n', n'+4]$ are equivalent to each other.
Let $X$ and $Y$ be the set of some factors of some traces. Then we write $X \Subset Y$
if for any $\mathsf{tr}[n, m] \in X$ there exists $\mathsf{tr'}[n', m'] \in Y$ such that
$\mathsf{tr}[n, m] \equiv \mathsf{tr'}[n', m']$; and write $X \equiv Y$ if
$X \Subset Y$ and $X \Supset Y$.

For technical reasons, we will need an \textit{extended trace} (denotation: $\mathsf{tr^{+}}$) for
each trace $\mathsf{tr}$ which is constructed by the following procedure $(\dag)$
(see also Figure \ref{fig: an extended trace});
\begin{figure}[htbp]
  \centering
  \includegraphics[width=8cm]{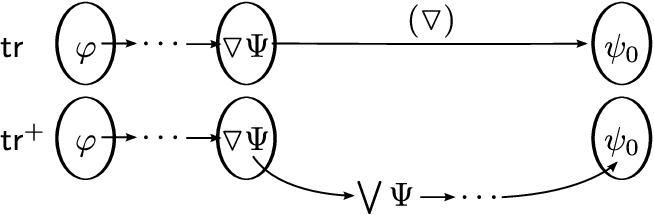}
  \caption{An extended trace.}
  \label{fig: an extended trace}
\end{figure}
\begin{quote}
$(\dag)$: Suppose $\Psi = \{\psi_{0}, \psi_{1}, \dots, \psi_{k}\}$ and that
$\xi[n]$ is a $(\triangledown)$-node in which
$\mathsf{tr}[n] = \triangledown \Psi$ is reduced into $\mathsf{tr}[n+1] = \psi_{0}$. Then, we insert
the sequence
\[
\langle \bigvee \Psi, \bigvee (\Psi\setminus\{\ \psi_{1}\}),
\bigvee (\Psi\setminus\{\ \psi_{1}, \psi_{2}\}),
\dots, \bigvee \{ \psi_{0}, \psi_{k-1}, \psi_{k} \},
\bigvee \{ \psi_{0}, \psi_{k} \} \rangle
\]
between $\mathsf{tr}[n]$ and $\mathsf{tr}[n+1]$.
\end{quote}
Note that $\mathsf{tr}$ is even if and only if $\mathsf{tr}^{+}$ is even because inserted formulas
are all $\vee$-formulas and, thus, the priorities of these formulas are equal to $0$
(recall Equation $(\ref{eq: priority of formulas})$). The set of extended
traces $\mathsf{TR^{+}}(\xi)$
and the set of factors of extended traces $\mathsf{TR^{+}}(\pi[n, m])$
or $\mathsf{TR^{+}}(\pi[n], \pi[m])$ are defined similarly.
\end{Definition}

The next lemma is the claim (h) mentioned in Section \ref{sec: introduction}.
The proof is long, but if you look closely, you can see that it is a natural proof.

\begin{Lemma}\label{lem: completeness for tableau consequence}
Let $\alpha$ be an aconjunctive formula, and $\widehat{\varphi}$ be an automaton normal form. A tableau
$\mathcal{T}_{\alpha} = (T_{\alpha}, C_{\alpha}, r_{\alpha}, L_{\alpha})$ for $\alpha$ and a tableau
$\mathcal{T}_{\widehat{\varphi}} =
  (T_{\widehat{\varphi}}, C_{\widehat{\varphi}}, r_{\widehat{\varphi}}, L_{\widehat{\varphi}})$
for $\widehat{\varphi}$ are given. If $\mathcal{T}_{\widehat{\varphi}}$ is a tableau consequence of
$\mathcal{T}_{\alpha}$, then we can construct a thin refutation $\mathcal{R}$ for
$\alpha\wedge\sim\!\widehat{\varphi}\; (\equiv \; \sim\!(\alpha \rightarrow \widehat{\varphi}))$.
\end{Lemma}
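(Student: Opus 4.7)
The plan is to construct $\mathcal{R}$ by simulating the tableaux $\mathcal{T}_{\alpha}$ and $\mathcal{T}_{\widehat{\varphi}}$ in parallel, with the tableau consequence relation $Z$ acting as a synchronization device. Each node of $\mathcal{R}$ will be tagged with a pair $(t, t') \in Z$, and its label will combine the $\alpha$-subformulas present in $L_{\alpha}(t)$ with negated counterparts of the formulas in $L_{\widehat{\varphi}}(t')$. The root of $\mathcal{R}$, labeled $\{\alpha \wedge \sim\!\widehat{\varphi}\}$, immediately expands via $(\wedge)$ to a node labeled $\{\alpha, \sim\!\widehat{\varphi}\}$ associated with the initial pair $(r_{\alpha}, r_{\widehat{\varphi}}) \in Z$.

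Between modal pairs, I would process tableau rules on each side independently. The $\alpha$-side mirrors the rules of $\mathcal{T}_{\alpha}$, and the $\sim\!\widehat{\varphi}$-side performs the dual expansions. Because $\widehat{\varphi}$ is an automaton normal form, Remark~\ref{rem: shape of anf} guarantees its tableau has a very simple shape---at most one bound-variable-containing subformula is active at any point---so the $\sim\!\widehat{\varphi}$-side expansion is essentially forced. The forth and back conditions of $Z$ on choice nodes ensure that the parallel traversal stays coherent: from a choice pair $(u, u') \in Z$, there is a modal pair $(t, t') \in Z$ with $t$ near $u$ in $\mathcal{T}_{\alpha}$ and $t'$ near $u'$ in $\mathcal{T}_{\widehat{\varphi}}$.

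At a modal pair $(t, t')$, if $L_{\alpha}(t)$ is inconsistent then the combined label is already inconsistent and the node is a valid refutation leaf; otherwise, the prop condition ensures $L_{\widehat{\varphi}}(t')$ is consistent as well. The modal step handles any $\sim\!\triangledown$-formulas on the $\sim\!\widehat{\varphi}$-side by first rewriting them as disjunctions of cover-modality expressions and processing the resulting $(\vee)$-rules; each branch so produced corresponds to a specific child $u' \in C'(t')$ in $\mathcal{T}_{\widehat{\varphi}}$. Along each such branch, I apply $(\triangledown_{r})$ on the $\alpha$-side to some $\triangledown \Psi_k$ with a chosen $\psi_k \in \Psi_k$, using the back condition of $Z$ on modal nodes to obtain a matching $u \in C(t)$ with $(u, u') \in Z$; the next refutation node is then associated with $(u, u')$.

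To confirm that $\mathcal{R}$ is a valid thin refutation, I would verify the parity and thinness conditions. Any infinite branch $\pi$ of $\mathcal{R}$ projects to associated infinite branches $\pi_{\alpha}$ and $\pi_{\widehat{\varphi}}$; the traces on $\pi$ split into $\alpha$-traces (in bijection with traces on $\pi_{\alpha}$, with identical parity) and the unique $\sim\!\widehat{\varphi}$-trace (in bijection with the unique $\widehat{\varphi}$-trace, with parity flipped by the $\mu/\nu$ duality). Consequently $\pi$ is odd (i.e.\ some trace on it is odd) if and only if either some $\alpha$-trace on $\pi_{\alpha}$ is odd or the $\widehat{\varphi}$-trace on $\pi_{\widehat{\varphi}}$ is even, which is precisely the parity condition of $Z$. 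For thinness, the $\alpha$-side is handled by the aconjunctive hypothesis on $\alpha$; the root conjunction is safe by renaming bound variables to make those of $\alpha$ and $\widehat{\varphi}$ disjoint; and the $\sim\!\widehat{\varphi}$-side's $\wedge$-rules come from disjunctions in $\widehat{\varphi}$, which by ANF clause~2 have the required separation of bound and free variables, preventing any $\mu$-variable from being active on both sides. The main obstacle I anticipate is the careful handling of the modal step---synchronizing the $(\triangledown_{r})$-choice on the $\alpha$-side with the branching on the $\sim\!\widehat{\varphi}$-side induced by expanding $\sim\!\triangledown$-formulas, and checking that the combined label is genuinely modal when the $(\triangledown_{r})$-rule is finally applied.
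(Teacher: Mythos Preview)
Your overall plan---simulate $\mathcal{T}_{\alpha}$ and $\mathcal{T}_{\widehat{\varphi}}$ in parallel, tagging each refutation node with a pair in $Z$---is exactly the paper's approach. But your account of the modal step contains a real gap. Expanding $\sim\!\bigl(\triangledown\Psi\wedge\bigwedge_{k}l_{k}\bigr)$ into cover-modality form yields
\[
\Bigl(\bigvee_{\psi\in\Psi}\bigl(\triangledown\{\sim\!\psi\}\vee\triangledown\emptyset\bigr)\Bigr)\;\vee\;\triangledown\Bigl\{\bigwedge\!\sim\!\Psi,\;\top\Bigr\}\;\vee\;\Bigl(\bigvee_{k}\sim\!l_{k}\Bigr),
\]
and it is not true that ``each branch so produced corresponds to a specific child $u'\in C'(t')$.'' Only the disjuncts $\triangledown\{\sim\!\psi\}$ correspond to children of $t'$ and are handled by the back condition on modal nodes, as you describe. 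The literal branches $\sim\!l_{k}$ close by the prop condition; the $\triangledown\emptyset$ branches close because $(\triangledown_{r})$ produces a $\bigvee\emptyset=\bot$. The crucial branch you omit is $\triangledown\{\bigwedge\!\sim\!\Psi,\top\}$: after applying $(\triangledown_{r})$ here you land at an arbitrary $\alpha$-successor carrying $\bigwedge\!\sim\!\Psi$, you then reduce this conjunction, and for every resulting leaf you must match a next modal node of $t$ in $\mathcal{T}_{\alpha}$ to some next modal node of $t'$ in $\mathcal{T}_{\widehat{\varphi}}$. That step needs the \emph{forth} condition on modal nodes, which you never invoke. (Incidentally, there is no back condition on choice nodes in the definition of tableau consequence, so your appeal to it is vacuous; only the forth condition on choice nodes is available there, and it suffices.)

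Your thinness argument on the $\sim\!\widehat{\varphi}$-side is also not the paper's, and as stated it fails. ANF clause~2 separates bound variables of one disjunct from free variables of the other, but a $\nu$-variable bound \emph{above} $\alpha\vee\beta$ in $\widehat{\varphi}$---hence a $\mu$-variable in $\sim\!\widehat{\varphi}$---can be active in both conjuncts $\sim\!\alpha$ and $\sim\!\beta$ (via the dependency order through the inner bound variables). The paper's remedy is direct: after each $(\wedge)$ coming from a $(\vee)$ in $\mathcal{T}_{\widehat{\varphi}}$, immediately apply $(\mathsf{Weak})$ to retain only the conjunct lying on the chosen path to the target modal node in $\mathcal{T}_{\widehat{\varphi}}$. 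This simultaneously guarantees thinness and ensures that the $\sim\!\widehat{\varphi}$-component of any infinite branch of $\mathcal{R}$ tracks a single branch of $\mathcal{T}_{\widehat{\varphi}}$, giving the unique $\sim\!\widehat{\varphi}$-trace your parity argument already presupposes.
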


\begin{proof}
Let $\mathcal{T}_{\alpha}$ and $\mathcal{T}_{\widehat{\varphi}}$ be the tableaux satisfying the
condition of the Lemma. Then, there exists a tableau consequence relation $Z$ from $\mathcal{T}_{\alpha}$
to $\mathcal{T}_{\widehat{\varphi}}$. Now, we will construct a thin refutation
$\mathcal{R} = (T, C, r, L)$ for $\alpha \wedge\!\sim\!\widehat{\varphi}$ inductively.
To facilitate the construction, we define two correspondence functions
$\mathsf{Cor}_{\alpha}: T \rightarrow T_{\alpha}$ and
$\mathsf{Cor}_{\widehat{\varphi}}:  T \rightarrow T_{\widehat{\varphi}}$.
These functions are partial and, in every considered node $t$ of $\mathcal{R}$, the following conditions
are satisfied:
\begin{align}
  L(t) = L_{\alpha}(\mathsf{Cor}_{\alpha}(t)) \cup
    \left\{ \sim \bigvee L_{\widehat{\varphi}}(\mathsf{Cor}_{\widehat{\varphi}}(t))\right\}
    \label{eq: thin refutation 01} \\
  (\mathsf{Cor}_{\alpha}(t), \mathsf{Cor}_{\widehat{\varphi}}(t)) \in Z \label{eq: thin refutation 02}
\end{align}
Of course, the root of $\mathcal{R}$ is labeled by $\{ \alpha \wedge \sim\!\widehat{\varphi} \}$ and
its child, say $t_{0}$, is labeled by $\{ \alpha, \sim\!\widehat{\varphi} \}$. For the base step,
set $\mathsf{Cor}_{\alpha}(t_{0}) := r_{\alpha}$ and
$\mathsf{Cor}_{\widehat{\varphi}}(t_{0}) := r_{\widehat{\varphi}}$. Then, the Condition
$(\ref{eq: thin refutation 01})$ and  $(\ref{eq: thin refutation 02})$ are indeed satisfied. The
remaining construction is divided into two cases; the second of which will be further divided into four
cases.
\begin{description}
\item[Inductive step I]
Suppose we have already constructed $\mathcal{R}$ up to a node $t$ where $\mathsf{Cor}_{\alpha}(t)$
and $\mathsf{Cor}_{\widehat{\varphi}}(t)$ are choice nodes of appropriate tableaux and satisfy Conditions
$(\ref{eq: thin refutation 01})$ and  $(\ref{eq: thin refutation 02})$. In this case, we prolong
$\mathcal{R}$ up to $u$ so that:
\begin{enumerate}
\item $\mathsf{Cor}_{\alpha}(u)$ is a modal node of $\mathcal{T}_{\alpha}$ near
  $\mathsf{Cor}_{\alpha}(t)$.
\item $\mathsf{Cor}_{\widehat{\varphi}}(u)$ is a modal node of
  $\mathcal{T}_{\widehat{\varphi}}$ near $\mathsf{Cor}_{\widehat{\varphi}}(t)$.
\item Conditions $(\ref{eq: thin refutation 01})$ and  $(\ref{eq: thin refutation 02})$ are satisfied in
  $u$.
\item $\mathsf{TR}[t, u] \equiv \mathsf{TR}[\mathsf{Cor}_{\alpha}(t), \mathsf{Cor}_{\alpha}(u)] \cup
  \left\{ \langle \sim\!\bigvee L_{\widehat{\varphi}}(t_{1}), \cdots,
  \sim\!\bigvee L_{\widehat{\varphi}}(t_{k}) \rangle \right\}$
  where $t_{1}\cdots t_{k} \in
      T^{+}_{\widehat{\varphi}}$ is the
    $C_{\widehat{\varphi}}$-sequence starting at $\mathsf{Cor}_{\widehat{\varphi}}(t)$ and
    ending at $\mathsf{Cor}_{\widehat{\varphi}}(u)$.
\end{enumerate}
The idea of the prolonging procedure is represented in Figure
\ref{fig: the prolonging procedure for inductive step i}.
\begin{figure}[htbp]
  \centering
  \includegraphics[width=12cm]{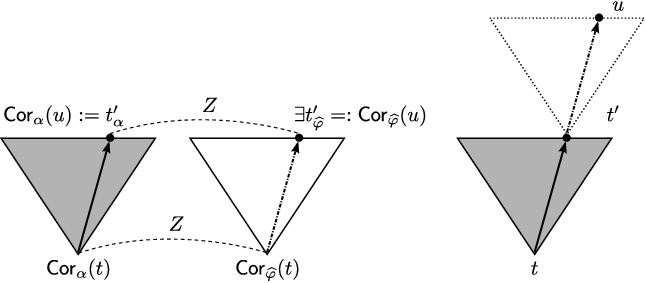}
  \caption{The prolonging procedure for Inductive step I.}
  \label{fig: the prolonging procedure for inductive step i}
\end{figure}
From $t$, we first apply the tableau rules to the formulas of
$\mathsf{Sub}(L_{\alpha}(\mathsf{Cor}_{\alpha}(t)))$ in the same order as they were applied from
$\mathsf{Cor}_{\alpha}(t)$ and its nearest modal nodes. Then, we obtain a finite tree rooted in $t$ which
is isomorphic to the section of $\mathcal{T}_{\alpha}$ between $\mathsf{Cor}_{\alpha}(t)$ and its nearest modal
nodes. Therefore, for each leaf $t'$ of this section of $\mathcal{R}$, we can take unique modal node
$t'_{\alpha}$ of $\mathcal{T}_{\alpha}$ that is isomorphic to $t'$. Note that
$L(t') = L_{\alpha}(t'_{\alpha}) \cup
  \{ \sim\!\bigvee L_{\widehat{\varphi}}(\mathsf{Cor}_{\widehat{\varphi}}(t)) \}$.
Now, the forth condition on the choice node of $Z$ is used. From $(\ref{eq: thin refutation 02})$, we can
find $t'_{\widehat{\varphi}} \in T_{\widehat{\varphi}}$ which is near
$\mathsf{Cor}_{\widehat{\varphi}}(t)$ and satisfies
$(t'_{\alpha}, t'_{\widehat{\varphi}}) \in Z$. Let us look at the path from
$\mathsf{Cor}_{\widehat{\varphi}}(t)$ to $t'_{\widehat{\varphi}}$ in $\mathcal{T}_{\widehat{\varphi}}$.
Since $\widehat{\varphi}$ is an automaton normal form on this path only the $(\vee)$-, $(\eta)$-
and $(\mathsf{Regeneration})$-rules,
and $(\wedge)$-rules reducing $\widehat{\psi}\wedge\top$ to $\{\widehat{\psi}, \top\}$
may be applied first. Then, we have zero or more applications of
the $(\wedge)$-rule.
Let us apply dual rules to $\sim\!\bigvee L_{\widehat{\varphi}}(\mathsf{Cor}_{\widehat{\varphi}}(t))$
(note that $(\mathsf{Regeneration})$ and $(\eta)$ are self-dual).

For an application of the $(\vee)$-rule
in $\mathcal{T}_{\widehat{\varphi}}$, we apply the $(\wedge)$-rule followed by the $(\mathsf{Weak})$-rule
to leave only the conjunct which appears on the path to $t'_{\widehat{\varphi}}$. In this way,
we ensure the resulting path of $\mathcal{R}$ will be thin.

For an application of the $(\wedge)$-rule
reducing $\widehat{\psi}\wedge\top$ to $\{\widehat{\psi}, \top\}$
in $\mathcal{T}_{\widehat{\varphi}}$, we apply
the $(\vee)$-rule in $\mathcal{R}$. Then, we have two children, say $v_{1}$ and $v_{2}$ such that
$L(v_{1})$ includes $\sim\!\widehat{\psi}$ and $L(v_{2})$ includes $\sim\!\top = \bot$.
Since $L(v_{2})$ is inconsistent, if we further prolong $\mathcal{R}$ from $v_{2}$ to its
nearest modal nodes, such modal nodes also labeled inconsistent set. This means that
the modal nodes can be leaves of a refutation. We therefore stop the prolonging procedure
on such modal nodes.

After these reductions, we get a node $u$ which is labeled by
$L_{\alpha}(t'_{\alpha}) \cup \{ \sim \bigvee L_{\widehat{\varphi}}(t'_{\widehat{\varphi}})\}$.
Setting $\mathsf{Cor}_{\alpha}(u) := t'_{\alpha}$ and
$\mathsf{Cor}_{\widehat{\varphi}}(u) := t'_{\widehat{\varphi}}$
establishes Conditions
$(\ref{eq: thin refutation 01})$ and  $(\ref{eq: thin refutation 02})$. Conditions $1$ through $4$
follow directly from the construction.

\item[Inductive step II]
Suppose we have already constructed $\mathcal{R}$ up to a node $t$ where
$\mathsf{Cor}_{\alpha}(t)$ and $\mathsf{Cor}_{\widehat{\varphi}}(t)$ are modal nodes of appropriate
tableaux and satisfy Conditions $(\ref{eq: thin refutation 01})$ and $(\ref{eq: thin refutation 02})$.
Note that, since $\widehat{\varphi}$ is an automaton normal form, we can put
$L_{\widehat{\varphi}}(\mathsf{Cor}_{\widehat{\varphi}}(t))
  = \{ \triangledown \Psi, l_{1}, \dots, l_{i} \}$
or
$L_{\widehat{\varphi}}(\mathsf{Cor}_{\widehat{\varphi}}(t))
  = \{ l_{1}, \dots, l_{i} \}$
where $l_{1}, \dots, l_{i} \in \mathsf{Lit}(\widehat{\varphi})$. Moreover, observe that
\begin{align*}
  \sim\!\left(\triangledown \Psi \wedge \bigwedge_{1 \leq k \leq i} l_{k}\right)&
    \equiv\: \sim\!\triangledown \Psi \vee \left(\bigvee_{1 \leq k \leq i}\sim\!l_{k}\right)\\
  &\equiv\: \sim\!\left( \left(\bigwedge \Diamond \Psi\right) \wedge \square
    \left(\bigvee \Psi\right) \right)
    \vee \left(\bigvee_{1 \leq k \leq i}\sim\!l_{k}\right)\\
  &\equiv\: \left(\bigvee_{\psi \in \Psi} \square \sim\!\psi\right) \vee
    \Diamond \left(\bigwedge \sim\!\Psi\right)
    \vee \left(\bigvee_{1 \leq k \leq i}\sim\!l_{k}\right)\\
  &\equiv\: \left(\bigvee_{\psi \in \Psi} (\triangledown \{\sim\!\psi\} \vee
    \triangledown \emptyset)\right) \vee
    \triangledown \left\{\left(\bigwedge \sim\!\Psi\right), \top\right\}
    \vee \left(\bigvee_{1 \leq k \leq i}\sim\!l_{k}\right).
\end{align*}
Therefore, if we prolong $\mathcal{R}$ from $t$ up to its nearest modal nodes $u$
by applying the $(\vee)$-rule repeatedly,
the label of $u$ can be categorized as one of following four cases:
\begin{description}
  \item[(Case 1):] $L(u) = L_{\alpha}(\mathsf{Cor}_{\alpha}(t))\cup
    \{ \sim\!l_{k} \}$ for some $k$ such that $1 \leq k \leq i$.
  \item[(Case 2):] $L(u) = L_{\alpha}(\mathsf{Cor}_{\alpha}(t))\cup\{ \triangledown \emptyset \}$.
  \item[(Case 3):] $L(u) = L_{\alpha}(\mathsf{Cor}_{\alpha}(t))\cup\{ \triangledown \{\sim\!\psi\}\}$
    for some $\psi \in \Psi$.
  \item[(Case 4):] $L(u) = L_{\alpha}(\mathsf{Cor}_{\alpha}(t))\cup
    \left\{ \triangledown \left\{\left(\bigwedge\!\sim\!\Psi\right), \top\right\} \right\}$.
\end{description}
In every cases, it is possible that $L_{\alpha}(\mathsf{Cor}_{\alpha}(t))$ is inconsistent and, thus,
$L(u)$ is also inconsistent. If this is so, all $u$ can be a leaf of a refutation.
Therefore, we stop the prolonging procedure on $u$ in this case. Now, we consider the case where
$L_{\alpha}(\mathsf{Cor}_{\alpha}(t))$ is consistent.

In Case $1$, the prop condition is used; by Condition $(\ref{eq: thin refutation 02})$, we have
$l_{k} \in L_{\alpha}(\mathsf{Cor}_{\alpha}(t))$. Thus, $L(u)$ includes $l_{k}$ and $\sim\!l_{k}$. This
means that $L(u)$ is inconsistent and so $u$ can be a leaf of a refutation. We therefore
stop the prolonging procedure on $u$ in this case.

In Case $2$, the back condition on modal nodes is used. Since
$C_{\widehat{\varphi}}(\mathsf{Cor}_{\widehat{\varphi}}(t)) \neq \emptyset$,
it must hold that $C_{\alpha}(\mathsf{Cor}_{\alpha}(t)) \neq \emptyset$.
Take $v_{\alpha} \in C_{\alpha}(\mathsf{Cor}_{\alpha}(t))$ arbitrarily. We prolong $\mathcal{R}$ from
$u$ to $v \in C(u)$ in such a way that
$L(v) = L_{\alpha}(v_{\alpha}) \cup \{ \bigvee \emptyset (\equiv \bot) \}$. Since $L(v)$ is
inconsistent, if we further prolong $\mathcal{R}$ from $v$ to its nearest modal nodes, such modal
nodes are also inconsistent. This means that the modal nodes can be a leaves of a
refutation. We therefore stop the prolonging procedure on such modal nodes in this case.

In Case $3$, the back condition on modal nodes is used. Let $v_{\widehat{\varphi}}$ be a child of
$\mathsf{Cor}_{\widehat{\varphi}}(t)$ such that
$L_{\widehat{\varphi}}(v_{\widehat{\varphi}}) = \{ \psi \}$. Then, by Condition
$(\ref{eq: thin refutation 02})$,
we can find $v_{\alpha} \in C_{\alpha}(\mathsf{Cor}_{\alpha}(t))$ such that
$(v_{\alpha}, v_{\widehat{\varphi}}) \in Z$. We create a new child $v$ of $u$ which is labeled
by $L_{\alpha}(\mathsf{Cor}_{\alpha}(v_{\alpha})) \cup \{ \sim\!\psi\}$. Moreover,
we set
$\mathsf{Cor}_{\alpha}(v) := v_{\alpha}$ and
$\mathsf{Cor}_{\widehat{\varphi}}(v) := v_{\widehat{\varphi}}$. This prolonging procedure
preserves Conditions $(\ref{eq: thin refutation 01})$ and $(\ref{eq: thin refutation 02})$.
Note that, in this case, $\mathsf{Cor}_{\alpha}(v)$ and $\mathsf{Cor}_{\widehat{\varphi}}(v)$ are choice
nodes of appropriate tableaux.

In Case $4$, the forth condition on modal nodes is used. The idea of the prolonging procedure is represented
in Figure \ref{fig: the prolonging procedure for case 4}.
\begin{figure}[htbp]
  \centering
  \includegraphics[width=14cm]{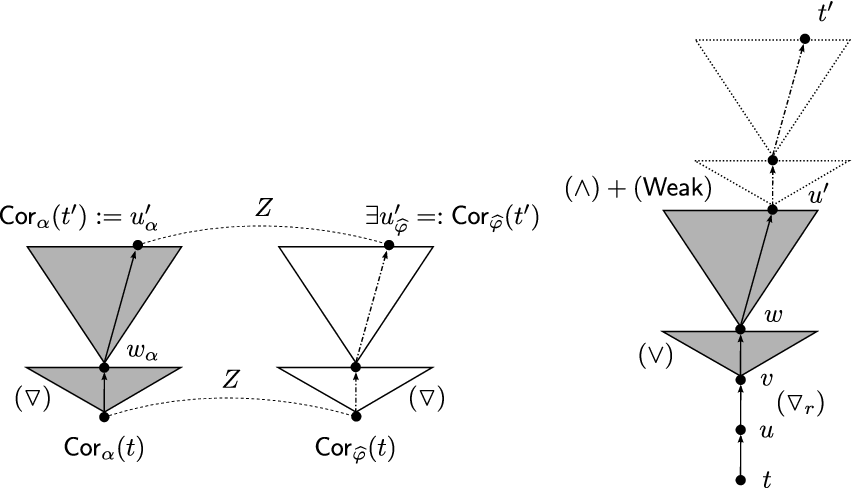}
  \caption{The prolonging procedure for Case $4$.}
  \label{fig: the prolonging procedure for case 4}
\end{figure}
Let
$L_{\alpha}(\mathsf{Cor}_{\alpha}(t)) =
  \{ \triangledown \Delta_{1}, \dots, \triangledown \Delta_{i}, l_{1}, \dots, l_{j}\}$.
In this case, we first create a new child $v$ of $u$ such that
\[
  L(v) = \left\{
    \bigvee \Delta_{1}, \dots, \bigvee \Delta_{i}
  \right\} \cup
  \left\{
    \bigwedge\!\sim\!\Psi
  \right\}.
\]
From the choice node $v$, we further prolong $\mathcal{R}$ up to its nearest modal nodes $t'$
so that
\begin{enumerate}
\setcounter{enumi}{4}
  \item $\mathsf{Cor}_{\alpha}(t')$ is a next modal node of $\mathsf{Cor}_{\alpha}(t)$.
  \item $\mathsf{Cor}_{\widehat{\varphi}}(t')$ is a next modal node of $\mathsf{Cor}_{\widehat{\varphi}}(t)$.
  \item Condition $(\ref{eq: thin refutation 01})$ and 
    $(\ref{eq: thin refutation 02})$ are satisfied in $t'$.
  \item $\mathsf{TR}[u, t'] \equiv \mathsf{TR}^{+}[\mathsf{Cor}_{\alpha}(t), \mathsf{Cor}_{\alpha}(t')] \cup
    \{ \langle
    \triangledown \left\{\left(\bigwedge\!\sim\!\Psi\right), \top\right\},
    \bigwedge\!\sim\!\Psi, \dots,
    \sim\!\psi =
    \sim\!\bigvee L_{\widehat{\varphi}}(t_{1}), \cdots, 
    \sim\!\bigvee L_{\widehat{\varphi}}(t_{k})
    \rangle \}$
    where $t_{1}\cdots t_{k} \in T_{\widehat{\varphi}}^{+}$ is the
    $C_{\widehat{\varphi}}$-sequence starting at the child of $\mathsf{Cor}_{\widehat{\varphi}}(t)$
    labeled by $\{ \psi \}$ and ending at $\mathsf{Cor}_{\widehat{\varphi}}(t')$.
\end{enumerate}
Next, we apply $(\vee)$-rules to $\bigvee \Delta_{1}$ repeatedly until we arrive at the node
$w$ such that
\begin{equation*}
  L(w) =
  \{ \delta_{1} \} \cup
  \left\{ \bigvee \Delta_{2}, \dots, \bigvee \Delta_{i} \right\} \cup
  \left\{ \bigwedge\!\sim\!\Psi \right\}
\end{equation*}
where $\delta_{1} \in \Delta_{1}$. Note that there exists
$w_{\alpha} \in C_{\alpha}(\mathsf{Cor}_{\alpha}(t))$ such that
\begin{equation*}
  L_{\alpha}(w_{\alpha}) =
  \{ \delta_{1} \} \cup
  \left\{ \bigvee \Delta_{2}, \dots, \bigvee \Delta_{i} \right\}
\end{equation*}
From $w$, we apply the tableau rules to formulas of
$\mathsf{Sub}(L_{\alpha}(w_{\alpha}))$ in the same order as they were applied from
$w_{\alpha}$ and its nearest modal nodes. Then, we obtain a finite tree rooted in $w$ which
is isomorphic to the section of $\mathcal{T}_{\alpha}$ between $w_{\alpha}$ and nearest modal
nodes. Therefore, for each leaf $u'$ of this section of $\mathcal{R}$, we can take a unique modal node
$u'_{\alpha}$ of $\mathcal{T}_{\alpha}$ which is isomorphic to $u'$. Note that
$L(u') = L_{\alpha}(u'_{\alpha}) \cup
  \left\{
    \bigwedge\!\sim\!\Psi
  \right\}$.
Since $u'_{\alpha}$ is a next modal node of $\mathsf{Cor}_{\alpha}(t)$,
from Condition $(\ref{eq: thin refutation 02})$ and the forth condition on modal nodes,
we can assume that there exists
$u'_{\widehat{\varphi}}$ which is a next modal node of $\mathsf{Cor}_{\widehat{\varphi}}(t)$
and satisfies $(u'_{\alpha}, u'_{\widehat{\varphi}}) \in Z$.
We will now look at the path from
$\mathsf{Cor}_{\widehat{\varphi}}(t)$ to $t'_{\widehat{\varphi}}$ in $\mathcal{T}_{\widehat{\varphi}}$
and exploit $(\wedge)$-rules and
$(\mathsf{Weak})$-rules so that
the trace $\mathsf{tr}$ on this path satisfies Condition $8$. Finally, we get a node $t'$
which is labeled by
$L_{\alpha}(u'_{\alpha})\cup \{ \sim \! \bigvee
  L_{\widehat{\varphi}}(u'_{\widehat{\varphi}}) \}$.
Setting $\mathsf{Cor}_{\alpha}(t') := u'_{\alpha}$ and
$\mathsf{Cor}_{\widehat{\varphi}}(t') := u'_{\widehat{\varphi}}$
establishes Conditions
$(\ref{eq: thin refutation 01})$ and $(\ref{eq: thin refutation 02})$. Then, Conditions $5$ through $8$
follow directly from the construction.
\end{description}

The above two procedures completely describe $\mathcal{R}$. All the leaves are labeled by
an inconsistent set.
Moreover, take an infinite branch $\xi$ of $\mathcal{R}$ arbitrarily.
Let $\xi_{\alpha}$ be the branch of $\mathcal{T}_{\alpha}$ such that
$\{ n \in \omega \mid \mathsf{Cor}_{\alpha}(\xi) = \xi_{\alpha}[n] \}$ is an infinite set.
Let $\xi_{\widehat{\varphi}}$ be the branch of $\mathcal{T}_{\widehat{\varphi}}$ such that
$\{ n \in \omega \mid \mathsf{Cor}_{\widehat{\varphi}}(\xi) = \xi_{\widehat{\varphi}}[n] \}$ is an infinite set.
For any trace $\mathsf{tr} \in \mathsf{TR}(\xi)$, we have
$\mathsf{tr}[1] = \alpha \wedge\!\sim\!\widehat{\varphi}$
and, $\mathsf{tr}[2] = \alpha$ or $\mathsf{tr}[2] = \sim\!\widehat{\varphi}$.
$\mathsf{TR}_{1}(\xi)$ denotes the set of all the trace $\mathsf{tr} \in \mathsf{TR}(\xi)$ such that
$\mathsf{tr}[2] = \alpha$. $\mathsf{tr}_{2} \in \mathsf{TR}(\xi)$ denotes the trace such that
$\mathsf{tr}_{2}[2] = \sim\!\widehat{\varphi}$. Then, from the construction of $\mathcal{R}$, we have;
\begin{description}
\item[(T1)]
  $\mathsf{TR}(\xi) = \mathsf{TR}_{1}(\xi) \cup \{ \mathsf{tr}_{2} \}$.
\item[(T2)]
  $\mathsf{TR}^{+}_{1}(\xi) \equiv \mathsf{TR}^{+}(\xi_{\alpha})$.
\item[(T3)]
  $\mathsf{tr}_{2}$ is even if and only if $\xi_{\widehat{\varphi}}$ is odd.
\item[(T4)]
  $\xi_{\alpha}$ and $\xi_{\widehat{\varphi}}$ are associated with each other.
\end{description}
Above conditions imply that $\xi$ is odd. Indeed, if $\xi_{\alpha}$ is odd, then,
from {\bf (T2)}, $\xi$ is also odd.
If $\xi_{\alpha}$ is even, then, from {\bf (T4)}, $\xi_{\widehat{\varphi}}$ is also even.
Therefore, from {\bf (T3)}, $\mathsf{tr}_{2}$ is odd. From {\bf (T1)}, we can assume that $\xi$ is odd.
$\mathcal{R}$ is also thin because $\alpha$
is aconjunctive and whenever we reduce a $\wedge$-formula originated from $\sim\!\widehat{\varphi}$,
we leave only one conjunction and discard the other by applying $(\mathsf{Weak})$-rule.
Therefore, $\mathcal{R}$ is a thin refutation as required.
\end{proof}

\begin{Lemma}[\bf Main lemma]\label{lem: completeness}
For any well-named formula $\varphi$, there exists a semantically equivalent automaton normal form $\widehat{\varphi}$ such that $\varphi \rightarrow \widehat{\varphi}$ is provable in $\mathsf{Koz}$.
Moreover, for any $x \in \mathsf{Free}(\varphi)$ which occurs only positively in $\varphi$,
it hold that $x \in \mathsf{Free}(\widehat{\varphi})$ and $x$ occurs only positively in
$\widehat{\varphi}$.
\end{Lemma}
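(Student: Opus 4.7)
The plan is to proceed by induction on the structure of the well-named formula $\varphi$, with the $\mu$-fixpoint case as the crux. The base cases, where $\varphi$ is a literal, are immediate: $\varphi$ itself is an automaton normal form and $\varphi \rightarrow \varphi$ is a tautology, and the positivity clause for free variables holds trivially.

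For the non-fixpoint inductive cases ($\varphi$ of the form $\alpha \vee \beta$, $\alpha \wedge \beta$, or $\triangledown \Phi$) and for $\varphi = \nu \vec{x}.\alpha(\vec{x})$, I would apply the IH to each immediate subformula to obtain an equivalent ANF with a provable implication, combine these ANFs into a formula $\chi$ of the same top-level shape as $\varphi$ (using Lemma~\ref{lem: basic properties of KOZ 01} to rename bound variables so that the disjointness conditions required by Definition~\ref{def: automaton normal form} hold), and set $\widehat{\varphi} := \mathsf{anf}(\chi)$. In each of these cases $\chi$ is aconjunctive by the Composition Lemma~\ref{lem: composition} together with the fact that ANFs are aconjunctive (Corollary~\ref{cor: completeness for anf}); crucially, the Composition Lemma's $\nu \vec{x}$ clause handles the $\nu$-case, and the absence of any $\mu \vec{x}$ clause there foreshadows the obstacle in the main case. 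Theorem~\ref{the: automaton normal form} provides a tableau $\mathcal{T}_\chi$ with $\mathcal{T}_\chi \rightleftharpoons \mathcal{T}_{\widehat{\varphi}}$, hence $\mathcal{T}_\chi \rightharpoonup \mathcal{T}_{\widehat{\varphi}}$ by Lemma~\ref{lem: basic properties of tableau consequence} Part~1; Lemma~\ref{lem: completeness for tableau consequence} then yields a thin refutation of $\sim(\chi \rightarrow \widehat{\varphi})$, and Theorem~\ref{the: thin refutation} converts this into a $\mathsf{Koz}$-proof of $\chi \rightarrow \widehat{\varphi}$. Composing with the implications from the IH finishes these cases.

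The fixpoint case $\varphi = \mu \vec{x}.\alpha(\vec{x})$ is delicate precisely because $\mu \vec{x}.\widehat{\alpha}(\vec{x})$ is not in general aconjunctive, so the direct strategy above fails and the argument must be routed through $(\mathsf{Ind})$. Take by IH an ANF $\widehat{\alpha}(\vec{x})$ equivalent to $\alpha(\vec{x})$ with $\alpha(\vec{x}) \rightarrow \widehat{\alpha}(\vec{x})$ provable and positivity of $\vec{x}$ preserved, so Lemma~\ref{lem: basic properties of KOZ 02} Part~1 gives provable $\mu \vec{x}.\alpha(\vec{x}) \rightarrow \mu \vec{x}.\widehat{\alpha}(\vec{x})$. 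Set $\widehat{\varphi} := \mathsf{anf}(\mu \vec{x}.\widehat{\alpha}(\vec{x}))$. Now the substituted formula $\widehat{\alpha}(\widehat{\varphi})$ is aconjunctive by the Composition Lemma (both $\widehat{\alpha}$ and $\widehat{\varphi}$ are ANFs and thus aconjunctive), and Corollary~\ref{cor: wide tableau} supplies the key tableau consequence $\mathcal{T}_{\widehat{\alpha}(\widehat{\varphi})} \rightharpoonup \mathcal{T}_{\widehat{\varphi}}$. Lemma~\ref{lem: completeness for tableau consequence} together with Theorem~\ref{the: thin refutation} then produces a $\mathsf{Koz}$-proof of $\widehat{\alpha}(\widehat{\varphi}) \rightarrow \widehat{\varphi}$, one application of $(\mathsf{Ind})$ upgrades this to $\mu \vec{x}.\widehat{\alpha}(\vec{x}) \rightarrow \widehat{\varphi}$, and chaining with the earlier provable implication concludes the case.

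The main obstacle is exactly this fixpoint case: the substitution target for $(\mathsf{Ind})$ must simultaneously be an ANF (so that Lemma~\ref{lem: completeness for tableau consequence} applies at all) and admit a tableau consequence from the unfolded version, which forces the choice $\widehat{\varphi} = \mathsf{anf}(\mu \vec{x}.\widehat{\alpha}(\vec{x}))$ and makes Corollary~\ref{cor: wide tableau} (i.e.\ Claim (g) of the introduction) indispensable. Secondary care is needed throughout the induction in tracking the preservation of positive free occurrences of $x$ under $\mathsf{anf}$ and in handling substitutions as in Remark~\ref{rem: substitution} so that no spurious binding is introduced and the well-named discipline is maintained.
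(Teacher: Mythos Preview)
Your proposal is correct and follows essentially the same approach as the paper's proof: the same structural induction, the same use of Corollary~\ref{cor: wide tableau}, Lemma~\ref{lem: completeness for tableau consequence}, and Theorem~\ref{the: thin refutation} in the $\mu$-case, and the same application of $(\mathsf{Ind})$. The only cosmetic difference is that the paper does not route the $\vee$ and $\triangledown$ cases through $\mathsf{anf}$ and the thin-refutation machinery at all---since $\widehat{\alpha}\vee\widehat{\beta}$ and $\triangledown\widehat{\Phi}$ are already automaton normal forms by clauses~2 and~4 of Definition~\ref{def: automaton normal form}, the paper takes $\widehat{\varphi}$ to be these formulas directly and is done; your uniform treatment works too but is slightly heavier than necessary (and note that Lemma~\ref{lem: composition} does not literally contain a $\vee$ clause, though aconjunctiveness is immediate there anyway via Corollary~\ref{cor: completeness for anf}).
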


\begin{proof}
We prove the lemma by the induction on the structure of $\varphi$.

\begin{description}

\item[Case: $\varphi \in \mathsf{Lit}$.]
In this case, $\widehat{\varphi}$ is just $\varphi$.

\item[Case: $\varphi = \alpha \vee \beta$.]
By the induction assumption, there exist automaton normal forms $\widehat{\alpha}$ and
$\widehat{\beta}$ which are equivalent to $\alpha$ and $\beta$, respectively, such that
$\vdash \alpha \rightarrow \widehat{\alpha}$ and $\vdash \beta \rightarrow \widehat{\beta}$.
Set $\widehat{\varphi} := \widehat{\alpha} \vee \widehat{\beta}$. Then, we have
$\vdash \alpha \vee \beta \rightarrow \widehat{\varphi}$.

\item[Case: $\varphi = \triangledown \Psi$.]
This case is very similar to the previous one.

\item[Case: $\varphi = \alpha \wedge \beta$.]
By the induction assumption, there exist automaton normal forms $\widehat{\alpha}$ and $\widehat{\beta}$
which are equivalent to $\alpha$ and $\beta$ respectively, such that
$\vdash \alpha \rightarrow \widehat{\alpha}$ and $\vdash \beta \rightarrow \widehat{\beta}$;
thus, we have $\vdash \alpha \wedge \beta \rightarrow \widehat{\alpha} \wedge \widehat{\beta}$.
Set $\widehat{\varphi} := \mathsf{anf}(\widehat{\alpha} \wedge \widehat{\beta})$. Then,
from Theorem \ref{the: automaton normal form}, we have
$\mathcal{T}_{\widehat{\alpha} \wedge \widehat{\beta}} \rightleftharpoons \mathcal{T}_{\widehat{\varphi}}$
for some $\mathcal{T}_{\widehat{\alpha} \wedge \widehat{\beta}}$
and, thus,
$\mathcal{T}_{\widehat{\alpha} \wedge \widehat{\beta}} \rightharpoonup \mathcal{T}_{\widehat{\varphi}}$. 
On the other hand, by Lemma \ref{lem: composition}, we can assume that
$\widehat{\alpha} \wedge \widehat{\beta}$ is
aconjunctive. From Lemma \ref{lem: completeness for tableau consequence}
and Theorem \ref{the: thin refutation}, we have
$\vdash \widehat{\alpha} \wedge \widehat{\beta} \rightarrow \widehat{\varphi}$. Therefore, we have
$\vdash \alpha \wedge \beta \rightarrow \widehat{\varphi}$.

\item[Case: $\varphi = \nu x.\alpha(x)$.]
By the induction assumption, we have an equivalent automaton normal form $\widehat{\alpha}(x)$
of $\alpha(x)$ such
that $\vdash \alpha(x) \rightarrow \widehat{\alpha}(x)$. Therefore,
$\vdash \nu x.\alpha(x) \rightarrow \nu x.\widehat{\alpha}(x)$.
Set
$\widehat{\varphi} := \mathsf{anf}(\nu x.\widehat{\alpha}(x))$.
Then, from Theorem
\ref{the: automaton normal form}, we have
$\mathcal{T}_{\nu x.\widehat{\alpha}(x)} \rightleftharpoons \mathcal{T}_{\widehat{\varphi}}$
for some $\mathcal{T}_{\nu x.\widehat{\alpha}(x)}$
and, thus,
$\mathcal{T}_{\nu x.\widehat{\alpha}(x)} \rightharpoonup \mathcal{T}_{\widehat{\varphi}}$.
On the other hand, by Lemma \ref{lem: composition}, we can assume that
$\nu x.\widehat{\alpha}(x)$ is
aconjunctive. From Lemma \ref{lem: completeness for tableau consequence}
and Theorem \ref{the: thin refutation}, we have
$\vdash \nu x.\widehat{\alpha}(x) \rightarrow \widehat{\varphi}$. Therefore,
$\vdash \nu x.\alpha(x) \rightarrow \widehat{\varphi}$.

\item[Case: $\varphi = \mu x.\alpha(x)$.]
By the induction assumption, we have an equivalent automaton normal form $\widehat{\alpha}(x)$
of $\alpha(x)$ such
that $\vdash \alpha(x) \rightarrow \widehat{\alpha}(x)$. Therefore,
$\vdash \mu x.\alpha(x) \rightarrow \mu x.\widehat{\alpha}(x)$.
Set
$\widehat{\varphi} := \mathsf{anf}(\mu x.\widehat{\alpha}(x))$.
Then, from Corollary
\ref{cor: tableau}, we have
$\mathcal{T}_{\widehat{\alpha}(\widehat{\varphi})} \rightharpoonup \mathcal{T}_{\widehat{\varphi}}$
for some $\mathcal{T}_{\widehat{\alpha}(\widehat{\varphi})}$.
On the other hand, by Lemma \ref{lem: composition}, we can assume that
$\widehat{\alpha}(\widehat{\varphi})$ is
aconjunctive. From Lemma \ref{lem: completeness for tableau consequence}
and Theorem \ref{the: thin refutation},
$\vdash \widehat{\alpha}(\widehat{\varphi}) \rightarrow \widehat{\varphi}$.
By applying the $(\mathsf{Ind})$-rule, we obtain 
$\vdash \mu x.\widehat{\alpha}(x) \rightarrow \widehat{\varphi}$.
Thus,
$\vdash \mu x.\alpha(x) \rightarrow \widehat{\varphi}$.

\end{description}
Hence, we have proved the Lemma for all cases.
\end{proof}

\begin{Theorem}[\bf Completeness]
For any formula $\varphi$, if $\varphi$ is not satisfiable, then $\sim\!\varphi$ is provable in
$\mathsf{Koz}$.
\end{Theorem}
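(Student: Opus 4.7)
The plan is to combine the Main Lemma (Lemma \ref{lem: completeness}) with Corollary \ref{cor: completeness for anf} in a short syllogism, after first normalising $\varphi$ into a well-named formula so that the Main Lemma applies.

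First, given an arbitrary formula $\varphi$, I would invoke part $5$ of Lemma \ref{lem: basic properties of KOZ 01} to obtain a well-named formula $\mathsf{wnf}(\varphi)$ with $\vdash \varphi \leftrightarrow \mathsf{wnf}(\varphi)$. Since provability of an equivalence in $\mathsf{Koz}$ implies semantic equivalence (the system is sound, which is immediate for the axioms and each rule), $\mathsf{wnf}(\varphi)$ is unsatisfiable whenever $\varphi$ is.

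Next, I apply the Main Lemma to $\mathsf{wnf}(\varphi)$ to obtain an automaton normal form $\widehat{\varphi}$ that is semantically equivalent to $\mathsf{wnf}(\varphi)$ and such that $\vdash \mathsf{wnf}(\varphi) \rightarrow \widehat{\varphi}$. By semantic equivalence, $\widehat{\varphi}$ is also unsatisfiable. Now Corollary \ref{cor: completeness for anf} yields $\vdash \sim\!\widehat{\varphi}$ directly, since $\mathsf{Koz}$ is known to be complete for negations of automaton normal forms.

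Finally, I would chain these together using the $(\mathsf{Cut})$ rule (or equivalently the derived contrapositive): from $\vdash \mathsf{wnf}(\varphi) \rightarrow \widehat{\varphi}$ and $\vdash \sim\!\widehat{\varphi}$ one obtains $\vdash \sim\!\mathsf{wnf}(\varphi)$, and then combining with $\vdash \varphi \leftrightarrow \mathsf{wnf}(\varphi)$ (Lemma \ref{lem: basic properties of KOZ 02} gives that substituting equivalents preserves provability, but here one only needs a single modus tollens step) gives $\vdash \sim\!\varphi$, as required. There is no real obstacle at this stage: all the hard work has already been carried out in proving the Main Lemma, whose inductive case $\varphi = \mu\vec{x}.\alpha(\vec{x})$ rested on Corollary \ref{cor: wide tableau} (Claim (g)), Lemma \ref{lem: completeness for tableau consequence} (Claim (h)), and the $(\mathsf{Ind})$ rule. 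The present theorem is just the clean assembly of those pieces.
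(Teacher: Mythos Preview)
Your proposal is correct and follows essentially the same route as the paper: normalise via Lemma \ref{lem: basic properties of KOZ 01}(5), apply the Main Lemma to obtain a semantically equivalent automaton normal form with a provable implication, invoke Corollary \ref{cor: completeness for anf} on the unsatisfiable automaton normal form, and chain the implications. The only cosmetic difference is that the paper writes the last step as $\vdash (\mathsf{wnf}(\varphi))\verb|^| \rightarrow \bot$ and concludes $\vdash \varphi \rightarrow \bot$, whereas you phrase it as $\vdash \sim\!\widehat{\varphi}$ and then $\vdash \sim\!\varphi$; these are the same thing.
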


\begin{proof}
Let $\varphi$ be an unsatisfiable formula. By Part $5$ of Lemma \ref{lem: basic properties of KOZ 01}, we
can construct a well-named formula $\mathsf{wnf}(\varphi)$ such that
\begin{equation}\label{eq: completeness 01}
  \vdash \varphi \leftrightarrow \mathsf{wnf}(\varphi)
\end{equation}
On the other hand, from Lemma \ref{lem: completeness}, there
exists an automaton normal form $(\mathsf{wnf}(\varphi))\verb|^|$ which is semantically equivalent to
$\mathsf{wnf}(\varphi)$ and thus to $\varphi$ such that
\begin{equation}\label{eq: completeness 02}
  \vdash \mathsf{wnf}(\varphi) \rightarrow (\mathsf{wnf}(\varphi))\verb|^|
\end{equation}
Since $(\mathsf{wnf}(\varphi))\verb|^|$ is not satisfiable, by Corollary \ref{cor: completeness for anf}
we have
\begin{equation}\label{eq: completeness 03}
  \vdash (\mathsf{wnf}(\varphi))\verb|^| \rightarrow \bot
\end{equation}
Finally by combining Equations $(\ref{eq: completeness 01})$ through $(\ref{eq: completeness 03})$ we
obtain $\vdash \varphi \rightarrow \bot$
as required.
\end{proof}

\subsection*{Acknowledgements}
The author gives great thanks to the anonymous referees for their valuable comments; in particular, they pointed out
some mathematical errors (especially in Lemma \ref{lem: basic properties of tableau consequence})
in a preliminary version of this paper.

\bibliographystyle{amsplain}
\bibliography{ref.bib}
\end{document}